\providecommand{\printnomenclature}{\printglossary}
\providecommand{\makenomenclature}{\makeglossary}
\newcommand\eqref[1]{(\ref{#1})}
\newcommand\tikzscaled[2][\textwidth]{\resizebox{\minof{\width}{#1}}{!}{\tikzpicture[node distance=2cm,on grid,auto] #2\endtikzpicture}}
\newcommand{\cov}{\mathrm{Cov}}
\newcommand{\var}{\mathrm{Var}}
\newcommand{\E}{\mathbb{E}}
\newcommand{\plim}{\mathop{\mathrm{plim}}}
\newtheorem{remark}[theorem]{\it Remark \scshape}
\newtheorem{assumption}[theorem]{Assumption}
\renewcommand\paragraph{\@startsection{paragraph}{4}{.25in}{\parskip}{-.5em plus -.1em}{\reset@font\normalsize\bfseries}}
\begin{document}

\title{Optimal estimation of free energies and stationary densities from
multiple biased simulations}

\author{Hao Wu\footnotemark[2] and Frank Noé\footnotemark[2]}

\maketitle
\renewcommand{\thefootnote}{\fnsymbol{footnote}}
\footnotetext[2]{Mathematics Institute, Department of Mathematics and Computer Science,         Free University of Berlin,         Arnimallee 6, 14195 Berlin, Germany         ({\tt hwu@zedat.fu-berlin.de}, {\tt frank.noe@fu-berlin.de}).} 
\begin{abstract}
When studying high-dimensional dynamical systems such as macromolecules,
quantum systems and polymers, a prime concern is the identification
of the most probable states and their stationary probabilities or
free energies. Often, these systems have metastable regions or phases,
prohibiting to estimate the stationary probabilities by direct simulation.
Efficient sampling methods such as umbrella sampling, metadynamics
and conformational flooding have developed that perform a number of
simulations where the system\textquoteright{}s potential is biased
such as to accelerate the rare barrier crossing events. A joint free
energy profile or stationary density can then be obtained from these
biased simulations with weighted histogram analysis method (WHAM).
This approach (a) requires a few essential order parameters to be
defined in which the histogram is set up, and (b) assumes that each
simulation is in global equilibrium. Both assumptions make the investigation
of high-dimensional systems with previously unknown energy landscape
difficult. Here, we introduce the transition matrix unweighting (TMU)
method, a simple and efficient estimation method which dismisses both
assumptions. The configuration space is discretized into sets, but
these sets are not only restricted to the selected slow coordinate
but can be clusters that form a partition of high-dimensional state
space. The assumption of global equilibrium is replaced by requiring
only local equilibrium within the discrete sets, and the stationary
density or free energy is extracted from the transitions between clusters.
We prove the asymptotic convergence and normality of TMU, give an
efficient approximate version of it and demonstrate its usefulness
on numerical examples.\end{abstract}
\begin{keywords}
Markov chain, maximum likelihood estimation, error analysis, free
energy, stationary density, simulation\end{keywords}
\begin{AMS}
60J20, 62M09, 65C20
\end{AMS}
\pagestyle{myheadings}
\thispagestyle{plain}
\markboth{OPTIMAL ESTIMATION OF MULTIPLE BIASED SIMULATIONS}{HAO WU AND FRANK NOÉ}

\section{Introduction}

Stochastic simulations of chemical, physical or biological processes
often involve rare events that render the exploration of relevant
states, or the calculation of expectation values by direct numerical
simulation difficult or impossible. Examples include phase transitions
in spin systems \cite{SherringtonKirkpatrick_PRL75_SpinGlasses,BinderYoung_RMP86_SpinGlasses},
transitions between different chemical states in quantum dynamics
simulations \cite{MarxHutter_NICseries00_AbInitioMD} or conformational
transitions in biomolecules \cite{Frauenfelder_Science254_1598}.
For this reason, many enhanced sampling techniques have been developed
to modify the dynamics of the original simulation system such that
the relevant rare events become more frequent and can be accessed
by direct numerical simulation of the modified simulation system.
Such an approach is of course only reasonable if there exists a way
to reliably compute at least some quantities of interest of the original
simulation system from the realizations of the modified simulation
system.

In this paper we focus on processes that are asymptotically stationary
and ergodic, and on enhanced sampling approaches that use bias potentials
(or equivalently conservative bias force fields) that attempt to modify
the original system's dynamics so as to avoid rare events. Well-known
examples of such approaches are Umbrella Sampling \cite{Torrie_JCompPhys23_187},
conformational flooding \cite{Grubmueller_PhysRevE52_2893}, Metadynamics
and variants \cite{LaioParrinello_PNAS99_12562,BarducciBussiParrinello_PRL08_WellTemperedMetadynamics}.
These approaches assume that the modeler has some knowledge of coordinates
or order parameters that are ``slow'', i.e. that the rare event
dynamics of the system is resolved by state transitions in these selected
coordinates. 

Umbrella sampling defines a series of $K$ biased simulations, each
of which uses the forces from the original dynamics plus the forces
due to one of $K$ harmonic potentials. These potentials restrain
the biased simulations to stay close to positions $\mathbf{x}_{k}$
in the selected coordinates which are the centers of the umbrella
potentials. The force constant(s) of these potentials must be chosen
such that the corresponding biased stationary densities overlap significantly
and the unification of all biased stationary densities covers the
part of state space in which the original stationary density is significantly
greater than zero. In this case, the $K$ biased simulations, together
with the knowledge of the umbrella potentials can be used in order
to estimate the original stationary density in the selected coordinates
(or the corresponding free energy landscape). 

Metadynamics is based on an opposite philosophy. Rather than by constraining
the simulation to a set of points, it adds bias potentials to drive
the simulation away from regions it has sampled sufficiently well.
In practice this is often done by adding Gaussian hat functions to
the bias potential every $n$ simulation steps, centered at the current
simulation state. We consider that this happens a number $K$ times,
leading to $K$ simulation snippets, each with a different biasing
potential. Due to limitations of filling high-dimensional space volumes,
these bias potentials live usually also in a few pre-defined coordinates.
Since the sequence of added bias potentials depends on the simulation
history, metadynamics is usually used to first ``fill up'' the free
energy wells until the states that cause the rare event waiting times
have been destabilized and the corresponding free energy landscape
is approximately ``flat''. It can be shown that at this point continuing
the metadynamics simulation will sample bias potentials that are the
negative free energy landscapes of the original system, up to an arbitrary
additive constant. Since metadynamics does not require the modeler
to know the relevant states along the slow coordinates, it is not
only an approach to quantify the stationary distribution / free energy
landscape of the original system but has been very successful in terms
of exploring the state space in complex systems \cite{PianaLaio_JPCB07_MetadynamicsProteinFolding}.
Unfortunately, this approach of using metadynamics also appears to
suggest that all simulation effort that has been spent until the free
energy surface is approximately flat cannot be used for quantitative
estimations.

Here we concentrate on the step of unbiased the modified dynamics
so as to obtain the stationary distribution of the original dynamical
system. For both, umbrella sampling and metadynamics, the step of
``unbiasing'' is usually done with the weighted histogram method
(WHAM) \cite{FerrenbergSwendsen_PRL89_WHAM}. WHAM uses a discretization
of the selected coordinates in which the biased simulation was done,
collects a set of $K$ histograms, one for each of the biased simulations.
These biased histograms are then combined to a single unbiased histogram
by solving the self-consistent set of equations: $\pi_{i}=\left(\sum_{k=1}^{K}n_{i}^{k}\right)/\left(\sum_{k=1}^{K}M_{k}c_{i}^{k}/z^{k}\right)$
and $z^{k}=\sum_{j}c_{j}^{k}\pi_{j}$, where $n_{i}^{k}$ is the number
of counts in histogram bin $i$ for simulation $k$, $M_{k}$ is the
total number of samples generated by the $k$-th simulation, $z^{k}$
is a normalization constant, $c_{i}^{k}$ is the unbiasing factor
of state $i$ at simulation $k$, and $\pi_{i}$ is the unbiased probability
of state $i$ at simulation condition $k$.

The assumption used by WHAM is that each of the $K$ simulations done
at different conditions is sufficiently long such that they generate
unbiased samples of the corresponding biased stationary density $\boldsymbol{\pi}_{k}$.
In order words, each sub-simulation is assumed to be in global equilibrium
at its conditions. We will see that restricting to this assumption
is unnecessary, and a method that does not rely on this assumption
can provide estimates that are substantially more precise (or, equivalently,
require substantially less simulation effort for a given level of
precision), even in trivial double-well examples. 

This paper develops the transition-matrix based unbiasing method (TMU)
which replaces the assumption that the $K$ biased simulations are
in global equilibria by the much weaker assumption that each simulation
is only in local equilibrium in the discrete states on which the stationary
distribution is estimated. TMU has been motivated by the recent progresses
in Markov modeling \cite{SwopePiteraSuits_JPCB108_6571,ChoderaEtAl_JCP07,NoeHorenkeSchutteSmith_JCP07_Metastability,prinz2011markov},
and constructs the joint unbiased stationary distribution from a series
of $K$ transition count matrices estimated from the $K$ biased simulations.
However, it is important to note that TMU does \emph{not} need the
discrete dynamics to be Markovian. 

Subsequently, we describe the basic mathematical assumptions underlying
our method, then describe TMU in its most general form and show that
the method has always a solution that is asymptotically normal and
convergent. We then provide an approximate TMU that is efficient for
very large state spaces and a large number $K$ of sub-simulations.
The method is demonstrated in conjunction with umbrella sampling and
metadynamics on double-well potentials and its performance is compared
with that of the standard WHAM method and a recently introduced method
MMMM that had a similar motivation \cite{sakuraba2009multiple}.

\settowidth{\nomlabelwidth}{$\mathrm{span}\left(v_1,\ldots,v_r\right)$}
\printnomenclature{}

\nomenclature{$\mathrm{null}\left(G\right)$}{null space of $G$: $\left\{ v\vert Gv=\mathbf{0}\right\}$}\nomenclature{$\mathrm{span}\left(v_1,\ldots,v_r\right)$}{vector space spanned by $v_1,\ldots,v_r$}\nomenclature{$\mathbb R$}{set of real numbers}\nomenclature{$\mathbb R^{m\times n}$}{set of real $m\times n$-matrices}\nomenclature{$\left\vert S\right\vert$}{cardinality of set $S$}\nomenclature{$x_{k:l}$}{subsequence $\{x_{k},x_{k+1},\ldots,x_{l}\}$ of a given discrete-time sequence $\{x_{0},x_{1},\ldots\}$}\nomenclature{$G\succ 0 (\succeq 0)$}{each element of matrix $G$ is positive (non-negative)}\nomenclature{$G<0 (\le 0)$}{matrix $G$ is negative-definite (negative-semidefinite)}\nomenclature{$G>0 (\ge 0)$}{matrix $G$ is positive-definite (positive-semidefinite)}\nomenclature{$\mathcal V\left(G\right)$}{vector $\left(G_{11},G_{12},\ldots,G_{mn}\right)^\mathrm{T}$ which consists of elements of $G=\left[G_{ij}\right]\in\mathbb{R}^{m\times n}$}\nomenclature{$\mathcal N\left(\mu,\Sigma\right)$}{multivariate normal distribution with mean $\mu$ and covariance matrix $\Sigma$}\nomenclature{$x_{n}\stackrel{p}{\to}x$}{$x_n$ converges in probability to $x$ with respect to $n$, which is also denoted as $\plim_{n\to\infty}x_{n}=x$}\nomenclature{$x_{n}\stackrel{d}{\to}x$}{$x_n$ converges in distribution to $x$ with respect to $n$}\nomenclature{$\E\left[x\right]$}{expected value of $x$}\nomenclature{$\cov\left(x,y\right)$}{$\left(x-\mathbb{E}\left[x\right]\right)\left(y-\mathbb{E}\left[y\right]\right)^{\mathrm{T}}$ for column vectors $x,y$}\nomenclature{$\var\left(x\right)$}{$\cov\left(x,x\right)$}\nomenclature{${\left\Vert G\right\Vert}_{\max}$}{$\max_{i,j}\vert G_{ij}\vert$ for $G=[G_{ij}]$}\nomenclature{${\left\Vert G\right\Vert}$}{$2$-norm of $G$}\nomenclature{$\mathrm{tr}\left(G\right)$}{trace of square matrix $G$}\nomenclature{$1_\omega$}{indicator function of event $\omega$, taking value $1$ if $\omega$ holds and $0$ otherwise}\nomenclature{$\mathbf 0,\mathbf 1$}{vectors of zeros and ones in appropriate dimensions}\nomenclature{$\mathbf I$}{identity matrix}\nomenclature{$\nabla_x y$}{Jacobian matrix $\left[\partial y_i/\partial x_j\right]$ of $y=[y_i]$ with respect to $x=[x_i]$}\nomenclature{$\nabla_{yx} z$}{$\nabla_{y}\left(\nabla_{x}z\right)^{\mathrm{T}}$ for $z\in\mathbb R$}\nomenclature{$G^+$}{Moore-Penrose pseudoinverse of $G$}\nomenclature{$\mathrm{dim}\left(x\right)$}{dimension of $x$}\nomenclature{$\mathrm{diag}\left(v\right)$}{diagonal matrix with the $i$-th diagonal element $v_i$ for $v=\left[v_i\right]$ being a vector}\nomenclature{$\bar x$, $\hat x$, $\tilde x$}{``true'' value, estimate obtained from simply ``counting'' and maximum likelihood estimate of an unknown variable $x$ (except if stated otherwise)}\nomenclature{$\mathrm{d_{TV}}\left(\mu,\nu\right)$}{total variation distance between probability distributions $\mu$ and $\nu$}\nomenclature{$I_S(s)$}{index of element $s$ in the set $S$}

\section{Background\label{sec:Background}}

In this section, we briefly review the mathematical background of
biased simulation techniques. Let us consider a reference system in
which the configuration space can be decomposed into a discrete state
set $\mathcal{S}$ with free energy $V=\left[V_{i}\right]$ by using
some sort of discretization (where $V_{i}$ is the energy of the $i$-th
state in $\mathcal{S}$). If we denote the system state at time $t$
by $x_{t}$, the state sequence $\left\{ x_{t}\right\} $ is then
a stochastic process. In this paper, we focus on processes $\left\{ x_{t}\right\} $
with the following properties, which are relevant for many physical
simulation processes:

\paragraph*{Asymptotic stationarity}

It means that the sequence $\left\{ x_{t}|t\ge\tau\right\} $ is approximately
stationary if $\tau$ is large enough. More formally, $\left\{ x_{t}\right\} $
is said to be asymptotically stationary if there exists a family of
distribution functions $F_{n}:\mathcal{S}^{n}\mapsto\mathbb{R}$ such
that $\lim_{\tau\to\infty}\Pr(x_{\tau+1}=s_{1},\ldots,x_{\tau+n}=s_{n})=F_{n}(s_{1},\ldots,s_{n})$
for all $n\ge1$ and $s_{1},\ldots,s_{n}\in\mathcal{S}$. Specifically,
$\lim_{\tau\to\infty}\Pr\left(I_{\mathcal{S}}\left(x_{\tau}\right)=i\right)=\pi_{i}$
with
\begin{equation}
\pi_{i}=\frac{\exp\left(-\beta V_{i}\right)}{\sum_{j}\exp\left(-\beta V_{j}\right)}\label{eq:free-energy-definition}
\end{equation}
where $\pi=\left[\pi_{i}\right]$ denotes the system stationary distribution
and $\beta$ is a constant and generally proportional to the inverse
temperature in physical systems, and we denote the limit $\lim_{\tau\to\infty}\Pr\left(I_{\mathcal{S}}\left(x_{\tau+1}\right)=j|I_{\mathcal{S}}\left(x_{\tau}\right)=i\right)$
by $T_{ij}$.

\paragraph*{Wide-sense ergodicity}

That is,
\begin{equation}
\plim_{t\to\infty}\frac{1}{t+1}\sum_{\tau=0}^{t}1_{I_{\mathcal{S}}\left(x_{\tau}\right)=i}=\pi_{i}\label{eq:ergodicity-pi}
\end{equation}
and
\begin{equation}
\plim_{t\to\infty}\frac{1}{t}\sum_{\tau=0}^{t-1}1_{I_{\mathcal{S}}\left(x_{\tau}\right)=i}\cdot1_{I_{\mathcal{S}}\left(x_{\tau+1}\right)=j}=\pi_{i}T_{ij}
\end{equation}

\paragraph*{Detailed balance}

The detailed balance condition can be written as
\begin{equation}
\lim_{t\to\infty}\Pr(I_{\mathcal{S}}(x_{t})=i,I_{\mathcal{S}}(x_{t+1})=j)=\lim_{t\to\infty}\Pr(I_{\mathcal{S}}(x_{t})=j,I_{\mathcal{S}}(x_{t+1})=i)\label{eq:detailed-balance}
\end{equation}
or equivalently $\pi_{i}T_{ij}=\pi_{j}T_{ji}$, for all $i,j$, which
implies that each state transition has the same probability as its
reverse. Note that this property clearly holds for systems that are
time-reversible at equilibrium.

\begin{remark}{\rm For convenience, we measure energies $V_{i}$
in units of thermal energy $k_{\mathrm{B}}T_{\mathrm{thermal}}$ with
$k_{\mathrm{B}}$ the Boltzmann constant and $T_{\mathrm{thermal}}$
the thermodynamic temperature, yielding $\beta=1$ in \eqref{eq:free-energy-definition}.
Furthermore, we assume without losing generality that all involved
free energies in this paper are zero-mean. Then \eqref{eq:free-energy-definition}
has a unique solution
\begin{equation}
V_{j}=-\log\pi_{j}+\frac{1}{\left|\mathcal{S}\right|}\sum_{i}\log\pi_{i}\label{eq:V-pi}
\end{equation}
for a given stationary distribution.}\end{remark}

Based on the asymptotic stationarity, we can define a matrix $T=\left[T_{ij}\right]\in\mathbb{R}^{\left|\mathcal{S}\right|\times\left|\mathcal{S}\right|}$
to represent the stationary state transition probabilities. It is
easy to see that $T$ satisfies the condition that each row sums to
$1$, so here we call $T$ the transition matrix of $\left\{ x_{t}\right\} $
for simplicity even if $\left\{ x_{t}\right\} $ is not a Markov chain.

For now, our goal is to estimate $V$, or equivalently $\pi$, from
simulations in the case that it is unknown. Due to the ergodicity
\eqref{eq:ergodicity-pi}, when sufficient simulation data can be
generated, we can simply carry out one or multiple simulations of
the reference system and get the estimate of $V$ through computing
the histogram of simulation data. This approach, however, is very
inefficient when the reference system has multiple metastable states,
because the simulation process is very likely to get stuck in some
local minima of the energy landscape for a long time. To this end,
biased simulation techniques, such as umbrella sampling \cite{souaille2001extension}
and metadynamics \cite{LaioParrinello_PNAS99_12562}, were developed
to solve this problem, which perform simulations with a set of biased
potentials so that the energy landscape can be explored more efficiently.

Although many practical algorithms use a different approach, we can
roughly summarize the estimation of stationary distributions through
biased simulations in terms of the following pseudocode:
\begin{description}
\item [{Step 1}] Design a set of biasing potentials $\left\{ U^{k}|k=1,\ldots,K\right\} $
where $U^{k}=\left[U_{i}^{k}\right]\in\mathbb{R}^{\left|\mathcal{S}\right|\times1}$.
\item [{Step 2}] Repeat Steps 2.1 and 2.2 for $k=1,\ldots,K$:

\begin{description}
\item [{Step 2.1}] Reduce the state set to $\mathcal{S}^{k}\subseteq\mathcal{S}$
and change the system potential as
\begin{equation}
V_{i}^{k}=V_{\mathrm{ID}^{k}\left(i\right)}+U_{\mathrm{ID}^{k}\left(i\right)}^{k}-\frac{1}{\left|\mathcal{S}^{k}\right|}\sum_{j\in\mathcal{J}^{k}}\left(V_{j}+U_{j}^{k}\right)\label{eq:biased-potential-original}
\end{equation}
where $V^{k}=\left[V_{i}^{k}\right]\in\mathbb{R}^{\left|\mathcal{S}^{k}\right|\times1}$
is called the biased potential, $\mathrm{ID}^{k}\left(i\right)=I_{\mathcal{S}}\left(\text{\ensuremath{i}-th element of \ensuremath{\mathcal{S}^{k}}}\right)$,
$\mathcal{J}^{k}=\left\{ I_{\mathcal{S}}\left(s\right)|s\in\mathcal{S}^{k}\right\} $,
and the last term is used to shift the mean of $V^{k}$ to zero.
\item [{Step 2.2}] Perform a biased simulation using the same simulation
model as the reference system except that the state set and potential
energy are changed from $\mathcal{S},V$ to $\mathcal{S}^{k},V^{k}$,
and record the simulation trajectory $x_{0:M_{k}}^{k}$. 
\end{description}
\item [{Step 3}] Estimate the reference (unbiased) free energy $V$ from
$\left\{ x_{0:M_{k}}^{k}|k=1,\ldots,K\right\} $.
\end{description}
In this paper, we will focus on the estimation problem in Step 3.
We start with the assumption that each simulation $x_{0:M_{k}}^{k}$
is a Markov chain, and the developed estimation method will then be
proved to be applicable to more general simulation models.

\begin{remark}{\rm \eqref{eq:biased-potential-original} can be written
in a more compact form by defining a potential transformation matrix
$A^{k}=\left[A_{ij}^{k}\right]\in\mathbb{R}^{\left|\mathcal{S}^{k}\right|\times\left|\mathcal{S}\right|}$
with $A_{ij}^{k}=1_{j\in\mathcal{J}^{k}}\cdot\left(1_{\mathrm{ID}^{k}\left(i\right)=j}-1/\left|\mathcal{S}^{k}\right|\right)$
such that $V^{k}=A^{k}\left(V+U^{k}\right)$.}\end{remark}

\section{Maximum likelihood estimation from multiple simulations}

\subsection{Maximum likelihood estimation\label{sub:Maximum-likelihood-estimation}}

In this section, we investigate a maximum likelihood approach to the
estimation problem described in Section \ref{sec:Background}. Suppose
that each simulation $x_{0:M_{k}}^{k}$ is a time-homogeneous Markov
chain. According to the third simulation property stated in Section
\ref{sec:Background}, it is easy to see that $x_{0:M_{k}}^{k}$ is
a reversible Markov chain. (It can further be proved that $x_{0:M_{k}}^{k}$
is irreducible and positive recurrent under the condition of finite
$V^{k}$ by using the first and second properties.) The maximum likelihood
estimation (MLE) of the unbiased stationary distribution $\pi$ can
then be obtained by solving the following optimization problem:

\begin{equation}
\begin{array}{lll}
\max\limits _{\pi,T^{1},\ldots,T^{K}} & L=\sum_{k}\log\Pr\left(x_{1:M_{k}}^{k}|x_{0}^{k},T^{k}\right)=\sum_{i,j,k}C_{ij}^{k}\log T_{ij}^{k}\\
s.t. & \pi\textrm{ is a probability vector}\\
 & T^{k}\textrm{ is a transition matrix}\\
 & \mathrm{diag}\left(\pi^{k}\right)T^{k}=T^{k\mathrm{T}}\mathrm{diag}\left(\pi^{k}\right), & k=1,\ldots K
\end{array}\label{eq:original_ml}
\end{equation}
where $T^{k}=\left[T_{ij}^{k}\right]$ and $\pi^{k}=\left[\pi_{i}^{k}\right]$
denote the transition matrix and stationary distribution of the $k$-th
simulation, $C^{k}=\left[C_{ij}^{k}\right]=\sum_{t=1}^{M_{k}}\Delta C_{t}^{k}$
is the count matrix of the $k$-th simulation and $\Delta C_{t}^{k}=\left[\Delta C_{t,ij}^{k}\right]=\left[1_{\left(I_{\mathcal{S}^{k}}\left(x_{t-1}^{k}\right),I_{\mathcal{S}^{k}}\left(x_{t}^{k}\right)\right)=\left(i,j\right)}\right]$.
(Here we set $0\log0=0$ and $a\log0=-\infty$ if $a>0$.) The last
constraint is obviously the detailed balance constraint for the biased
simulations, and can be equivalently written as $\mathrm{diag}\left(\Lambda^{k}\pi\right)T^{k}=T^{k\mathrm{T}}\mathrm{diag}\left(\Lambda^{k}\pi\right)$
with $\Lambda^{k}=\left[\Lambda_{ij}^{k}\right]=\left[1_{\mathrm{ID}^{k}\left(i\right)=j}\cdot\exp\left(-U_{j}^{k}\right)\right]\in\mathbb{R}^{\left|\mathcal{S}^{k}\right|\times\left|\mathcal{S}\right|}$.
After performing the MLE of $\pi$, the optimal estimate of $V$ can
also be obtained by using \eqref{eq:V-pi}.
\begin{theorem}
\label{thm:optimization-problem}The optimization problem \eqref{eq:original_ml}
has at least one optimal solution satisfying
\begin{enumerate}
\item $T_{ij}^{k}=0$ for $\left(i,j,k\right)\in\left\{ \left(i,j,k\right)|C_{ij}^{k}+C_{ji}^{k}=0\text{ and }i\neq j\right\} $.
\item $1_{T_{ij}^{k}>0}\equiv1_{C_{ij}^{k}>0}$, if $C_{ii}^{k}>0$ and
$\ensuremath{1_{C_{ij}^{k}>0}=1_{C_{ji}^{k}>0}}$ for all $i,j,k$.
\end{enumerate}
\end{theorem}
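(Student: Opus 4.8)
The plan is to argue in two stages. In the first I establish that \eqref{eq:original_ml} actually attains its maximum at a finite value; in the second I take an arbitrary optimal solution and perturb it into one with the claimed support structure without lowering the likelihood. For the first stage, note that the feasible set is nonempty (take $\pi$ uniform and each $T^{k}=\mathbf{I}$, which trivially satisfies detailed balance) and compact, being a closed subset of a product of probability simplices, while the objective $L=\sum_{i,j,k}C_{ij}^{k}\log T_{ij}^{k}$ is upper semicontinuous under the conventions $0\log0=0$ and $a\log0=-\infty$; hence a maximizer exists. It has finite value provided some feasible point has $L>-\infty$, and such a point is obtained by choosing $\pi$ of full support (so that each $\pi^{k}=\Lambda^{k}\pi$ is of full support on $\mathcal{S}^{k}$) and letting $T^{k}$ be, e.g., a lazy Metropolis--Hastings chain reversible with respect to $\pi^{k}$ whose proposal graph contains every pair $(i,j)$ with $C_{ij}^{k}+C_{ji}^{k}>0$: this forces $T_{ij}^{k}>0$ whenever $C_{ij}^{k}>0$. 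A by-product I will reuse is that \emph{every} optimal solution has $T_{ij}^{k}>0$ whenever $C_{ij}^{k}>0$, as otherwise $L=-\infty$.

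For the second stage, fix an optimal solution $(\pi^{\star},T^{1\star},\dots,T^{K\star})$ and, for each $k$, let $Z^{k}=\{(i,j)\mid C_{ij}^{k}+C_{ji}^{k}=0,\ i\neq j\}$, a symmetric index set. Define $\hat T^{k}$ by setting $\hat T_{ij}^{k}=0$ for $(i,j)\in Z^{k}$, keeping every other off-diagonal entry of $T^{k\star}$, and moving the deleted mass onto the diagonal via $\hat T_{ii}^{k}=T_{ii}^{k\star}+\sum_{j:(i,j)\in Z^{k}}T_{ij}^{k\star}$. Each $\hat T^{k}$ is again row-stochastic ($\hat T_{ii}^{k}\le\sum_{j}T_{ij}^{k\star}=1$ and row sums are unchanged) and still reversible with respect to the \emph{unchanged} $\pi^{k\star}=\Lambda^{k}\pi^{\star}$, since $\pi_{i}^{k\star}\hat T_{ij}^{k}=\pi_{j}^{k\star}\hat T_{ji}^{k}$ holds trivially on the diagonal, reduces to $0=0$ on $Z^{k}$ (using that $(j,i)\in Z^{k}$ whenever $(i,j)\in Z^{k}$), and is inherited from $T^{k\star}$ otherwise; hence $(\pi^{\star},\hat T^{1},\dots,\hat T^{K})$ is feasible. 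Moreover $L$ does not decrease: a term with $C_{ij}^{k}=0$ contributes $0$ in both solutions, and a term with $C_{ij}^{k}>0$ necessarily has $(i,j)\notin Z^{k}$, so it is left unchanged if $i\neq j$ and can only increase if $i=j$ (because $\hat T_{ii}^{k}\ge T_{ii}^{k\star}$). Thus $(\pi^{\star},\hat T^{1},\dots,\hat T^{K})$ is also optimal and satisfies property~1.

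Property~2 then follows at once. By the by-product noted in the first stage, $\hat T_{ij}^{k}>0$ whenever $C_{ij}^{k}>0$. Conversely, assume $C_{ii}^{k}>0$ and $1_{C_{ij}^{k}>0}=1_{C_{ji}^{k}>0}$ for all $i,j,k$: an off-diagonal pair with $C_{ij}^{k}=0$ then also has $C_{ji}^{k}=0$, so $(i,j)\in Z^{k}$ and $\hat T_{ij}^{k}=0$ by property~1, while every diagonal pair has $C_{ii}^{k}>0$ and needs no argument. Hence $1_{\hat T_{ij}^{k}>0}\equiv1_{C_{ij}^{k}>0}$. The only step I expect to need real care is the finiteness of the optimum --- exhibiting one feasible $(\pi,\{T^{k}\})$ whose transition matrices cover the full support of every $C^{k}$ while respecting the coupling $\pi^{k}=\Lambda^{k}\pi$ --- which the Metropolis--Hastings construction (or a small off-diagonal perturbation of the identity) resolves; everything else is the elementary observation that relocating unobserved off-diagonal transition mass to the diagonal keeps the matrix row-stochastic and reversible with respect to the same stationary vector and touches no log-term present in $L$.
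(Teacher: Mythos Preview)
Your proposal is correct and follows essentially the same three-part strategy as the paper's proof: establish existence of an optimizer with finite objective, then zero out unobserved off-diagonal entries by moving their mass to the diagonal (which preserves feasibility and cannot decrease $L$), and finally deduce the support identity from property~1 together with finiteness of the optimum. The only notable presentational differences are that you handle existence via upper semicontinuity of $L$ on a compact set, whereas the paper truncates $L$ below to make it continuous, and that you explicitly exhibit a feasible point with $L>-\infty$ (via a Metropolis--Hastings chain covering the observed transitions), which the paper simply assumes; both routes are fine.
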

\begin{proof}
See Appendix \ref{sec:Proof-of-Theorem-optimization-problem}.
\end{proof}
~

According to Theorem \ref{thm:optimization-problem}, we can reduce
the dimension of the optimization variable of \eqref{eq:original_ml},
and the reduced problem can be written as
\begin{equation}
\begin{array}{lll}
\max\limits _{\pi,\left\{ T_{ij}^{k}|C_{ij}^{k}+C_{ji}^{k}>0\text{ or }i=j\right\} } & L=\sum_{i,j,k}C_{ij}^{k}\log T_{ij}^{k}\\
s.t. & \pi\textrm{ \textrm{is a probability vector}}\\
 & T^{k}\textrm{ is a transition matrix}\\
 & \mathrm{diag}\left(\Lambda^{k}\pi\right)T^{k}=T^{k\mathrm{T}}\mathrm{diag}\left(\Lambda^{k}\pi\right), & k=1,\ldots K
\end{array}\label{eq:reduced_ml}
\end{equation}
with $T_{ij}^{k}=0$ for $\left(i,j,k\right)\in\left\{ \left(i,j,k\right)|C_{ij}^{k}+C_{ji}^{k}=0,i\neq j\right\} $.
It is clear that the problem size of \eqref{eq:reduced_ml} is much
smaller than that of \eqref{eq:original_ml} especially when count
matrices are sparse. However, even if each $C^{k}$ is sparse with
$O\left(\left|\mathcal{S}\right|\right)$ nonzero elements, the reduced
problem \eqref{eq:reduced_ml} involves $O\left(K\left|\mathcal{S}\right|\right)$
decision variables and nonlinear equality constraints. (Note that
$\pi$ and $T^{k}$ are both unknown in the last constraint.) It is
still inefficient to search the optimal solution of \eqref{eq:reduced_ml}
by direct methods. In Section \ref{sec:Approximate-maximum-likelihood},
we will adopt an approximate MLE method to improve the efficiency
based on the decomposition strategy.

\subsection{Convergence analysis\label{sub:Convergence-analysis}}

The MLE method of stationary distribution in Section \ref{sub:Maximum-likelihood-estimation}
is motivated by the assumption that $x_{0:M_{k}}^{k}$ is a Markov
chain. Interestingly, it turns out that the Markov property is not
necessary for the convergence of MLE. In this section we will prove
the convergence of MLE under more general conditions.

First of all, we provide an intuitive explanation for why the MLE
can work for non-Markovian stochastic processes. Generally speaking,
if there is no other knowledge available, $T^{k}$ can be estimated
as $\hat{T}^{k}=\left[\hat{T}_{ij}^{k}\right]$ with $\hat{T}_{ij}^{k}$
being the fraction of observed transitions from the $i$-th state
to the $j$-th state:
\begin{equation}
\hat{T}_{ij}^{k}=\frac{C_{ij}^{k}}{\sum_{l}C_{il}^{k}}\label{eq:count_estimate}
\end{equation}
But the transition matrix estimates obtained from \eqref{eq:count_estimate}
generally do not satisfy the detailed balance condition and do not
share the same unbiased stationary distribution for finite-time simulations.
Therefore we search for the feasible transition matrices which are
close to $\hat{T}^{1},\ldots,\hat{T}^{K}$: 
\begin{equation}
\begin{array}{lll}
\min\limits _{\pi,T^{1},\ldots,T^{K}} & \sum_{i,k}\hat{w}_{k}\hat{\pi}_{i}^{k}\mathrm{KL}\left(\hat{T}_{i}^{k}||T_{i}^{k}\right)\\
s.t. & \pi\textrm{ is a probability vector}\\
 & T^{k}\textrm{ is a transition matrix}\\
 & \mathrm{diag}\left(\Lambda^{k}\pi\right)T^{k}=T^{k\mathrm{T}}\mathrm{diag}\left(\Lambda^{k}\pi\right), & k=1,\ldots K
\end{array}\label{eq:T_modification}
\end{equation}
where $\hat{w}_{k}=M_{k}/M$ is the weight of the simulation $k$,
$M=\sum_{k}M_{k}$ denotes the total length of performed simulations,
$\hat{\pi}_{i}^{k}\propto\sum_{j}C_{ij}^{k}$ is an estimate of $\pi_{i}^{k}$,
and $\mathrm{KL}\left(\hat{T}_{i}^{k}||T_{i}^{k}\right)=\sum_{j}\hat{T}_{ij}^{k}\log\left(\hat{T}_{ij}^{k}/T_{ij}^{k}\right)$
denotes the KL divergence \cite{kullback1997information} between
the $i$-th rows of $\hat{T}^{k}$ and $T^{k}$. $\hat{w}_{k}\hat{\pi}_{i}^{k}$
gives the fraction of occurrence of state $i$ in simulation $k$.
Note that
\begin{eqnarray}
\sum_{i,k}\hat{w}_{k}\hat{\pi}_{i}^{k}\mathrm{KL}\left(\hat{T}_{i}^{k}||T_{i}^{k}\right) & = & \sum_{i,j,k}\hat{w}_{k}\hat{\pi}_{i}^{k}\hat{T}_{ij}^{k}\left(\log\hat{T}_{ij}^{k}-\log T_{ij}^{k}\right)\nonumber \\
 & = & \frac{1}{M}\left(\sum_{i,j,k}C_{ij}^{k}\left(\log\hat{T}_{ij}^{k}\right)-\sum_{i,j,k}C_{ij}^{k}\left(\log T_{ij}^{k}\right)\right)\label{eq:equivalence-ml-kl}
\end{eqnarray}
and thus \eqref{eq:T_modification} is equivalent to \eqref{eq:original_ml}.
Therefore the MLE can be considered to be a projector which projects
the counting estimates \eqref{eq:count_estimate} onto the feasible
space. Fig.~\ref{fig:projector} shows the relationship between different
estimates and the true values of $\left(T^{1},\ldots,T^{K}\right)$,
where the estimates obtained by counting converge to their true values
due to the wide-sense ergodicity of simulations.

\begin{figure}
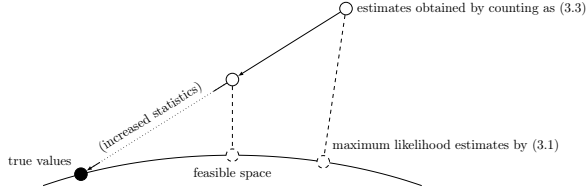

\begin{centering}
\tikzscaled[0.6\textwidth]{
\def\x{3}
	\draw[domain=-5:5,smooth,variable=\x] plot ({\x},{sqrt(1-\x*\x/100)*6-6});
	\node [circle, draw, fill=white, dashed, label=45:{maximum likelihood estimates by \eqref{eq:original_ml}}] (ML1) at (2.3993,-0.1753) {};
	\node [circle, draw, fill=white, dashed] (ML2) at (0,0) {};
	\node [circle, draw, fill=white, label=right:{estimates obtained by counting as \eqref{eq:count_estimate}}] (Count1) at (3,3.875675) {};
	\node [circle, draw, fill=white] (Count2) at (0,2) {};
	\node [circle, draw, fill, label=135:true values] (True) at (-4,-0.5009) {};
	\draw [dashed] (Count1) -- (ML1);
	\draw [dashed] (Count2) -- (ML2);
	\draw (0,-0.5) node {feasible space};
	\draw (Count2)--(-0.5,1.6873875);
	\draw [-latex](-3.5,-0.1882875)--(True);
	\draw [-latex](Count1) -- (Count2);
	\draw [dotted](Count2) -- (True) node[midway,sloped,above] {(increased statistics)};
}
\par\end{centering}

\caption{Relationship between the estimates obtained by counting, the maximum
likelihood estimates and the true values of transition matrices\label{fig:projector}}
\end{figure}

\begin{remark}{\rm In \cite{rached2004kullback}, a KL divergence
rate was derived to measure the distance between Markov chains. Suppose
$\left\{ x'_{t}\right\} $ and $\left\{ x''_{t}\right\} $ are two
Markov chains on the same state set with stationary distributions
$\pi',\pi''$ and transition matrices $T',T''$, then the KL divergence
rate between them can be defined as
\begin{eqnarray}
\mathrm{KLR}\left(\pi',T'||\pi'',T''\right) & \triangleq & \frac{1}{t+1}\lim_{t\to\infty}\mathrm{KL}\left(x'_{0:t}||x''_{0:t}\right)\nonumber \\
 & = & \sum_{i}\pi'_{i}\mathrm{KL}\left(T'_{i}||T''_{i}\right)
\end{eqnarray}
where $\pi'_{i}$ denotes the $i$-th element of $\pi'$ and $T'_{i},T''_{i}$
denote the $i$-th rows of $T',T''$. It is easy to see that the objective
function of \eqref{eq:T_modification} is equivalent to the weighted
sum of $\mathrm{KLR}\left(\hat{\pi}^{k},\hat{T}^{k}||\pi^{k},T^{k}\right)$
with weight $\hat{w}_{k}$.}\end{remark}

The equivalence of \eqref{eq:original_ml} and \eqref{eq:T_modification}
implies that the consistency of the MLE can hold without assuming
the Markov property if it can be shown that the error of the MLE converges
to zero when $\hat{T}^{1},\ldots,\hat{T}^{K}$ converge to their true
values. Before proving the main theorems, we make some assumptions
and introduce some notation. 

Unless stated otherwise, all convergence statements are made with
respect to $M\to\infty$ in this paper.

\begin{assumption}\label{ass:potential}$U^{1},\ldots,U^{k}$ and
$\bar{V}$ are finite.\end{assumption}

\begin{assumption}\label{ass:weights}The limit $\hat{w}\stackrel{p}{\to}\bar{w}$
exists and $\bar{w}\succ0$, where $\hat{w}=\left[\hat{w}_{i}\right]$
and $\bar{w}=\left[\bar{w}_{i}\right]$.\end{assumption}

\begin{assumption}\label{ass:simulations}$x_{0:M_{k}}^{1},\ldots,x_{0:M_{k}}^{K}$
are all asymptotically stationary and wide-sense ergodic processes
with detailed balance.\end{assumption}

\begin{assumption}\label{ass:X-property}For any $i,j,k$, if there
exists some $t$ such that $\Pr(I_{\mathcal{S}^{k}}\left(x_{t}^{k}\right)=i,I_{\mathcal{S}^{k}}\left(x_{t+1}^{k}\right)=j)>0$,
then $\lim_{n\to\infty}\Pr(I_{\mathcal{S}^{k}}\left(x_{n}^{k}\right)=i,I_{\mathcal{S}^{k}}\left(x_{n+1}^{k}\right)=j)>0$.\end{assumption}

\begin{assumption}\label{ass:pi-property} $\pi=\bar{\pi}$ is the
unique solution of the following set of equations and inequalities:
\begin{equation}
\left\{ \begin{array}{l}
\mathrm{diag}\left(\Lambda^{k}\pi\right)\bar{T}^{k}=\bar{T}^{k\mathrm{T}}\mathrm{diag}\left(\Lambda^{k}\pi\right),\quad\text{for }k=1,\ldots,K\\
\sum_{i}\pi_{i}=1\text{ and }\pi\succeq0
\end{array}\right.\label{eq:pi-unique}
\end{equation}
\end{assumption}

The above assumption means the unbiased stationary distribution can
be uniquely determined if all the transition matrices are given.

Furthermore, here we let $\theta$ be the vector consisting of elements
of the unbiased stationary distribution $\pi$ and transition matrices
$T^{1},\ldots,T^{K}$, $X^{k}=\left[X_{ij}^{k}\right]=\left[\pi_{i}^{k}T_{ij}^{k}\right]$
with $\bar{X}^{k}=\left[\bar{X}_{ij}^{k}\right]=\left[\bar{\pi}_{i}^{k}\bar{T}_{ij}^{k}\right]$
and $\hat{X}^{k}=\left[\hat{X}_{ij}^{k}\right]=\left[\hat{\pi}_{i}^{k}\hat{T}_{ij}^{k}\right]$,
$\Theta$ be the feasible set defined by constraints in \eqref{eq:original_ml},
$V_{Xw}=\left(w_{1}\mathcal{V}\left(X^{1}\right)^{\mathrm{T}},\ldots,w_{K}\mathcal{V}\left(X^{K}\right)^{\mathrm{T}}\right)^{\mathrm{T}}$
and
\begin{equation}
\hat{Q}\left(\theta\right)=\sum_{i,j,k}\hat{w}_{k}\hat{X}_{ij}^{k}\log T_{ij}^{k},\quad\bar{Q}\left(\theta\right)=\sum_{i,j,k}\bar{w}_{k}\bar{X}_{ij}^{k}\log T_{ij}^{k}
\end{equation}
Notice that functions $\hat{Q}\left(\theta\right)$ and $\bar{Q}\left(\theta\right)$
is linear in parameters $\hat{V}_{Xw}$ and $\bar{V}_{Xw}$, then
we can construct a function $\Phi\left(\theta\right)$ such that
\begin{equation}
\hat{Q}\left(\theta\right)=\hat{V}_{Xw}^{\mathrm{T}}\Phi\left(\theta\right),\quad\bar{Q}\left(\theta\right)=\bar{V}_{Xw}^{\mathrm{T}}\Phi\left(\theta\right)
\end{equation}

Based on the above assumptions and notations, the maximum likelihood
estimate can be expressed as $\tilde{\theta}=\arg\max_{\theta\in\Theta}\hat{Q}\left(\theta\right)$,
and we have the following theorems on the convergence of $\tilde{\theta}$.
\begin{theorem}
\label{thm:mle-convergence-p}If Assumptions \ref{ass:potential}-\ref{ass:pi-property}
hold, then $\tilde{\theta}\stackrel{p}{\to}\bar{\theta}$.\end{theorem}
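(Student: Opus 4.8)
The plan is to treat $\tilde\theta=\arg\max_{\theta\in\Theta}\hat Q(\theta)$ as an M-estimator and run the classical argmax--consistency argument: uniform convergence of the sample criterion $\hat Q$ to the population criterion $\bar Q$ on a suitable compact set, a well-separated maximum of $\bar Q$ at $\bar\theta$, and a guarantee that $\tilde\theta$ does not escape to the boundary of $\Theta$ where $\log T_{ij}^{k}=-\infty$. The boundary is the delicate point: the components of $\Phi(\theta)$ are the $\log T_{ij}^{k}$, which are unbounded on $\Theta$, so uniform convergence cannot hold on all of $\Theta$.

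First I would establish consistency of the ``counting'' quantities. Assumption~\ref{ass:weights} gives $\hat w\stackrel{p}{\to}\bar w\succ 0$, which also forces $M_{k}\to\infty$ for each $k$; then wide-sense ergodicity (Assumption~\ref{ass:simulations}) yields $\hat X_{ij}^{k}=C_{ij}^{k}/\sum_{i',j'}C_{i'j'}^{k}\stackrel{p}{\to}\bar\pi_{i}^{k}\bar T_{ij}^{k}=\bar X_{ij}^{k}$, and hence $\hat V_{Xw}\stackrel{p}{\to}\bar V_{Xw}$. Because $\hat Q(\theta)=\hat V_{Xw}^{\mathrm{T}}\Phi(\theta)$ and $\bar Q(\theta)=\bar V_{Xw}^{\mathrm{T}}\Phi(\theta)$ share the \emph{same} deterministic map $\Phi$, all randomness in the criterion is carried by $\hat V_{Xw}$.

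Next I would pin down the support. By Assumption~\ref{ass:X-property}, a transition $(i,j)$ can be observed with positive probability in simulation $k$ only when the stationary coincidence probability $\bar X_{ij}^{k}$ is positive, so (using positivity of the biased stationary distributions) the observed count support $\hat{\mathcal{A}}:=\{(i,j,k):\hat X_{ij}^{k}>0\}$ satisfies $\hat{\mathcal{A}}\subseteq\bar{\mathcal{A}}:=\{(i,j,k):\bar X_{ij}^{k}>0\}$ almost surely, with $\hat{\mathcal{A}}=\bar{\mathcal{A}}$ with probability tending to one. Consequently, with probability tending to one the off-diagonal entries set to zero in the reduced problem \eqref{eq:reduced_ml} are exactly those with $\bar T_{ij}^{k}=0$, so $\bar\theta$ is feasible for \eqref{eq:reduced_ml} and, by Theorem~\ref{thm:optimization-problem}, we may confine $\tilde\theta$ to that reduced feasible set; there $\hat Q$ and $\bar Q$ are finite. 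On that set $\bar Q$ is maximized \emph{uniquely} at $\bar\theta$: a rowwise Gibbs' inequality gives $\sum_{j}\bar T_{ij}^{k}\log T_{ij}^{k}\le\sum_{j}\bar T_{ij}^{k}\log\bar T_{ij}^{k}$ with equality iff $T_{i\cdot}^{k}$ coincides with $\bar T_{i\cdot}^{k}$ on its support, and summing against the strictly positive weights $\bar w_{k}\bar\pi_{i}^{k}$ forces $T^{k}=\bar T^{k}$ for every $k$, whence Assumption~\ref{ass:pi-property} forces $\pi=\bar\pi$.

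Finally I would rule out escape to the boundary and conclude. Comparing $\tilde\theta$ with the fixed feasible point $\bar\theta$ gives $\hat Q(\tilde\theta)\ge\hat Q(\bar\theta)=\sum_{(i,j,k)\in\hat{\mathcal{A}}}\hat w_{k}\hat X_{ij}^{k}\log\bar T_{ij}^{k}$, which (since $\hat{\mathcal{A}}\subseteq\bar{\mathcal{A}}$) is bounded below by a deterministic constant for all large $M$; since every active weight $\hat w_{k}\hat X_{ij}^{k}$ stays bounded away from $0$ while all inactive terms of $\hat Q$ are nonpositive, this lower bound forces every active coordinate of $\tilde T^{k}$ to remain above a deterministic $\delta>0$ (with probability tending to one). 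Restricting attention to the compact set $\Theta_{\delta}\subseteq\Theta$ on which the coordinates indexed by $\bar{\mathcal{A}}$ are $\ge\delta$ --- which contains $\bar\theta$ for $\delta$ small --- the map $\Phi$ is continuous and bounded there, so $\sup_{\theta\in\Theta_{\delta}}|\hat Q(\theta)-\bar Q(\theta)|\le\|\hat V_{Xw}-\bar V_{Xw}\|\cdot\sup_{\theta\in\Theta_{\delta}}\|\Phi(\theta)\|\stackrel{p}{\to}0$. Uniform convergence on the compact $\Theta_{\delta}$ together with the well-separated maximum of $\bar Q$ at $\bar\theta$ then yields $\tilde\theta\stackrel{p}{\to}\bar\theta$ by the standard argmax theorem. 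I expect this last step --- producing the uniform lower bound that keeps $\tilde\theta$ away from the $\log 0$ singularity, which is precisely what lets one replace $\Theta$ by a compact set and upgrade pointwise to uniform convergence --- to be the main obstacle; the ergodic limits of the first step and the identification argument are essentially routine.
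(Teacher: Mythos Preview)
Your proposal is correct and follows essentially the same M-estimator consistency argument as the paper: both establish that $\bar Q$ is uniquely maximized at $\bar\theta$ via Gibbs' inequality plus Assumption~\ref{ass:pi-property}, both use Assumption~\ref{ass:X-property} to ensure $\hat X_{ij}^{k}=0$ whenever $\bar X_{ij}^{k}=0$, and both conclude via the Newey--McFadden argmax theorem. The only cosmetic difference is the device for the $\log 0$ boundary: you first localize $\tilde\theta$ to a compact $\Theta_\delta$ and then prove uniform convergence there, whereas the paper instead truncates the criteria to $\hat Q_{+}(\theta)=\max\{\hat Q(\theta),\hat Q(\bar\theta)-a\}$ and $\bar Q_{+}(\theta)=\max\{\bar Q(\theta),\bar Q(\bar\theta)-a\}$, making them continuous on all of $\Theta$, and proves $\sup_{\theta\in\Theta}|\hat Q_{+}-\bar Q_{+}|\stackrel{p}{\to}0$ directly --- but both devices rest on exactly the single-term bound $\hat Q(\theta)\le \hat w_{k}\hat X_{ij}^{k}\log T_{ij}^{k}$ that you isolated as the key step.
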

\begin{proof}
See Appendix \ref{sec:Proof-of-Theorem-mle-convergence-p}.\end{proof}
\begin{theorem}
\label{thm:asymptotic-convergence}If Assumptions \ref{ass:potential}-\ref{ass:pi-property}
hold and the following conditions are satisfied:
\begin{enumerate}
\item \label{enu:condition-clt-thm-asymptotic-convergence}For each $k$,
there exists a $\Sigma_{X}^{k}$ such that
\begin{equation}
\sqrt{M_{k}}\left(\mathcal{V}\left(\hat{X}^{k}\right)-\mathcal{V}\left(\bar{X}^{k}\right)\right)\stackrel{d}{\to}\mathcal{N}\left(\mathbf{0},\Sigma_{X}^{k}\right)\label{eq:X-clt}
\end{equation}

\item $\tilde{\theta}$ is obtained with variable reduction as in \eqref{eq:reduced_ml}.
\item Diagonal elements of $\bar{X}^{1},\ldots,\bar{X}^{K}$ are positive.
\item All $K$ simulations are statistically independent.
\item $\sqrt{M}\left(\hat{w}-\bar{w}\right)\stackrel{p}{\to}\mathbf{0}$.
\item $H=\nabla_{\theta_{r}\theta_{r}}\bar{Q}\left(\theta\left(\bar{\theta}_{r}\right)\right)$
is nonsingular with $\theta_{r}$ the vector consisting of $\{T_{ij}^{k}|\bar{X}_{ij}^{k}>0,i<j\}$
and $\{\pi_{1},\ldots,\pi_{\left|S\right|-1}\}$.
\end{enumerate}
then
\begin{equation}
\sqrt{M}\left(\tilde{\theta}_{r}-\bar{\theta}_{r}\right)\stackrel{d}{\to}\mathcal{N}\left(0,H^{-1}\Sigma H^{-1}\right)
\end{equation}
where $\Sigma=\left(\nabla_{\theta_{r}}\Phi\left(\theta\left(\bar{\theta}_{r}\right)\right)\right)^{\mathrm{T}}\Sigma_{Xw}\nabla_{\theta_{r}}\Phi\left(\theta\left(\bar{\theta}_{r}\right)\right)$
and $\Sigma_{Xw}=\mathrm{diag}\left(\bar{w}_{1}\Sigma_{X}^{1},\ldots,\bar{w}_{K}\Sigma_{X}^{K}\right)$.\end{theorem}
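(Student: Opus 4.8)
The plan is to treat $\tilde{\theta}_{r}$ as a standard extremum estimator: localize using the consistency of Theorem~\ref{thm:mle-convergence-p}, write down the first-order optimality condition, Taylor-expand it about $\bar{\theta}_{r}$, and solve for $\sqrt{M}(\tilde{\theta}_{r}-\bar{\theta}_{r})$. The first task is to show that, for $M$ large enough, the variable-reduced problem \eqref{eq:reduced_ml} is locally an \emph{unconstrained} smooth maximization in the coordinates $\theta_{r}$ near $\bar{\theta}_{r}$. The positive-diagonal hypothesis gives $\bar{X}_{ii}^{k}=\bar{\pi}_{i}^{k}\bar{T}_{ii}^{k}>0$, whence $\bar{\pi}_{i}^{k}>0$ and $\bar{T}_{ii}^{k}>0$, and hence (via $\pi^{k}=\Lambda^{k}\pi$ and Assumption~\ref{ass:potential}) $\bar{\pi}\succ0$; detailed balance further forces $\bar{X}_{ij}^{k}=\bar{X}_{ji}^{k}$, so $\bar{T}_{ij}^{k}>0$ precisely when $\bar{X}_{ij}^{k}>0$. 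By Assumption~\ref{ass:X-property} together with wide-sense ergodicity, the set of variables retained in \eqref{eq:reduced_ml} equals $\{(i,j,k):\bar{X}_{ij}^{k}+\bar{X}_{ji}^{k}>0\}\cup\{i=j\}$ with probability tending to one, so that the genuinely free coordinates are exactly $\theta_{r}$: the remaining entries $\pi_{|\mathcal{S}|}$, $T_{ij}^{k}$ with $i>j$, and $T_{ii}^{k}$ are recovered from normalization, from the detailed-balance identities $\pi_{i}^{k}T_{ij}^{k}=\pi_{j}^{k}T_{ji}^{k}$ with $\pi^{k}=\Lambda^{k}\pi$, and from the row sums. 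Since all denominators $\pi_{j}^{k}$ and all recovered entries are positive at $\bar{\theta}$, this defines a smooth (rational) chart $\theta=\theta(\theta_{r})$ for $\Theta$ around $\bar{\theta}$ on which $\log T_{ij}^{k}$ is smooth wherever $\hat{X}_{ij}^{k}>0$. Theorem~\ref{thm:mle-convergence-p} then places $\tilde{\theta}_{r}$ eventually in the relative interior of this chart, so that $\nabla_{\theta_{r}}\hat{Q}(\theta(\tilde{\theta}_{r}))=\mathbf{0}$.

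Next I would expand this score about $\bar{\theta}_{r}$, which (applying the mean value theorem componentwise, with intermediate points all converging in probability to $\bar{\theta}_{r}$) yields
\[
\mathbf{0}=\nabla_{\theta_{r}}\hat{Q}(\theta(\bar{\theta}_{r}))+\tilde{H}\,(\tilde{\theta}_{r}-\bar{\theta}_{r}),\qquad\tilde{H}\stackrel{p}{\to}H.
\]
The limit $\tilde{H}\stackrel{p}{\to}H$ follows from $\tilde{\theta}_{r}\stackrel{p}{\to}\bar{\theta}_{r}$, the linearity of $\hat{Q}(\theta)=\hat{V}_{Xw}^{\mathrm{T}}\Phi(\theta)$ in $\hat{V}_{Xw}$ together with $\hat{V}_{Xw}\stackrel{p}{\to}\bar{V}_{Xw}$ (wide-sense ergodicity: $\hat{X}_{ij}^{k}=C_{ij}^{k}/M_{k}$ and $\hat{w}_{k}=M_{k}/M$), and continuity of the second derivatives of $\Phi\circ\theta$; $H$ is invertible by hypothesis. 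Hence $\sqrt{M}(\tilde{\theta}_{r}-\bar{\theta}_{r})=-\tilde{H}^{-1}\sqrt{M}\,\nabla_{\theta_{r}}\hat{Q}(\theta(\bar{\theta}_{r}))$, and it remains to identify the limit law of the rescaled score. Because $\bar{\theta}_{r}$ is the interior maximizer of $\bar{Q}(\theta(\cdot))$ --- by Gibbs' inequality for the transition-matrix block and Assumption~\ref{ass:pi-property} for $\pi$, as established in the proof of Theorem~\ref{thm:mle-convergence-p} --- we have $(\nabla_{\theta_{r}}\Phi(\theta(\bar{\theta}_{r})))^{\mathrm{T}}\bar{V}_{Xw}=\mathbf{0}$, and linearity then gives $\sqrt{M}\,\nabla_{\theta_{r}}\hat{Q}(\theta(\bar{\theta}_{r}))=(\nabla_{\theta_{r}}\Phi(\theta(\bar{\theta}_{r})))^{\mathrm{T}}\sqrt{M}\,(\hat{V}_{Xw}-\bar{V}_{Xw})$.

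For the remaining central limit theorem I would work block by block, writing $\hat{w}_{k}\mathcal{V}(\hat{X}^{k})-\bar{w}_{k}\mathcal{V}(\bar{X}^{k})=\hat{w}_{k}\big(\mathcal{V}(\hat{X}^{k})-\mathcal{V}(\bar{X}^{k})\big)+(\hat{w}_{k}-\bar{w}_{k})\mathcal{V}(\bar{X}^{k})$. The hypothesis $\sqrt{M}(\hat{w}-\bar{w})\stackrel{p}{\to}\mathbf{0}$ makes $\sqrt{M}$ times the second term $o_{p}(1)$; for the first, $\sqrt{M}=\sqrt{M/M_{k}}\sqrt{M_{k}}$ with $M/M_{k}=1/\hat{w}_{k}\stackrel{p}{\to}1/\bar{w}_{k}$, so condition~\eqref{enu:condition-clt-thm-asymptotic-convergence} with Slutsky's theorem gives $\sqrt{M}\,\hat{w}_{k}\big(\mathcal{V}(\hat{X}^{k})-\mathcal{V}(\bar{X}^{k})\big)\stackrel{d}{\to}\mathcal{N}(\mathbf{0},\bar{w}_{k}\Sigma_{X}^{k})$. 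Statistical independence of the $K$ simulations makes the joint limit block-diagonal, i.e. $\sqrt{M}(\hat{V}_{Xw}-\bar{V}_{Xw})\stackrel{d}{\to}\mathcal{N}(\mathbf{0},\Sigma_{Xw})$; applying the linear map $(\nabla_{\theta_{r}}\Phi(\theta(\bar{\theta}_{r})))^{\mathrm{T}}$ yields $\sqrt{M}\,\nabla_{\theta_{r}}\hat{Q}(\theta(\bar{\theta}_{r}))\stackrel{d}{\to}\mathcal{N}(\mathbf{0},\Sigma)$. Combining this with $\tilde{H}\stackrel{p}{\to}H$ through Slutsky's theorem, and using that the Hessian $H$ is symmetric, gives $\sqrt{M}(\tilde{\theta}_{r}-\bar{\theta}_{r})\stackrel{d}{\to}-H^{-1}\mathcal{N}(\mathbf{0},\Sigma)=\mathcal{N}(\mathbf{0},H^{-1}\Sigma H^{-1})$.

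The step I expect to be the real obstacle is the first one: showing that after variable reduction the feasible region is, in a neighborhood of $\bar{\theta}$, a smooth manifold faithfully parametrized by $\theta_{r}$, and that $\tilde{\theta}_{r}$ eventually lies in its relative interior, so that the unconstrained first-order condition applies with no surviving active-inequality term. This is exactly where the positive-diagonal hypothesis, Assumption~\ref{ass:X-property}, and Theorem~\ref{thm:mle-convergence-p} must all be invoked simultaneously; once this reparametrization is in place, the rest is the routine delta-method bookkeeping of pushing $\nabla_{\theta_{r}}$ through the composition $\Phi\circ\theta$ and assembling the pieces with Slutsky's theorem.
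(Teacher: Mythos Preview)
Your proposal is correct and follows essentially the same route as the paper: the paper also establishes $\sqrt{M}(\hat{V}_{Xw}-\bar{V}_{Xw})\stackrel{d}{\to}\mathcal{N}(0,\Sigma_{Xw})$ via the identical algebraic decomposition (your $\hat{w}_k$ prefactor equals their $\sqrt{\hat{w}_k}\sqrt{M_k}/\sqrt{M}$), argues that $\bar{\theta}_r$ is interior to the reduced feasible set $\Theta_r$, uses linearity of $\Phi$ to obtain the score CLT, and then invokes Theorem~3.1 of Newey--McFadden, which is exactly the Taylor-expansion/Slutsky argument you have written out by hand. Your treatment of the ``obstacle'' step---why the reduced problem is locally a smooth unconstrained maximization in $\theta_r$---is more explicit than the paper's ``it is easy to see,'' but the underlying argument is the same.
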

\begin{proof}
See Appendix \ref{sec:Proof-of-Theorem-asymptotic-convergence}.
\end{proof}
\begin{remark}\label{rem:clt}{\rm Since $\hat{X}^{k}=C^{k}/M^{k}$,
Condition \ref{enu:condition-clt-thm-asymptotic-convergence} stated
in Theorem \ref{thm:asymptotic-convergence} means that the central
limit theorem holds for $\left\{ \mathcal{V}\left(\Delta C_{t}^{k}\right)\right\} $,
and can be justified by using Markov chain central limit theorems
\cite{jones2004markov} in many practical situations. For example,
one sufficient condition for this is that for each simulation $k$,
$\left\{ x_{t}^{k}\right\} $ is obtained by coarse-graining a geometrically
ergodic Markov chain $\left\{ y_{t}^{k}\right\} $ with $x_{t}^{k}=f^{k}\left(y_{t}^{k}\right)$,
which implies that $\mathcal{S}^{k}$ defines a partition of the state
space of $\left\{ y_{t}^{k}\right\} $ and each $s\in\mathcal{S}^{k}$
is associated with a cluster $\left\{ r|f^{k}\left(r\right)=s\right\} $
in the state space. (See Appendix \ref{sec:Proof-of-Remark-clt} for
details.)}\end{remark}

\begin{remark}{\rm In this section we only characterize the convergence
of $\tilde{\pi}$. For the MLE of free energy $V$, the consistency
and asymptotic normality are immediate consequences of Theorems \ref{thm:mle-convergence-p}
and \ref{thm:asymptotic-convergence} by considering that $V$ is
a smooth function of $\pi$. Here we omit the detailed description
and proof as they are trivial.}\end{remark}

\section{Approximate maximum likelihood estimation\label{sec:Approximate-maximum-likelihood}}

In this section, we develop an approximate MLE method based on a decomposition
strategy in order to improve the efficiency of MLE, and the convergence
of the method is shown.

For convenience of analysis and computation, here we introduce two
new variables, $\underline{C}^{k}$ and $Z^{k}$. $\underline{C}^{k}=\left[\underline{C}_{ij}^{k}\right]$
is a modified count matrix used to replace $C^{k}$ in the approximate
MLE, and is assumed to satisfy the following assumption:

\begin{assumption}\label{ass:count-matrix}$\underline{C}^{1},\ldots,\underline{C}^{K}$
are irreducible matrices with positive diagonal elements, and satisfy
that $1_{\underline{C}_{ij}^{k}>0}=1_{\underline{C}_{ji}^{k}>0}$
and $1_{\underline{C}^{k}=C^{k}}\stackrel{p}{\to}1$ for all $i,j,k$.\end{assumption}

One way to perform the count matrix modification is as follows:

\begin{equation}
\underline{C}_{ij}^{k}=\left\{ \begin{array}{ll}
\max\left\{ C_{ij}^{k},\delta\right\} , & C_{ji}^{k}>0\text{ or }i=j\\
C_{ij}^{k}, & \text{otherwise}
\end{array}\right.\label{eq:modified-C}
\end{equation}
where $\delta\in\left(0,1\right)$ is a small number. (This approach
is similar to the so-called neighbor prior used in \cite{Prinz2011efficient,BeauchampEtAl_MSMbuilder2}.)
\begin{theorem}
\label{thm:modified-count-matrix}If Assumptions \ref{ass:potential},
\ref{ass:simulations} and \ref{ass:X-property} hold, and if $\bar{X}_{ii}^{k}>0$
and $\sum_{t=0}^{M_{k}}1_{I_{\mathcal{S}^{k}}\left(x_{t}^{k}\right)=i}>0$
for all $i,k$, then the modified count matrices defined in \eqref{eq:modified-C}
satisfy Assumption \ref{ass:count-matrix}.\end{theorem}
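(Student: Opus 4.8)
The plan is to check the four requirements of Assumption~\ref{ass:count-matrix} for the matrices $\underline{C}^{k}$ of \eqref{eq:modified-C} one by one. The organizing remark is that each $C_{ij}^{k}$ is a nonnegative integer and $\delta\in(0,1)$, so $\max\{C_{ij}^{k},\delta\}$ differs from $C_{ij}^{k}$ precisely when $C_{ij}^{k}=0$. Hence $\underline{C}_{ij}^{k}\neq C_{ij}^{k}$ if and only if $C_{ij}^{k}=0$ together with $i=j$ or $C_{ji}^{k}>0$; and, for $i\neq j$, $\underline{C}_{ij}^{k}>0$ if and only if $C_{ij}^{k}>0$ or $C_{ji}^{k}>0$. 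Two of the four conditions are now immediate: the diagonal entries satisfy $\underline{C}_{ii}^{k}=\max\{C_{ii}^{k},\delta\}\geq\delta>0$, and the characterization of $\{\underline{C}_{ij}^{k}>0\}$ just stated is symmetric in $(i,j)$, so $1_{\underline{C}_{ij}^{k}>0}=1_{\underline{C}_{ji}^{k}>0}$.

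For irreducibility I would argue via the support graph. Since the off-diagonal support of $\underline{C}^{k}$ is symmetric and its diagonal entries are positive, $\underline{C}^{k}$ is irreducible iff the undirected graph $G^{k}$ on $\mathcal{S}^{k}$ with an edge between $i$ and $j$ whenever $\underline{C}_{ij}^{k}>0$ is connected. Writing $a_{t}=I_{\mathcal{S}^{k}}(x_{t}^{k})$ for the realized discrete trajectory of simulation $k$, we have $C_{a_{t-1}a_{t}}^{k}\geq1$ for every $t\geq1$, hence $\underline{C}_{a_{t-1}a_{t}}^{k}>0$, so $a_{0},a_{1},\ldots,a_{M_{k}}$ is a walk in $G^{k}$; by the hypothesis $\sum_{t=0}^{M_{k}}1_{I_{\mathcal{S}^{k}}(x_{t}^{k})=i}>0$ this walk meets every vertex, so $G^{k}$ is connected and $\underline{C}^{k}$ is irreducible. (This visiting hypothesis is not really an extra restriction: $\bar{X}_{ii}^{k}>0$ forces $\bar{\pi}_{i}^{k}\geq\bar{X}_{ii}^{k}>0$, so wide-sense ergodicity gives $\Pr(\sum_{t}1_{a_{t}=i}=0)\to0$; it is stated separately only so that irreducibility becomes a clean deterministic statement.)

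The third step is to prove $1_{\underline{C}^{k}=C^{k}}\stackrel{p}{\to}1$. By the first paragraph, $\underline{C}^{k}\neq C^{k}$ entails a pair $(i,j)$ with $C_{ij}^{k}=0$ and with $i=j$ or $C_{ji}^{k}>0$; since only finitely many triples $(i,j,k)$ occur, a union bound reduces the claim to showing each such event is negligible. If $i=j$, or if $i\neq j$ and $\bar{X}_{ij}^{k}>0$, then wide-sense ergodicity (Assumption~\ref{ass:simulations}) yields $C_{ij}^{k}/M_{k}\stackrel{p}{\to}\bar{X}_{ij}^{k}$, a strictly positive limit (equal to $\bar{X}_{ii}^{k}>0$ by hypothesis in the diagonal case), so $\Pr(C_{ij}^{k}=0)\to0$. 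The remaining case is $i\neq j$ with $\bar{X}_{ij}^{k}=0$: detailed balance (Assumption~\ref{ass:simulations}) gives $\bar{X}_{ji}^{k}=\bar{X}_{ij}^{k}=0$, i.e. $\lim_{n\to\infty}\Pr(I_{\mathcal{S}^{k}}(x_{n}^{k})=j,I_{\mathcal{S}^{k}}(x_{n+1}^{k})=i)=0$, and the contrapositive of Assumption~\ref{ass:X-property} then forces $\Pr(I_{\mathcal{S}^{k}}(x_{t}^{k})=j,I_{\mathcal{S}^{k}}(x_{t+1}^{k})=i)=0$ for \emph{every} $t$; hence $\E[C_{ji}^{k}]=0$, so $C_{ji}^{k}=0$ almost surely and this event is null. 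Summing the finitely many cases completes the argument.

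The substantive step is the last one, the off-diagonal case $\bar{X}_{ij}^{k}=0$: it is the only point where the ``process-level'' hypotheses genuinely enter, and the mechanism is that detailed balance upgrades the one-sided fact $\bar{X}_{ij}^{k}=0$ to $\bar{X}_{ji}^{k}=0$, after which Assumption~\ref{ass:X-property} converts this asymptotic statement into one valid at \emph{all} finite times, excluding even a single spurious count in the reverse direction. Everything else rests only on the integrality of the counts and on $\delta<1$. (Throughout, the convergences are those of the standing convention, $M\to\infty$ with each $M_{k}\to\infty$.)
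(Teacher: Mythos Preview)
Your proof is correct and rests on the same ingredients as the paper's, but you organize two of the steps more directly. For irreducibility of $\underline{C}^{k}$, the paper argues by contradiction through a block decomposition (if $\underline{C}^{k}$ were reducible, the two blocks could not both be visited by the trajectory), whereas your trajectory-as-walk argument reaches the same conclusion in one line. For $1_{\underline{C}^{k}=C^{k}}\stackrel{p}{\to}1$, the paper first establishes the intermediate fact $1_{C_{ij}^{k}>0}\stackrel{p}{\to}1_{\bar{X}_{ij}^{k}>0}$ and separately proves that $\bar{X}^{k}$ is irreducible, then packages these into an event $\omega$ (namely, $C^{k}$ irreducible, $C_{ii}^{k}>0$, and support-symmetric) under which $\underline{C}^{k}=C^{k}$; your case analysis bypasses the detour through irreducibility of $\bar{X}^{k}$ and goes straight to the conclusion. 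The essential mechanism is identical in both arguments: wide-sense ergodicity forces $C_{ij}^{k}>0$ eventually wherever $\bar{X}_{ij}^{k}>0$, while detailed balance together with Assumption~\ref{ass:X-property} rules out any count $C_{ji}^{k}>0$ wherever $\bar{X}_{ij}^{k}=0$.
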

\begin{proof}
See Appendix \ref{sec:Proof-of-Theorem-modified-count-matrix}.
\end{proof}
~

The variable $Z^{k}=\left[Z_{ij}^{k}\right]$ is defined by
\begin{equation}
\exp\left(-Z_{ij}^{k}\right)\propto X_{ij}^{k}\label{eq:Z-definition}
\end{equation}
which can be interpreted as the ``free energy matrix'' of state transitions
in the $k$-th simulation because $\exp\left(-Z_{ij}^{k}\right)\propto\lim_{t\to\infty}\Pr\left(I_{\mathcal{S}}\left(x_{t}^{k}\right)=i,I_{\mathcal{S}}\left(x_{t+1}^{k}\right)=j\right)$.
Like the free energies of states, we also assume that $\sum_{\left(i,j\right)\in\left\{ \left(i,j\right)|X_{ij}^{k}>0\right\} }Z_{ij}^{k}=0$
such that \eqref{eq:Z-definition} has a unique solution with given
$X^{k}$.

\subsection{Decomposition and approximation of MLE problem}

Under the above assumption and variable definitions and replacing
$C^{k}$ by $\underline{C}^{k}$, \eqref{eq:reduced_ml} can be decomposed
into $K+1$ subproblems as follows:

\paragraph*{Local subproblems ($k=1\ldots,K$)}

\begin{equation}
\begin{array}{lll}
L^{k}\left(V^{k}\right)=\max\limits _{\left\{ Z_{ij}^{k}|\underline{C}_{ij}^{k}>0\right\} } & -\sum_{i,j}\underline{C}_{ij}^{k}Z_{ij}^{k}+\sum_{i}\underline{C}_{i}^{k}Z_{i}^{k}\\
s.t. & Z^{k}=Z^{k\mathrm{T}}\\
 & \sum_{\left(i,j\right)\in\left\{ \left(i,j\right)|\underline{C}_{ij}^{k}>0\right\} }Z_{ij}^{k}=0\\
 & V_{Z}^{k}=V^{k}
\end{array}\label{eq:local-subproblem-original}
\end{equation}
where $Z_{ij}^{k}=\infty$ if $\underline{C}_{ij}^{k}=0$, $0\cdot\infty$
is set to be $0$, $Z_{i}^{k}=-\log\sum_{j}\exp\left(-Z_{ij}^{k}\right)$,
$\underline{C}_{i}^{k}=\sum_{j}\underline{C}_{ij}^{k}$ and $V_{Z}^{k}=\left(Z_{1}^{k},\ldots,Z_{\left|\mathcal{S}^{k}\right|}^{k}\right)^{\mathrm{T}}-\sum_{i}Z_{i}^{k}/\left|\mathcal{S}^{k}\right|$
denotes the the state potential obtained from $Z^{k}$. A brief description
of the objective function and constraints is given below:
\begin{enumerate}
\item The objective function is the log-likelihood of a Markov chain model
with free energy matrix $Z^{k}$ given $\underline{C}^{k}$, because
$\log T_{ij}^{k}=Z_{i}^{k}-Z_{ij}^{k}$.
\item The first two constraints guarantee that the detailed balance condition
holds for the $k$-th simulation and the finite elements of $Z^{k}$
have zero mean, and can be eliminated by substituting 
\begin{eqnarray}
 &  & Z_{ij}^{k}\left(\rho^{k}\right)=\nonumber \\
 &  & \left\{ \begin{array}{ll}
Z_{ji}^{k}, & \underline{C}_{ij}^{k}>0,i>j\\
-\sum_{\left(i,j\right)\in\left\{ \left(i,j\right)|\underline{C}_{ij}^{k}>0\right\} \backslash\left(\left|\mathcal{S}^{k}\right|,\left|\mathcal{S}^{k}\right|\right)}Z_{ij}^{k}, & \left(i,j\right)=\left(\left|\mathcal{S}^{k}\right|,\left|\mathcal{S}^{k}\right|\right)
\end{array}\right.
\end{eqnarray}
in the objective function, where $\rho^{k}$ denotes the vector consisting
of $\{Z_{ij}^{k}|\underline{C}_{ij}^{k}>0,i\le j,\left(i,j\right)\neq\left(\left|\mathcal{S}^{k}\right|,\left|\mathcal{S}^{k}\right|\right)\}$.
\item The third constraint means the state potential obtained from $Z^{k}$
is equal to the given $V^{k}$. Note that both $V_{Z}^{k}$ and $V^{k}$
are zero-mean vectors, so this constraint can be equivalently written
as $\left[\begin{array}{cc}
\mathbf{I} & \mathbf{0}\end{array}\right]V_{Z}^{k}=\left[\begin{array}{cc}
\mathbf{I} & \mathbf{0}\end{array}\right]V^{k}$.
\end{enumerate}
Thus, $L^{k}\left(V^{k}\right)$ is equal to the maximum log-likelihood
of a reversible Markov chain model under the condition that $V^{k}$
is known, and the local subproblem has the following equivalent formulation:
\begin{equation}
\begin{array}{lll}
L^{k}\left(V^{k}\right)=\max\limits _{\rho^{k}} & -\sum_{i,j}\underline{C}_{ij}^{k}Z_{ij}^{k}+\sum_{i}\underline{C}_{i}^{k}Z_{i}^{k}\\
s.t. & \left[\begin{array}{cc}
\mathbf{I} & \mathbf{0}\end{array}\right]V_{Z}^{k}=\left[\begin{array}{cc}
\mathbf{I} & \mathbf{0}\end{array}\right]V^{k}
\end{array}\label{eq:local-subproblem}
\end{equation}
where $Z^{k}$ and $V_{Z}^{k}$ are both viewed as functions of $\rho^{k}$.
In comparison with \eqref{eq:reduced_ml}, each local subproblem with
a given $V^{k}$ only depends on one simulation and independent of
the others.

\paragraph*{Global subproblem}

\begin{equation}
\begin{array}{ll}
\max\limits _{V} & \sum_{k=1}^{K}L^{k}\left(A^{k}\left(V+U^{k}\right)\right)\\
s.t. & \mathbf{1}^{\mathrm{T}}V=0
\end{array}\label{eq:global-subproblem}
\end{equation}
The objective of this subproblem is to maximize the sum of maximum
log-likelihoods for $K$ biased simulations by selecting the best
unbiased free energy.

Obviously, if local subproblems can be efficiently and exactly calculated,
the MLE of unbiased free energy $V$ can be easily obtained from \eqref{eq:global-subproblem}
because only $\left|\mathcal{S}\right|$ variables and one linear
equality constraint are involved in the global subproblem. However,
local subproblems are nonlinear optimization problems with nonlinear
equality constraints, of which solutions are generally computationally
expensive and time-consuming. It is therefore necessary to find tractable
approximate solutions of local subproblems. In the following, we address
the approximations and solutions of the subproblems in detail.

\subsubsection{Local subproblems\label{sub:Local-subproblems}}

Inspired by the fact that equality constrained quadratic programming
problems can be solved analytically, here we present a tractable approximate
solution of \eqref{eq:global-subproblem} based on Taylor expansions
of the objective function and the constraint equation. The approximate
solution method consists of the following three steps.

\paragraph*{Search for the optimal solution of \eqref{eq:local-subproblem} without
the constraint on the free energy}

Omitting the constraint on the free energy, \eqref{eq:local-subproblem}
can be written as
\begin{equation}
\max\limits _{\rho^{k}}L^{k}=-\sum_{i,j}\underline{C}_{ij}^{k}Z_{ij}^{k}+\sum_{i}\underline{C}_{i}^{k}Z_{i}^{k}\label{eq:local-subproblem-without-energy-constraint}
\end{equation}
and the gradient and Hessian matrix of the objective function used
in optimization can be simply obtained by the following equations:
\begin{eqnarray}
\frac{\partial Z_{i}^{k}}{\partial Z_{lj}^{k}} & = & 1_{l=i}\cdot T_{ij}^{k}\label{eq:Zik-grad}\\
\frac{\partial^{2}Z_{i}^{k}}{\partial Z_{jm}^{k}\partial Z_{ln}^{k}} & = & 1_{j=l=i}\cdot\left(T_{im}^{k}T_{in}^{k}-1_{m=n}T_{im}^{k}\right)\label{eq:Zik-hess}
\end{eqnarray}
Furthermore, we can verify that the Hessian matrix $H_{\rho}^{k}\left(\underline{C}^{k},\rho^{k}\right)=\sum_{i}\underline{C}_{i}^{k}\nabla_{\rho^{k}\rho^{k}}Z_{i}^{k}$
is negative-definite under Assumption \ref{ass:count-matrix} (see
Appendix \ref{sec:Properties-of-Hrhok}), which means \eqref{eq:local-subproblem}
is a convex optimization problem with a single global optimum that
can be solved efficiently using any local optimization algorithm.
Here we denote the solution of \eqref{eq:local-subproblem-without-energy-constraint}
by $\check{\rho}^{k}$, and we will employ a conjugate gradient algorithm
\cite{yuan2009conjugate} to find $\check{\rho}^{k}$ in our numerical
experiments.

\paragraph*{Construct a quadratic programming approximation of \eqref{eq:local-subproblem}
by Taylor expansions}

Replacing the objective function and the nonlinear constraint of \eqref{eq:local-subproblem}
by their second-order and first-order Taylor expansions around $\check{\rho}^{k}$,
we can get the following approximate local subproblem:
\begin{equation}
\begin{array}{lll}
L_{q}^{k}\left(V^{k}\right)=\max\limits _{\rho^{k}} & \check{L}^{k}+\frac{1}{2}\left(\rho^{k}-\check{\rho}^{k}\right)^{\mathrm{T}}H_{\rho}^{k}\left(\underline{C}^{k},\check{\rho}^{k}\right)\left(\rho^{k}-\check{\rho}^{k}\right)\\
s.t. & \left[\begin{array}{cc}
\mathbf{I} & \mathbf{0}\end{array}\right]\nabla_{\rho^{k}}V_{Z}^{k}\left(\check{\rho}^{k}\right)\left(\rho^{k}-\check{\rho}^{k}\right)=\left[\begin{array}{cc}
\mathbf{I} & \mathbf{0}\end{array}\right]\left(V^{k}-\check{V}^{k}\right)
\end{array}\label{eq:local-subproblem-approximate}
\end{equation}
where $\check{V}^{k}=V_{Z}^{k}\left(\check{\rho}^{k}\right)$ and
$\check{L}^{k}=L^{k}\left(\check{V}^{k}\right)$.

\paragraph*{Calculate $L_{q}^{k}\left(V^{k}\right)$ from \eqref{eq:local-subproblem-approximate}}

It is easy to verify that $\left[\begin{array}{cc}
\mathbf{I} & \mathbf{0}\end{array}\right]\nabla_{\rho^{k}}V_{Z}^{k}\left(\check{\rho}^{k}\right)$ is full row rank. Then using the Lagrange multiplier method, we get
\begin{equation}
L_{q}^{k}\left(V^{k}\right)=\frac{1}{2}\left(V^{k}-\check{V}^{k}\right)^{\mathrm{T}}\Xi^{k}\left(\underline{C}^{k},\check{\rho}^{k}\right)\left(V^{k}-\check{V}^{k}\right)+\check{L}^{k}\label{eq:Lqk}
\end{equation}
where
\begin{eqnarray}
\Xi^{k}\left(\underline{C}^{k},\check{\rho}^{k}\right) & =\left[\begin{array}{c}
\mathbf{I}\\
\mathbf{\mathbf{0}^{\mathrm{T}}}
\end{array}\right] & \left(\left[\begin{array}{cc}
\mathbf{I} & \mathbf{0}\end{array}\right]\nabla_{\rho^{k}}V_{Z}^{k}\left(\check{\rho}^{k}\right)\left(H_{\rho}^{k}\left(\underline{C}^{k},\check{\rho}^{k}\right)\right)^{-1}\right.\nonumber \\
 &  & \cdot\left.\left(\left[\begin{array}{cc}
\mathbf{I} & \mathbf{0}\end{array}\right]\nabla_{\rho^{k}}V_{Z}^{k}\left(\check{\rho}^{k}\right)\right)^{\mathrm{T}}\right)^{-1}\left[\begin{array}{cc}
\mathbf{I} & \mathbf{0}\end{array}\right]\label{eq:Xik}
\end{eqnarray}
is negative-semidefinite and satisfies $\left[\begin{array}{cc}
\mathbf{I} & \mathbf{0}\end{array}\right]\Xi^{k}\left(\underline{C}^{k},\check{\rho}^{k}\right)\left[\begin{array}{cc}
\mathbf{I} & \mathbf{0}\end{array}\right]^{\mathrm{T}}<0$.

\subsubsection{Global subproblem}

Substituting \eqref{eq:Lqk} into \eqref{eq:global-subproblem} leads
to the approximate global subproblem:
\begin{equation}
\begin{array}{ll}
\max\limits _{V} & \sum_{k=1}^{K}\frac{1}{2}\left(A^{k}\left(V+U^{k}\right)-\check{V}^{k}\right)^{\mathrm{T}}\Xi^{k}\left(\underline{C}^{k},\check{\rho}^{k}\right)\\
 & \cdot\left(A^{k}\left(V+U^{k}\right)-\check{V}^{k}\right)+\check{L}^{k}\\
s.t. & \mathbf{1}^{\mathrm{T}}V=0
\end{array}\label{eq:global-subproblem-approximate}
\end{equation}
By applying the KKT conditions, it is easy to verify that the solution
of \eqref{eq:global-subproblem-approximate} is
\begin{equation}
\check{V}=\left(\sum_{k=1}^{K}A^{k\mathrm{T}}\Xi^{k}\left(\underline{C}^{k},\check{\rho}^{k}\right)A^{k}\right)^{+}\left(\sum_{k=1}^{K}A^{k\mathrm{T}}\Xi^{k}\left(\underline{C}^{k},\check{\rho}^{k}\right)\left(\check{V}^{k}-A^{k}U^{k}\right)\right)\label{eq:approximate-mle-V}
\end{equation}

\begin{remark}{\rm We can now explain why variables $Z^{k},V^{k}$
are used instead of $T^{k},\pi^{k}$ in the problem decomposition
and approximation. First, the approximate local subproblem based on
Taylor expansions w.r.t.\ $T^{k},\pi^{k}$ involves inequality constraints
and therefore has no analytic solution. Secondly, even if we can get
a quadratic polynomial approximation to the function $L^{k}\left(V^{k}\left(\pi^{k}\right)\right)$
by some method (e.g., omitting the inequality constraints in approximate
local subproblems as in \cite{sakuraba2009multiple}), the resulting
approximate global problem cannot be solved as easily as \eqref{eq:global-subproblem-approximate}
because both $V$ and $\pi$ are nonlinear functions of $\pi^{k}$.}\end{remark}

\subsection{Approximate MLE algorithm\label{sub:Approximate-MLE-algorithm}}

Summarizing the above discussions, the approximate MLE algorithm of
the unbiased free energy is described as follows:
\begin{description}
\item [{Step 1.}] For $k=1,\ldots,K$, calculate the modified count matrix
$\underline{C}^{k}$ by \eqref{eq:modified-C}.
\item [{Step 2.}] For $k=1,\ldots,K$, search for the optimal point $\check{\rho}^{k}$
of the programming problem \eqref{eq:local-subproblem-without-energy-constraint}.
\item [{Step 3.}] For $k=1,\ldots,K$, calculate $\Xi^{k}\left(\underline{C}^{k},\check{\rho}^{k}\right)$
by \eqref{eq:Xik}.
\item [{Step 4.}] Calculate the approximate maximum likelihood estimate
of $V$ by \eqref{eq:approximate-mle-V}.
\end{description}

\subsection{Convergence analysis\label{sub:Convergence-analysis-approximate-mle}}

In this section, we will analyze consistency and asymptotic normality
of the approximate MLE as the exact MLE under the assumptions stated
in Section \ref{sub:Convergence-analysis} and Assumption \ref{ass:count-matrix}.
Before introducing the main theorem, some definitions and a lemma
are needed. Let $\underline{\rho}^{k}$ and $\bar{\rho}^{k}$ be vectors
consisting of $\left\{ Z_{ij}^{k}|\bar{X}_{ij}^{k}>0,i\le j,\left(i,j\right)\neq\left(\left|\mathcal{S}^{k}\right|,\left|\mathcal{S}^{k}\right|\right)\right\} $
and $\left\{ \bar{Z}_{ij}^{k}|\bar{X}_{ij}^{k}>0,i\le j,\left(i,j\right)\neq\left(\left|\mathcal{S}^{k}\right|,\left|\mathcal{S}^{k}\right|\right)\right\} $,
\begin{equation}
\check{Q}^{k}\left(\underline{\rho}^{k}\right)=-\sum_{i,j}\hat{w}_{k}\hat{\underline{X}}_{ij}^{k}Z_{ij}^{k}\left(\underline{\rho}^{k}\right)+\sum_{i,j}\hat{w}_{k}\hat{\underline{X}}_{ij}^{k}Z_{i}^{k}\left(\underline{\rho}^{k}\right)
\end{equation}
\begin{equation}
\bar{Q}^{k}\left(\underline{\rho}^{k}\right)=-\sum_{i,j}\bar{w}_{k}\bar{X}_{ij}^{k}Z_{ij}^{k}\left(\underline{\rho}^{k}\right)+\sum_{i,j}\bar{w}_{k}\bar{X}_{ij}^{k}Z_{i}^{k}\left(\underline{\rho}^{k}\right)
\end{equation}
and $\hat{\underline{X}}^{k}=\left[\hat{\underline{X}}_{ij}^{k}\right]=\underline{C}^{k}/M^{k}$.
Like $\hat{Q}\left(\theta\right)$ and $\bar{Q}\left(\theta\right)$
in Section \ref{sub:Convergence-analysis}, $\check{Q}^{k}\left(\underline{\rho}^{k}\right)$
and $\bar{Q}^{k}\left(\underline{\rho}^{k}\right)$ can also be written
as
\begin{equation}
\check{Q}^{k}\left(\underline{\rho}^{k}\right)=\hat{w}_{k}\mathcal{V}\left(\hat{\underline{X}}^{k}\right)^{\mathrm{T}}\Phi^{k}\left(\underline{\rho}^{k}\right),\quad\bar{Q}^{k}\left(\underline{\rho}^{k}\right)=\bar{w}_{k}\mathcal{V}\left(\bar{X}^{k}\right)^{\mathrm{T}}\Phi^{k}\left(\underline{\rho}^{k}\right)
\end{equation}
and $\check{\rho}^{k}=\arg\max_{\underline{\rho}^{k}}\check{Q}^{k}\left(\underline{\rho}^{k}\right)$
if $\mathrm{dim}\left(\check{\rho}^{k}\right)=\mathrm{dim}\left(\bar{\rho}^{k}\right)$.
\begin{lemma}
\label{lem:rhoV}Provided that Assumptions \ref{ass:potential}-\ref{ass:pi-property}
and \ref{ass:count-matrix} hold.\end{lemma}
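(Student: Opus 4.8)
\emph{What the lemma collects.} The statement gathers the local-level convergence facts needed to pass from the exact MLE of Section~\ref{sub:Convergence-analysis} to the approximate scheme of Section~\ref{sec:Approximate-maximum-likelihood}: that the unconstrained local subproblem \eqref{eq:local-subproblem-without-energy-constraint} has, with probability tending to one, a solution $\check\rho^{k}$ of the ``right'' dimension $\dim(\bar\rho^{k})$; that $\check\rho^{k}\stackrel{p}{\to}\bar\rho^{k}$ (hence $\check V^{k}=V_{Z}^{k}(\check\rho^{k})\stackrel{p}{\to}\bar V^{k}$ and the curvature matrix $\Xi^{k}(\underline C^{k},\check\rho^{k})$ entering \eqref{eq:approximate-mle-V} converges to its population analogue after the obvious rescaling); and --- under the central-limit hypotheses of Theorem~\ref{thm:asymptotic-convergence} --- that $\sqrt{M_{k}}(\check\rho^{k}-\bar\rho^{k})$ is asymptotically normal. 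The plan is to treat $\check\rho^{k}$ as a textbook extremum estimator: by the remark just before the lemma, $\check\rho^{k}=\arg\max_{\underline\rho^{k}}\check Q^{k}(\underline\rho^{k})$ once the dimensions agree, $\check Q^{k}$ is a \emph{linear} functional of the empirical object $\hat w_{k}\mathcal V(\hat{\underline X}^{k})$ through the map $\Phi^{k}$, and its population version $\bar Q^{k}$ is a positive multiple of the expected log-likelihood of a reversible Markov chain, hence uniquely maximised at the truth.

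\emph{Steps.} First I would show that the random index set $\{(i,j)\mid\underline C_{ij}^{k}>0,\ i\le j,\ (i,j)\neq(|\mathcal S^{k}|,|\mathcal S^{k}|)\}$ indexing $\check\rho^{k}$ stabilises to its barred counterpart: Assumption~\ref{ass:count-matrix} gives $1_{\underline C^{k}=C^{k}}\stackrel{p}{\to}1$, while wide-sense ergodicity (Assumption~\ref{ass:simulations}) together with Assumption~\ref{ass:X-property} forces the sign pattern of $C^{k}$ to match that of $\bar X^{k}$ with probability tending to one; so $\dim(\check\rho^{k})=\dim(\bar\rho^{k})$ on an event of probability $\to1$, and all later estimates are taken on that event. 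Second, on that event $\check Q^{k}$ and $\bar Q^{k}$ are functions on one fixed Euclidean space, and since $\hat w_{k}\mathcal V(\hat{\underline X}^{k})\stackrel{p}{\to}\bar w_{k}\mathcal V(\bar X^{k})$ by Assumptions~\ref{ass:weights}, \ref{ass:simulations} and \ref{ass:count-matrix}, $\check Q^{k}\to\bar Q^{k}$ uniformly on compacta. Third, I would invoke the negative-definiteness of $H_{\rho}^{k}$ from Appendix~\ref{sec:Properties-of-Hrhok} --- applicable because the support pattern of $\bar X^{k}$ meets its hypotheses --- to conclude that $\bar Q^{k}$ is strictly concave and coercive, hence has a unique maximiser; identification of that maximiser as $\bar\rho^{k}$ follows by writing $\bar Q^{k}(\underline\rho^{k})=\bar w_{k}\sum_{i}\bar\pi_{i}^{k}\sum_{j}\bar T_{ij}^{k}\log T_{ij}^{k}$ and observing, via the Gibbs inequality, that this is maximised exactly when every row of $T^{k}$ equals that of $\bar T^{k}$, which under detailed balance and the zero-mean normalisation of $Z^{k}$ is equivalent to $\underline\rho^{k}=\bar\rho^{k}$. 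Standard consistency of extremum estimators then gives $\check\rho^{k}\stackrel{p}{\to}\bar\rho^{k}$, and the continuous mapping theorem propagates this to $\check V^{k}$ and $\Xi^{k}$. For the normality part I would differentiate the first-order condition $\nabla_{\underline\rho^{k}}\check Q^{k}(\check\rho^{k})=\mathbf 0$, Taylor-expand about $\bar\rho^{k}$, insert the CLT $\sqrt{M_{k}}(\mathcal V(\hat{\underline X}^{k})-\mathcal V(\bar X^{k}))\stackrel{d}{\to}\mathcal N(\mathbf 0,\Sigma_{X}^{k})$ (legitimate since $\underline C^{k}=C^{k}$ eventually), invert the Hessian to obtain a normal limit with sandwich covariance $(H_{\rho}^{k})^{-1}(\nabla_{\underline\rho^{k}}\Phi^{k})^{\mathrm T}\Sigma_{X}^{k}(\nabla_{\underline\rho^{k}}\Phi^{k})(H_{\rho}^{k})^{-1}$, and transport it through the smooth maps $\rho^{k}\mapsto V_{Z}^{k}$ and $\rho^{k}\mapsto\Xi^{k}$ by the delta method.

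\emph{Main obstacle.} The chief technical nuisance is that $\check\rho^{k}$ lives in a parameter space whose very dimension is random (it is indexed by the observed transitions), so the difference $\check\rho^{k}-\bar\rho^{k}$ only acquires meaning after localising to the event on which this dimension equals $\dim(\bar\rho^{k})$; the reassembly relies on the elementary fact that an estimator agreeing with a consistent (resp.\ asymptotically normal) one on events of probability $\to1$ inherits the property. A secondary difficulty is that the domain of $\rho^{k}$ is unbounded --- individual entries $Z_{ij}^{k}$ may run off to $\pm\infty$ --- so before applying the uniform-convergence/unique-maximiser argument one must confine $\check\rho^{k}$ to a compact neighbourhood of $\bar\rho^{k}$, which is exactly where the strict concavity and coercivity of the objective, guaranteed by the positive diagonal entries of Assumption~\ref{ass:count-matrix}, are indispensable. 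The identification step is comparatively routine once the normalisation constraints are written out, reducing to nonnegativity of the Kullback--Leibler divergence.
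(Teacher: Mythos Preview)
Your proposal is correct and follows essentially the same route as the paper: localise to the event on which the sign pattern of $\underline C^{k}$ matches that of $\bar X^{k}$, identify $\bar\rho^{k}$ as the unique maximiser of $\bar Q^{k}$ via the Gibbs/KL inequality, then apply standard extremum-estimator consistency and asymptotic normality (the paper cites Theorems~2.7 and~3.1 of Newey--McFadden \cite{newey1994large}) and push through to $\check V^{k}$ by continuous mapping / delta method. The one place you work harder than necessary is your ``secondary difficulty'' about the unbounded domain: rather than establishing coercivity and confining to a compact neighbourhood, the paper simply invokes the \emph{concave}-objective version of the consistency theorem, for which strict concavity of $1_{\omega^{k}}\check Q^{k}$ (from $H_{\rho}^{k}<0$) together with mere pointwise convergence $1_{\omega^{k}}\check Q^{k}(\underline\rho^{k})\stackrel{p}{\to}\bar Q^{k}(\underline\rho^{k})$ already suffices --- no compactness or uniform-convergence argument is needed.
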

\begin{enumerate}
\item $\check{\rho}^{k}\stackrel{p}{\to}\bar{\rho}^{k}$ and $\check{V}^{k}\stackrel{p}{\to}\bar{V}^{k}$.
\item $\sqrt{M_{k}}\left(\check{V}^{k}-\bar{V}^{k}\right)\stackrel{d}{\to}\mathcal{N}\left(\mathbf{0},\Sigma_{V}^{k}\right)$
if \eqref{eq:X-clt} is satisfied, where
\begin{eqnarray}
\Sigma_{V}^{k} & = & \nabla_{\underline{\rho}^{k}}V_{Z}^{k}\left(\bar{\rho}^{k}\right)\Sigma_{\rho}^{k}\left(\nabla_{\underline{\rho}^{k}}V_{Z}^{k}\left(\bar{\rho}^{k}\right)\right)^{\mathrm{T}}\label{eq:SigmaVk}\\
\Sigma_{\rho}^{k} & = & \bar{w}_{k}^{2}\left(H_{\rho}^{k}\left(\bar{w}_{k}\bar{X}^{k},\bar{\rho}^{k}\right)\right)^{-1}\left(\nabla_{\underline{\rho}^{k}}\Phi^{k}\left(\bar{\rho}^{k}\right)\right)^{\mathrm{T}}\Sigma_{X}^{k}\nonumber \\
 &  & \cdot\nabla_{\underline{\rho}^{k}}\Phi^{k}\left(\bar{\rho}^{k}\right)\left(H_{\rho}^{k}\left(\bar{w}_{k}\bar{X}^{k},\bar{\rho}^{k}\right)\right)^{-1}\label{eq:Sigmarhok}
\end{eqnarray}
\end{enumerate}
\begin{proof}
See Appendix \ref{sec:Proof-of-Lemma-rhoV}.\end{proof}
\begin{theorem}
\label{thm:convergence-approximate-mle}Provided that Assumptions
\ref{ass:potential}-\ref{ass:pi-property} and \ref{ass:count-matrix}
 hold.\end{theorem}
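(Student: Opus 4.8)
The plan is to prove, in parallel with Theorems~\ref{thm:mle-convergence-p} and~\ref{thm:asymptotic-convergence}, that the approximate estimator $\check{V}$ of \eqref{eq:approximate-mle-V} is consistent, $\check{V}\stackrel{p}{\to}\bar{V}$, and asymptotically normal, $\sqrt{M}(\check{V}-\bar{V})\stackrel{d}{\to}\mathcal{N}(\mathbf{0},\Sigma_{\check{V}})$ with an explicit $\Sigma_{\check{V}}$, taking Lemma~\ref{lem:rhoV} as the principal input. The first step is to recast \eqref{eq:approximate-mle-V} in a form suited to limiting arguments. From \eqref{eq:Xik} and \eqref{eq:local-subproblem-without-energy-constraint}, $\Xi^{k}(\underline{C}^{k},\check{\rho}^{k})$ is homogeneous of degree $+1$ in the count matrix $\underline{C}^{k}$, and $\check{\rho}^{k}$ is unaffected by a positive rescaling of $\underline{C}^{k}$; thus, with $\hat{\underline{X}}^{k}=\underline{C}^{k}/M_{k}$, one has $\Xi^{k}(\underline{C}^{k},\check{\rho}^{k})=M_{k}\Xi^{k}(\hat{\underline{X}}^{k},\check{\rho}^{k})$, and using $(cG)^{+}=c^{-1}G^{+}$ for $c>0$ together with $M_{k}/M=\hat{w}_{k}$, \eqref{eq:approximate-mle-V} becomes $\check{V}=\bar{G}^{+}\bar{b}$ with
\begin{equation}
\bar{G}=\sum_{k=1}^{K}\hat{w}_{k}A^{k\mathrm{T}}\Xi^{k}(\hat{\underline{X}}^{k},\check{\rho}^{k})A^{k},\qquad\bar{b}=\sum_{k=1}^{K}\hat{w}_{k}A^{k\mathrm{T}}\Xi^{k}(\hat{\underline{X}}^{k},\check{\rho}^{k})(\check{V}^{k}-A^{k}U^{k}).
\end{equation}
Here $\hat{w}_{k}\stackrel{p}{\to}\bar{w}_{k}>0$ (Assumption~\ref{ass:weights}), $\hat{\underline{X}}^{k}\stackrel{p}{\to}\bar{X}^{k}$ (Assumptions~\ref{ass:simulations} and~\ref{ass:count-matrix}), and $\check{\rho}^{k}\stackrel{p}{\to}\bar{\rho}^{k}$, $\check{V}^{k}\stackrel{p}{\to}\bar{V}^{k}$ by part~1 of Lemma~\ref{lem:rhoV}; since $\Xi^{k}$ is continuous at $(\bar{X}^{k},\bar{\rho}^{k})$ — there $H_{\rho}^{k}<0$ and the matrix inverted in \eqref{eq:Xik} is nonsingular — this yields $\bar{G}\stackrel{p}{\to}\bar{G}_{\infty}:=\sum_{k}\bar{w}_{k}A^{k\mathrm{T}}\bar{\Xi}^{k}A^{k}$ and $\bar{b}\stackrel{p}{\to}\bar{b}_{\infty}:=\sum_{k}\bar{w}_{k}A^{k\mathrm{T}}\bar{\Xi}^{k}(\bar{V}^{k}-A^{k}U^{k})$, where $\bar{\Xi}^{k}:=\Xi^{k}(\bar{X}^{k},\bar{\rho}^{k})$.

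The delicate point in the consistency proof is that $G\mapsto G^{+}$ is discontinuous where $\mathrm{rank}(G)$ jumps, so $\bar{G}\stackrel{p}{\to}\bar{G}_{\infty}$ does not by itself give $\bar{G}^{+}\stackrel{p}{\to}\bar{G}_{\infty}^{+}$. I would resolve this by observing that $\mathrm{null}(\bar{G})$ is a deterministic subspace, the same for every $M$: by \eqref{eq:Xik}, $\Xi^{k}(\hat{\underline{X}}^{k},\check{\rho}^{k})$ has only its leading $(|\mathcal{S}^{k}|-1)\times(|\mathcal{S}^{k}|-1)$ block nonzero, and that block is negative definite (the property stated just after \eqref{eq:Xik}), so $\mathrm{null}(\Xi^{k}(\hat{\underline{X}}^{k},\check{\rho}^{k}))$ is always the span of the last standard basis vector of $\mathbb{R}^{|\mathcal{S}^{k}|}$; hence, for the symmetric negative-semidefinite matrix $\bar{G}$, $v\in\mathrm{null}(\bar{G})$ iff $A^{k}v$ lies in that span for every $k$, and since $A^{k}v$ is always a zero-mean vector (by the mean subtraction built into $A^{k}$) this forces $A^{k}v=\mathbf{0}$ for all $k$. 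Thus $\mathrm{null}(\bar{G})=\{v\mid A^{k}v=\mathbf{0},\ k=1,\ldots,K\}=\mathrm{null}(\bar{G}_{\infty})$ for every $M$, so $\mathrm{rank}(\bar{G})$ is constant and $\bar{G}^{+}\stackrel{p}{\to}\bar{G}_{\infty}^{+}$. To identify the limit, note that at the true values $\bar{V}^{k}=A^{k}(\bar{V}+U^{k})$, whence $\bar{b}_{\infty}=\bar{G}_{\infty}\bar{V}$ and $\bar{G}_{\infty}^{+}\bar{b}_{\infty}=\bar{G}_{\infty}^{+}\bar{G}_{\infty}\bar{V}$, the orthogonal projection of $\bar{V}$ onto $\mathrm{null}(\bar{G}_{\infty})^{\perp}$. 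Now $\{v\mid A^{k}v=\mathbf{0}\ \forall k\}$ (vectors constant on each $\mathcal{J}^{k}$) and the linear solution space of the equations in \eqref{eq:pi-unique} have the same dimension — both measuring the connectivity of the supports $\mathcal{J}^{1},\ldots,\mathcal{J}^{K}$, using that each $\bar{T}^{k}$ is irreducible and reversible so that its reversing measure is unique up to scale — so Assumption~\ref{ass:pi-property}, which makes the latter one-dimensional, forces $\mathrm{null}(\bar{G}_{\infty})=\mathrm{span}(\mathbf{1})$. Since $\bar{V}$ is zero-mean the projection fixes it: $\bar{G}_{\infty}^{+}\bar{b}_{\infty}=\bar{V}$, and therefore $\check{V}\stackrel{p}{\to}\bar{V}$. (That $\check{V}$ obeys $\mathbf{1}^{\mathrm{T}}\check{V}=0$ follows from $A^{k}\mathbf{1}=\mathbf{0}$, which puts $\mathbf{1}\in\mathrm{null}(\bar{G})$ and hence $\bar{G}^{+}\bar{b}\perp\mathbf{1}$.)

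For asymptotic normality I would impose the conditions of Theorem~\ref{thm:asymptotic-convergence}: the central limit theorem \eqref{eq:X-clt}, positive diagonal entries of $\bar{X}^{1},\ldots,\bar{X}^{K}$, and independence of the $K$ simulations. Using $\bar{V}^{k}=A^{k}\bar{V}+A^{k}U^{k}$ and $\bar{G}^{+}\bar{G}\bar{V}=\bar{V}$ (valid for every $M$ because $\mathrm{null}(\bar{G})=\mathrm{span}(\mathbf{1})\perp\bar{V}$), the identity $\check{V}-\bar{V}=\bar{G}^{+}(\bar{b}-\bar{G}\bar{V})$ reduces to
\begin{equation}
\check{V}-\bar{V}=\bar{G}^{+}\sum_{k=1}^{K}\hat{w}_{k}A^{k\mathrm{T}}\Xi^{k}(\hat{\underline{X}}^{k},\check{\rho}^{k})\bigl(\check{V}^{k}-\bar{V}^{k}\bigr).
\end{equation}
In $\sqrt{M}(\check{V}-\bar{V})$, replacing $\bar{G}^{+}$, $\hat{w}_{k}$ and $\Xi^{k}(\hat{\underline{X}}^{k},\check{\rho}^{k})$ by their deterministic limits changes the expression by only $o_{p}(1)$, since those corrections are $o_{p}(1)$ while $\sqrt{M}(\check{V}^{k}-\bar{V}^{k})=\hat{w}_{k}^{-1/2}\sqrt{M_{k}}(\check{V}^{k}-\bar{V}^{k})=O_{p}(1)$; and by part~2 of Lemma~\ref{lem:rhoV} together with Slutsky, $\sqrt{M}(\check{V}^{k}-\bar{V}^{k})\stackrel{d}{\to}\mathcal{N}(\mathbf{0},\bar{w}_{k}^{-1}\Sigma_{V}^{k})$, independently over $k$. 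Slutsky's theorem then gives
\begin{equation}
\sqrt{M}\bigl(\check{V}-\bar{V}\bigr)\stackrel{d}{\to}\mathcal{N}\!\left(\mathbf{0},\ \bar{G}_{\infty}^{+}\Bigl(\textstyle\sum_{k=1}^{K}\bar{w}_{k}A^{k\mathrm{T}}\bar{\Xi}^{k}\Sigma_{V}^{k}\bar{\Xi}^{k}A^{k}\Bigr)\bar{G}_{\infty}^{+}\right),
\end{equation}
with $\Sigma_{V}^{k}$ as in \eqref{eq:SigmaVk}; the nondegeneracy of $\bar{G}_{\infty}$ on $\mathbf{1}^{\perp}$ required here is automatic, being equivalent to the uniqueness in Assumption~\ref{ass:pi-property} rather than an extra hypothesis as is condition~6 of Theorem~\ref{thm:asymptotic-convergence}. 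Consistency and asymptotic normality of the stationary density $\check{\pi}$ obtained from $\check{V}$ via \eqref{eq:free-energy-definition} then follow by the delta method.

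The step I expect to be the real obstacle is the rank argument above — showing that $\mathrm{null}(\bar{G})$ does not fluctuate with $M$ (so that $\bar{G}^{+}$ is continuous along the relevant path) and that it collapses to $\mathrm{span}(\mathbf{1})$ exactly when Assumption~\ref{ass:pi-property} holds. Everything downstream of Lemma~\ref{lem:rhoV} — the convergence of $\bar{G}$, $\bar{b}$ and the Slutsky/delta-method passage — is then routine.
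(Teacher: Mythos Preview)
Your proposal is correct and follows essentially the same route as the paper: rescale $\underline{C}^{k}$ to $\hat{w}_{k}\hat{\underline{X}}^{k}$, show that the null space of $\sum_{k}A^{k\mathrm{T}}\Xi^{k}A^{k}$ is \emph{deterministically} $\mathrm{span}(\mathbf{1})$ so that the pseudoinverse is continuous along the relevant path (the paper invokes Stewart's theorem for this), and then combine Lemma~\ref{lem:rhoV} with Slutsky's theorem to obtain both consistency and asymptotic normality.

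The one substantive difference is in how $\mathrm{null}(\bar{G})=\mathrm{span}(\mathbf{1})$ is established. You argue directly from the block structure of $\Xi^{k}$ and the zero-mean property of $A^{k}v$ to get $\mathrm{null}(\bar{G})=\{v\mid A^{k}v=\mathbf{0}\ \forall k\}$, and then link this set to Assumption~\ref{ass:pi-property} via a connectivity/dimension count. The paper instead first proves (its Part~1) that $\bar{V}$ is the unique constrained maximizer of the quadratic $Q_{q}(V;\{\bar{V}^{k}\})$ for \emph{any} realization of the $\Xi^{k}$'s --- by taking a competing $V'$, deducing $A^{k}(V'+U^{k})=\bar{V}^{k}$ from strict negative definiteness of the leading block of $\Xi^{k}$, and then building a competing $\pi'$ that violates Assumption~\ref{ass:pi-property} --- and only then (Part~2) deduces the null-space statement by contradiction. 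The paper's route is slightly cleaner because it sidesteps your informal dimension-matching step between $\{v\mid A^{k}v=\mathbf{0}\ \forall k\}$ and the linear solution set of \eqref{eq:pi-unique}; your route makes the structural reason for the fixed rank more transparent. Both arguments use exactly the same ingredients (the strict definiteness of $[\mathbf{I}\ \mathbf{0}]\Xi^{k}[\mathbf{I}\ \mathbf{0}]^{\mathrm{T}}$, the zero-mean property of $A^{k}$, and Assumption~\ref{ass:pi-property}), so this is a difference of packaging rather than of substance.
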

\begin{enumerate}
\item $\check{V}\stackrel{p}{\to}\bar{V}$.
\item $\sqrt{M}\left(\check{V}-\bar{V}\right)\stackrel{d}{\to}\mathcal{N}\left(\mathbf{0},\Sigma_{V}\right)$
if \eqref{eq:X-clt} is satisfied for all $k$ and $K$ simulations
are statistically independent, where
\begin{eqnarray}
\Sigma_{V} & = & \left(\sum_{k=1}^{K}A^{k\mathrm{T}}\Xi^{k}\left(\bar{w}^{k}\bar{X}^{k},\bar{\rho}^{k}\right)A^{k}\right)^{+}\nonumber \\
 &  & \cdot\left(\sum_{k=1}^{K}A^{k\mathrm{T}}\Xi^{k}\left(\bar{w}^{k}\bar{X}^{k},\bar{\rho}^{k}\right)\Sigma_{V}^{k}\Xi^{k}\left(\bar{w}^{k}\bar{X}^{k},\bar{\rho}^{k}\right)A^{k}\right)\nonumber \\
 &  & \cdot\left(\sum_{k=1}^{K}A^{k\mathrm{T}}\Xi^{k}\left(\bar{w}^{k}\bar{X}^{k},\bar{\rho}^{k}\right)A^{k}\right)^{+}\label{eq:SigmaV}
\end{eqnarray}
and $\Sigma_{V}^{k}$ has the same definition as in Lemma \ref{lem:rhoV}.\end{enumerate}
\begin{proof}
See Appendix \ref{sec:Proof-of-Theorem-convergence-approximate-mle}.
\end{proof}

\subsection{Error analysis\label{sub:Error-analysis}}

According to Theorem \ref{thm:convergence-approximate-mle}, the estimation
error of $\check{V}$ follows approximately a multivariate normal
distribution $\mathcal{N}\left(\mathbf{0},\Sigma_{V}/M\right)$ when
$M$ is large enough, and $\Sigma_{V}$ can be estimated by \eqref{eq:SigmaVk},
\eqref{eq:Sigmarhok} and \eqref{eq:SigmaV} with replacing $\bar{w}_{k},\bar{X}^{k},\bar{\rho}^{k}$
with $\hat{w}_{k},\hat{\underline{X}}^{k},\check{\rho}^{k}$ if $\Sigma_{X}^{k}$
is given. Therefore, the remaining key problem is how to estimate
$\Sigma_{X}^{k}$. In this section we present an algorithm for estimating
$\Sigma_{X}^{k}$ based on the following assumption, which is similar
to the assumption proposed in Remark \ref{rem:clt} and implies that
each simulation is driven by a stationary, irreducible and reversible
Markov model.

\begin{assumption}\label{ass:error}For each simulation $k$, $x_{t}^{k}$
can be expressed as a function of a latent variable $y_{t}^{k}$ with
$x_{t}^{k}=f^{k}\left(y_{t}^{k}\right)$ and $\left\{ y_{t}^{k}\right\} $
is a stationary, irreducible and reversible Markov chain.\end{assumption}

Under Assumption \ref{ass:error}, it can be seen that $\left\{ \mathcal{V}\left(\Delta C_{t}^{k}\right)\right\} $
is also a stationary process. To describe the estimation algorithm,
we also need some new notation. We denote by
\begin{equation}
\kappa^{k}\left(h\right)=\cov\left(\mathcal{V}\left(\Delta C_{t}^{k}\right),\mathcal{V}\left(\Delta C_{t+h}^{k}\right)\right)
\end{equation}
the autocovariance of $\left\{ \mathcal{V}\left(\Delta C_{t}^{k}\right)\right\} $
with lag $h$. It is clear that the $((i-1)\left|\mathcal{S}^{k}\right|+j,(m-1)\left|\mathcal{S}^{k}\right|+n)$-th
element of $\kappa^{k}\left(h\right)$ is equal to the covariance
between $1_{\left(I_{\mathcal{S}^{k}}\left(x_{t-1}^{k}\right),I_{\mathcal{S}^{k}}\left(x_{t}^{k}\right)\right)=\left(i,j\right)}$
and $1_{\left(I_{\mathcal{S}^{k}}\left(x_{t+h-1}^{k}\right),I_{\mathcal{S}^{k}}\left(x_{t+h}^{k}\right)\right)=\left(m,n\right)}$.
Furthermore, let $\Gamma^{k}\left(l\right)=\kappa^{k}\left(2l+1\right)+\kappa^{k}\left(2l+2\right)$,
and $\eta^{k}\left(l\right)$ be a sum of some elements of $\Gamma^{k}\left(l\right)$,
which can be represented as
\begin{eqnarray}
\eta^{k}\left(l\right) & = & \sum_{i,j}\cov\left(1_{\left(I_{\mathcal{S}^{k}}\left(x_{t-1}^{k}\right),I_{\mathcal{S}^{k}}\left(x_{t}^{k}\right)\right)=\left(j,i\right)},1_{\left(I_{\mathcal{S}^{k}}\left(x_{t+2l}^{k}\right),I_{\mathcal{S}^{k}}\left(x_{t+2l+1}^{k}\right)\right)=\left(i,j\right)}\right)\nonumber \\
 &  & +\cov\left(1_{\left(I_{\mathcal{S}^{k}}\left(x_{t-1}^{k}\right),I_{\mathcal{S}^{k}}\left(x_{t}^{k}\right)\right)=\left(j,i\right)},1_{\left(I_{\mathcal{S}^{k}}\left(x_{t+2l+1}^{k}\right),I_{\mathcal{S}^{k}}\left(x_{t+2l+2}^{k}\right)\right)=\left(i,j\right)}\right)
\end{eqnarray}

\begin{theorem}
\label{thm:SigmaX-kappa}If Assumption \ref{ass:error} and \eqref{eq:X-clt}
holds, and the series $\sum_{h=0}^{\infty}\kappa^{k}\left(h\right)$
is convergent, then
\begin{enumerate}
\item $\Sigma_{X}^{k}=\kappa^{k}\left(0\right)+\sum_{l=0}^{\infty}\left(\Gamma^{k}\left(l\right)+\Gamma^{k}\left(l\right)^{\mathrm{T}}\right)$.
\item $\eta^{k}\left(l\right)\ge\left\Vert \Gamma^{k}\left(l\right)\right\Vert _{\max}$
and $\eta^{k}\left(l\right)\le\eta^{k}\left(l+1\right)$ for $l\ge0$.
\end{enumerate}
\end{theorem}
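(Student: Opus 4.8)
The plan is to derive both assertions from a single spectral representation of the transition-indicator autocovariances, which is available because Assumption \ref{ass:error} makes the latent chain $\{y_t^k\}$ stationary and reversible. For assertion 1 I would start from the standard limiting-variance formula for the sample mean of a stationary sequence: since $\hat X^k=C^k/M_k$ gives $\mathcal V(\hat X^k)=\frac1{M_k}\sum_{t=1}^{M_k}\mathcal V(\Delta C_t^k)$, the covariance $\Sigma_X^k$ in \eqref{eq:X-clt} equals $\sum_{h=-\infty}^{\infty}\kappa^k(h)$, the series converging by hypothesis. Stationarity gives $\kappa^k(-h)=\kappa^k(h)^{\mathrm T}$, so $\Sigma_X^k=\kappa^k(0)+\sum_{h\ge1}(\kappa^k(h)+\kappa^k(h)^{\mathrm T})$. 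Regrouping the tail into consecutive pairs, $\sum_{l\ge0}\Gamma^k(l)=\sum_{l\ge0}(\kappa^k(2l+1)+\kappa^k(2l+2))=\sum_{h\ge1}\kappa^k(h)$, whence $\sum_{l\ge0}(\Gamma^k(l)+\Gamma^k(l)^{\mathrm T})=\sum_{h\ge1}(\kappa^k(h)+\kappa^k(h)^{\mathrm T})$, giving the stated identity.

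The crux is a scalar form for the entries of $\Gamma^k(l)$. Writing $\phi_i(y)=1_{I_{\mathcal S^k}(f^k(y))=i}$, so that $1_{(I_{\mathcal S^k}(x^k_{t-1}),I_{\mathcal S^k}(x^k_t))=(m,n)}=\phi_m(y^k_{t-1})\phi_n(y^k_t)$, the Markov property of the reversible latent chain expresses each lagged four-point product as $\E[\phi_m(y_{t-1})\phi_n(y_t)\phi_p(y_{t+h-1})\phi_q(y_{t+h})]=\langle u_{nm},P^{h-1}u_{pq}\rangle_\mu$, where $P$ is the self-adjoint transition operator, $\mu$ its stationary law, and $u_{ij}=\phi_i\cdot(P\phi_j)$. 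Subtracting the product of means $\bar X^k_{mn}\bar X^k_{pq}=\langle u_{nm},E_{\{1\}}u_{pq}\rangle_\mu$ (valid since $\int u_{ij}\,d\mu=\bar X^k_{ij}$ and the eigenvalue-$1$ eigenspace of an irreducible $P$ is the constants) yields
\[
\Gamma^k(l)_{((m,n),(p,q))}=\int_{[-1,1)}\lambda^{2l}(1+\lambda)\,d\langle u_{nm},E_\lambda u_{pq}\rangle_\mu ,
\]
with $E_\lambda$ the spectral resolution of $P$; convergence of $\sum_h\kappa^k(h)$ excludes spectral mass at $\lambda=-1$. In particular, taking $(m,n,p,q)=(j,i,i,j)$ collapses both arguments to $u_{ij}$, so every summand of $\eta^k(l)=\sum_{i,j}\Gamma^k(l)_{((j,i),(i,j))}$ is a diagonal integral $\int_{[-1,1)}\lambda^{2l}(1+\lambda)\,d\|E_\lambda u_{ij}\|_\mu^2\ge0$, because $\lambda^{2l}(1+\lambda)\ge0$ on $[-1,1)$.

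For the $\|\cdot\|_{\max}$ bound in assertion 2, set $g_l(\lambda)=\lambda^{2l}(1+\lambda)\ge0$; then $\langle\!\langle f,g\rangle\!\rangle:=\int g_l\,d\langle f,E_\lambda g\rangle_\mu=\langle f,g_l(P)g\rangle_\mu$ is a positive-semidefinite symmetric form, and Cauchy--Schwarz gives $|\Gamma^k(l)_{((m,n),(p,q))}|=|\langle\!\langle u_{nm},u_{pq}\rangle\!\rangle|\le\langle\!\langle u_{nm},u_{nm}\rangle\!\rangle^{1/2}\langle\!\langle u_{pq},u_{pq}\rangle\!\rangle^{1/2}$. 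The two factors on the right are exactly the summands $\Gamma^k(l)_{((m,n),(n,m))}$ and $\Gamma^k(l)_{((q,p),(p,q))}$ of $\eta^k(l)$; since $\eta^k(l)$ is a sum of nonnegative terms, each factor is $\le\eta^k(l)$, so $|\Gamma^k(l)_{((m,n),(p,q))}|\le\eta^k(l)$ for every entry, i.e. $\eta^k(l)\ge\|\Gamma^k(l)\|_{\max}$. For the monotonicity I would use the collapsed form $\eta^k(l)=\int_{[-1,1)}\lambda^{2l}(1+\lambda)\,d\nu(\lambda)$, where $d\nu=\sum_{i,j}d\|E_\lambda u_{ij}\|_\mu^2$ is nonnegative, and subtract consecutive terms,
\[
\eta^k(l+1)-\eta^k(l)=\int_{[-1,1)}\lambda^{2l}(1+\lambda)^2(\lambda-1)\,d\nu(\lambda),
\]
which reduces the claimed comparison $\eta^k(l)\le\eta^k(l+1)$ of assertion 2 to the sign of the fixed factor $\lambda^{2l}(1+\lambda)^2(\lambda-1)$ over the spectrum $[-1,1]$; together with $\eta^k(l)\to0$ (no residual mass at $\lambda=\pm1$ off the constants) this pins down the ordering of consecutive terms.

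I expect the main obstacle to be the spectral reduction itself: rewriting the four-point transition-indicator correlations in the single-operator form $\langle u_{nm},P^{h-1}u_{pq}\rangle_\mu$ and justifying the functional calculus (including the absence of mass at $\lambda=\pm1$, which is where reversibility, irreducibility, and summability of $\kappa^k$ all enter) for a latent state space that may be continuous. Once that representation is in hand, both parts of assertion 2 are short — the $\|\cdot\|_{\max}$ bound is one application of Cauchy--Schwarz to a nonnegative-weight spectral form, and the monotonicity is the sign of a single polynomial in $\lambda$ on $[-1,1]$.
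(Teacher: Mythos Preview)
Your argument is correct and rests on the same reduction the paper uses: the lag-$(h{+}1)$ autocovariance of the pair indicators $\Delta C_t^k$ is, after swapping the first two indices, the lag-$h$ autocovariance of the functional $G_{ij}(y)=\phi_i(y)\,(P\phi_j)(y)$ of the latent reversible chain---precisely your $u_{ij}$. From that point the two proofs diverge only in packaging. The paper observes $\eta^k(l)=\mathrm{tr}\bigl(\kappa_G^k(2l)+\kappa_G^k(2l+1)\bigr)$, invokes Geyer's initial-sequence theorem to get that $\kappa_G^k(2l)+\kappa_G^k(2l+1)$ is positive semidefinite with nonincreasing trace, and then uses $\mathrm{tr}(A)\ge\|A\|_{\max}$ for PSD matrices; you instead unpack the spectral calculus that underlies Geyer's theorem and obtain the $\|\cdot\|_{\max}$ bound by Cauchy--Schwarz for the nonnegative form $\langle\cdot,g_l(P)\cdot\rangle_\mu$. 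So the ``main obstacle'' you anticipate---justifying the operator representation $\langle u_{nm},P^{h-1}u_{pq}\rangle_\mu$ on a possibly continuous latent space---is exactly what the paper hides inside the Geyer citation, and your version is simply more self-contained. One remark on the monotonicity: your integrand $\lambda^{2l}(1+\lambda)^2(\lambda-1)$ is nonpositive on $[-1,1]$, so your computation actually yields $\eta^k(l{+}1)\le\eta^k(l)$. That is what the paper's proof shows (``non-negative and decreasing'') and what the downstream window estimator needs; the inequality $\eta^k(l)\le\eta^k(l{+}1)$ in the displayed statement is a typographical slip for $\ge$.
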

\begin{proof}
See Appendix \ref{sec:Proof-of-Theorem-SigmaX-kappa}.
\end{proof}
~

The above theorem provides an intuitive way to estimate $\Sigma_{X}^{k}$:
\begin{equation}
\hat{\Sigma}_{X}^{k}=\hat{\kappa}^{k}\left(0\right)+\sum_{l=0}^{\left\lfloor \frac{M_{k}-3}{2}\right\rfloor }\left(\hat{\Gamma}^{k}\left(l\right)+\hat{\Gamma}^{k}\left(l\right)^{\mathrm{T}}\right)\label{eq:Sigma-X-estimate}
\end{equation}
where $\hat{\kappa}^{k}\left(0\right)$ and $\hat{\Gamma}^{k}\left(l\right)$
denote the estimates of $\kappa^{k}\left(0\right)$ and $\Gamma^{k}\left(l\right)$.
We now investigate the calculation of $\hat{\kappa}^{k}\left(0\right)$
and $\hat{\Gamma}^{k}\left(l\right)$.

\paragraph*{Estimation of $\kappa^{k}\left(0\right)$}

It is easy to verify that the $((i-1)\left|\mathcal{S}^{k}\right|+j,(m-1)\left|\mathcal{S}^{k}\right|+n)$-th
element of $\kappa^{k}\left(0\right)$ equals $1_{\left(i,j\right)=\left(m,n\right)}\bar{X}_{ij}^{k}-\bar{X}_{ij}^{k}\bar{X}_{mn}^{k}$,
therefore we can calculate the the element in the same position in
$\hat{\kappa}^{k}\left(0\right)$ by $1_{\left(i,j\right)=\left(m,n\right)}\check{X}_{ij}^{k}-\check{X}_{ij}^{k}\check{X}_{mn}^{k}$
with $\check{X}_{ij}^{k}=X_{ij}^{k}\left(\check{\rho}^{k}\right)$.

\paragraph*{Estimation of $\Gamma^{k}\left(l\right)$}

For $h>0$, $\kappa^{k}\left(h\right)$ can be estimated by the empirical
autocovariance:
\begin{equation}
\hat{\kappa}'{}^{k}\left(h\right)=\frac{1}{M_{k}}\sum_{t=1}^{M_{k}-h}\left(\mathcal{V}\left(\Delta C_{t}^{k}\right)-\mathcal{V}\left(\check{X}^{k}\right)\right)\left(\mathcal{V}\left(\Delta C_{t+h}^{k}\right)-\mathcal{V}\left(\check{X}^{k}\right)\right)^{\mathrm{T}}\label{eq:kappa-nonzero-h}
\end{equation}
where $\check{X}^{k}$ is an estimate of $\E\left[\Delta C_{t}^{k}\right]=\bar{X}^{k}$.
Then the $\Gamma^{k}\left(l\right)$ can be estimated as
\begin{equation}
\hat{\Gamma}'{}^{k}\left(l\right)=\hat{\kappa}'{}^{k}\left(2l+1\right)+\hat{\kappa}'{}^{k}\left(2l+2\right)\label{eq:Gamma-count}
\end{equation}
However, the estimation error \eqref{eq:Gamma-count} will increase
substantially as $l$ approaches to $\left\lfloor \left(M_{k}-3\right)/2\right\rfloor $.
So here we modify the $\hat{\Gamma}'{}^{k}\left(l\right)$ by correcting
the corresponding estimated value of $\eta^{k}\left(l\right)$:

\begin{equation}
\hat{\Gamma}^{k}\left(l\right)=\left\{ \begin{array}{ll}
\hat{\Gamma}'{}^{k}\left(l\right), & l=0\\
\min\left\{ \frac{\hat{\eta}^{k}\left(l-1\right)}{\hat{\eta}'{}^{k}\left(l\right)},1\right\} \cdot\hat{\Gamma}'{}^{k}\left(l\right), & l>1\text{ and }\hat{\eta}'{}^{k}\left(l\right)>0\\
0, & l>1\text{ and }\hat{\eta}'{}^{k}\left(l\right)\le0
\end{array}\right.\label{eq:Gamma-hat}
\end{equation}
where $\hat{\eta}'{}^{k}\left(l\right)$ and $\hat{\eta}^{k}\left(l\right)$
denote the values of $\eta^{k}\left(l\right)$ obtained from $\hat{\Gamma}'{}^{k}\left(l\right)$
and $\hat{\Gamma}^{k}\left(l\right)$. It can be seen that $\hat{\eta}^{k}\left(l\right)$
is non-negative and decreasing with $l$, which is consistent with
the conclusion of Theorem \ref{thm:SigmaX-kappa}. Besides, we can
show that $\hat{\Gamma}^{k}\left(l\right)\equiv0$ for $l\ge l_{n}$
if there exists an $l_{n}$ such that $\hat{\eta}^{k}\left(l_{n}\right)\le0$.
Thus the estimator of $\Sigma_{X}^{k}$ in this section is in fact
a time window estimator \cite{geyer1992practical} where the large-lag
terms outside the window are set to be zero, and the window size $l_{w}=\min\left\{ l|\hat{\eta}^{k}\left(l\right)\le0\right\} $
implies that the curve of $\left\Vert \Gamma^{k}\left(l\right)\right\Vert _{\max}$
goes below the noise level at $l=l_{w}$.

\begin{remark}{\rm From the definition of $\Sigma_{X}^{k}$ we can
deduce that $\Sigma_{X}^{k}\ge0$ and $\mathbf{1}^{\mathrm{T}}\Sigma_{X}^{k}\mathbf{1}=0$,
but the $\hat{\Sigma}_{X}^{k}$ obtained by \eqref{eq:Sigma-X-estimate}
may not satisfy the constraints. For this problem, we can correct
the value of $\hat{\Sigma}_{X}^{k}$ as $\hat{\Sigma}_{X}^{k}:=\mathcal{M}_{1}\left(\mathcal{M}_{P}\left(\hat{\Sigma}_{X}^{k}\right)\right)$,
where $\mathcal{M}_{P}\left(G\right)=G-\min\left\{ \lambda_{\min}\left(G\right),0\right\} \mathbf{I}$
can map a symmetric matrix to a positive-semidefinite matrix with
$\lambda_{\min}\left(G\right)$ denoting the smallest eigenvalue of
$G$, and $\mathcal{M}_{1}\left(G\right)=\left(\mathbf{I}-\frac{1}{n}\mathbf{1}\mathbf{1}^{\mathrm{T}}\right)\cdot G\cdot\left(\mathbf{I}-\frac{1}{n}\mathbf{1}\mathbf{1}^{\mathrm{T}}\right)$
for $G\in\mathbb{R}^{n\times n}$ is a mapping from the positive-semidefinite
matrix set to the set $\left\{ \Sigma|\Sigma\ge0,\mathbf{1}^{\mathrm{T}}\Sigma\mathbf{1}=0\right\} $.}\end{remark}

\section{Numerical experiments\label{sec:Numerical-experiments}}

In this section, the approximate MLE proposed in this paper will be
applied to some numerical examples of multiple biased simulations,
and the performance will be compared to that of WHAM and MMMM. For
convenience, here we denote a set of multiple biased simulations described
in Section \ref{sec:Background} by $\mathrm{MBS}\left(K,M_{0}\right)$
if there are $K$ biased simulations and each simulation has the same
length with $M_{k}\equiv M_{0}$.

\subsection{Umbrella sampling with Markovian simulations\label{sub:Umbrella-sampling-Markov}}

Umbrella sampling is a commonly used biased simulation technique,
where each biasing potential (also called ``umbrella potential'')
is designed to confine the system around some region of state space
and achieve a more efficient sampling especially at transition states
which the unbiased simulation would visit only rarely. In this example,
the umbrella sampling simulations are employed on a reference system
with state set $\mathcal{S}=\left\{ s_{i}=-5+10\left(i-1\right)/99|i=1,\ldots,100\right\} $
and free energy $V=\left[V_{i}\right]=[0.25s_{i}^{4}-5s_{i}^{2}-9.9874]$.
As shown in Fig.~\ref{fig:projector}, the reference system has two
metastable states centered at $A$ and $B$, and the switching between
metastable states is blocked by an energy barrier with peak position
$O$.

\begin{figure}
\begin{centering}
\includegraphics[width=0.4\textwidth]{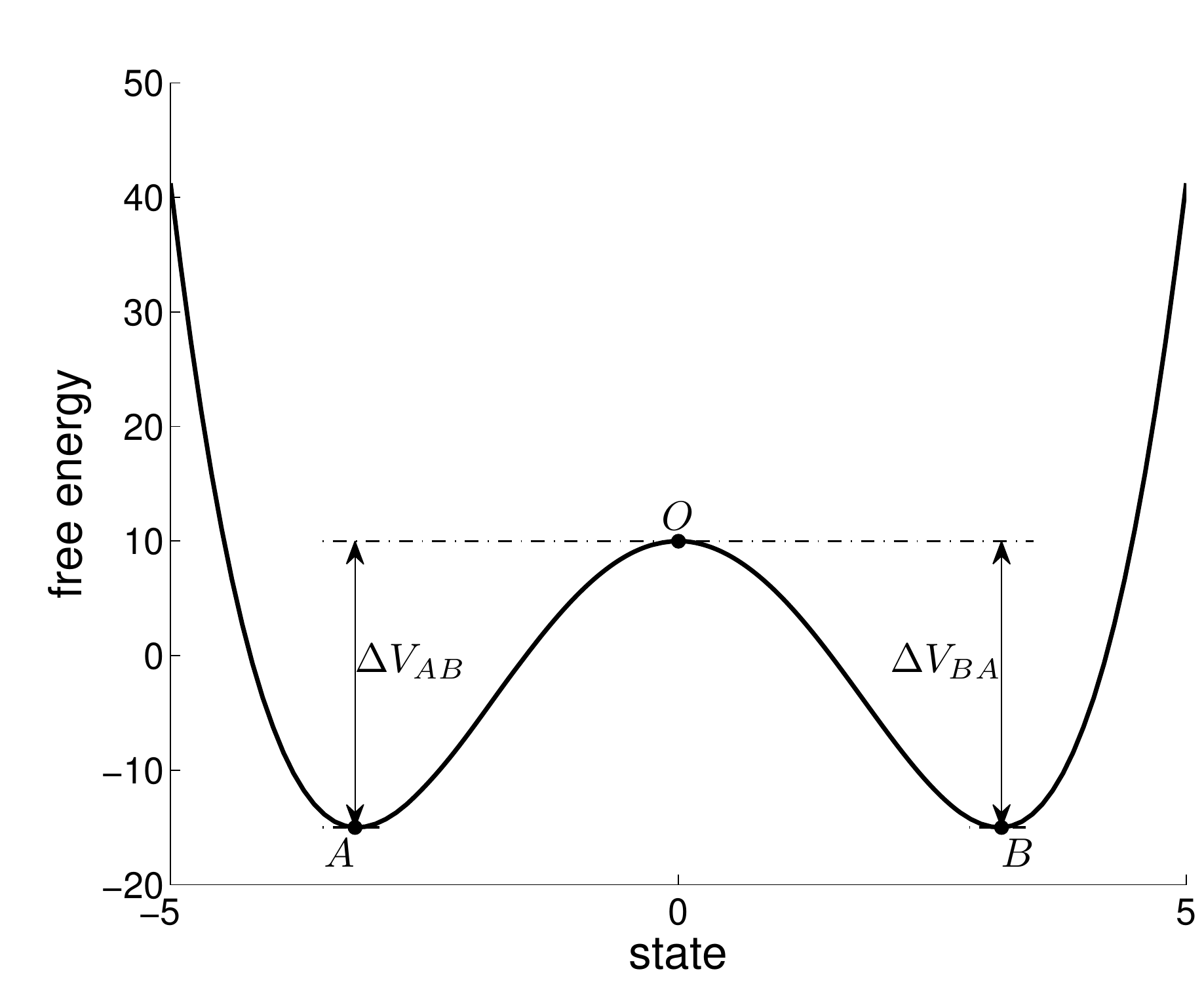}
\par\end{centering}

\caption{\label{fig:Free-energy-profile-Markov}Free energy profile of the
reference system, which has two potential wells with minima at $A$
and $B$ separated by an energy barrier. The highest energy position
$O$ of the barrier represents the transition state, and the energy
barrier heights for transitions $A\to B$ and $B\to A$ are defined
as $\Delta V_{AB}=\left|V_{O}-V_{A}\right|$ and $\Delta V_{BA}=\left|V_{O}-V_{B}\right|$
with $V_{A}$, $V_{B}$ and $V_{O}$ the potentials of $A$, $B$
and $O$. }
\end{figure}

For umbrella sampling simulations, we design the following $15$ different
biased potentials:
\begin{equation}
U^{k}=\left[U_{i}^{k}\right]=\left[4\left(s_{i}+\frac{15}{14}k-\frac{60}{7}\right)^{2}\right],\quad1\le k\le15
\end{equation}
Note that these potentials will be repeatedly used if the simulation
number is larger than $15$, i.e., $U^{k}=U^{\left(\left(k-1\right)\bmod15\right)+1}$
if $k>15$. The simulation trajectory $x_{0:M_{k}}^{k}$ is generated
by a Metropolis simulation model (see Appendix \ref{sec:Metropolis-sampling-model}
for details), which is a reversible Markov chain with initial distribution
\begin{equation}
\Pr\left(x_{0}^{k}=s_{i}\right)\propto\exp\left(-U_{i}^{k}\right)
\end{equation}
and stationary distribution
\begin{equation}
\pi_{i}^{k}\propto\exp\left(-V_{i}-U_{i}^{k}\right)
\end{equation}

The comparisons between the estimation methods are based on the mean
error of approximations of energy barrier heights:
\begin{equation}
e_{\Delta V}=\frac{1}{2}\left(\left|\Delta V_{AB}-\Delta V_{AB}^{\mathrm{approx}}\right|+\left|\Delta V_{BA}-\Delta V_{BA}^{\mathrm{approx}}\right|\right)
\end{equation}
where the definitions of $\Delta V_{AB}$ and $\Delta V_{BA}$ are
given in Fig.~\ref{fig:projector}, and the superscript ``$\mathrm{approx}$''
represents the approximate value obtained from the estimated $V$.

We first set $K=15$ and $M_{0}=500,910,1657,3017,5493,10000$, and
perform $30$ independent $\mathrm{MBS}\left(K,M_{0}\right)$ for
each value of $M_{0}$. Fig.~\ref{fig:M1-mse} displays the average
$e_{\Delta V}$ of the approximate MLE, MMMM and WHAM for different
$M_{0}$, and Figs.~\ref{fig:M1-shortest} and \ref{fig:M1-longest}
show the estimates of $V$ obtained from a run of $\mathrm{MBS}\left(15,500\right)$
and $\mathrm{MBS}\left(15,10000\right)$. It can be seen that the
estimation errors of all the three methods decrease with increasing
simulation length, and the proposed approximate MLE performs significantly
better than the other two methods. Note that MMMM is also a Markov
chain model based method, but its performance turns out to be worse
than WHAM in this numerical experiment, especially for large simulation
lengths $M_{0}$.

Next, we validate whether the estimation methods can reconstruct the
free energy $V$ from very short simulations. Here we fix the total
simulation time $M_{0}K$, and set $M_{0}=\left[22500/K\right]$ with
$K=45,90,135,180,225,270$. The estimation results are summarized
in Figs.~\ref{fig:M2-mse}, \ref{fig:M2-longest} and \ref{fig:M2-shortest}.
(The $1\sigma$ confidence intervals in Fig.~\ref{fig:M2-shortest}
are provided by using the sample standard deviation of $\check{V}$
calculated from the $30$ independent runs of $\mathrm{MBS}\left(270,83\right)$
because the simulation length is too short such that the error analysis
approach in Section \ref{sub:Error-analysis} is not applicable.)
It should be noticed the equilibrium assumption used by WHAM does
not hold if $M_{0}$ is too small, because the initial distributions
of simulations differ from the biased stationary distributions (see
Appendix \ref{sec:Metropolis-sampling-model}). Therefore the estimation
accuracy of WHAM is reduced when the individual simulation lengths
are shorter although the total data size stays almost the same. In
contrast, the proposed approximate MLE and MMMM are less affected
by the change in the length of individual simulations. This is because
these methods rely on having local rather then global equilibrium
assumptions. Furthermore, the proposed method outperforms both WHAM
and MMMM in this numerical experiment.

\begin{figure}
\subfloat[\label{fig:M1-mse}Average $e_{\Delta V}$ calculated over $30$ independent
runs of $\mathrm{MBS}\left(K,M_{0}\right)$ for $K=15$ and $M_{0}=500,910,1657,3017,5493,10000$.]{\includegraphics[width=0.45\textwidth]{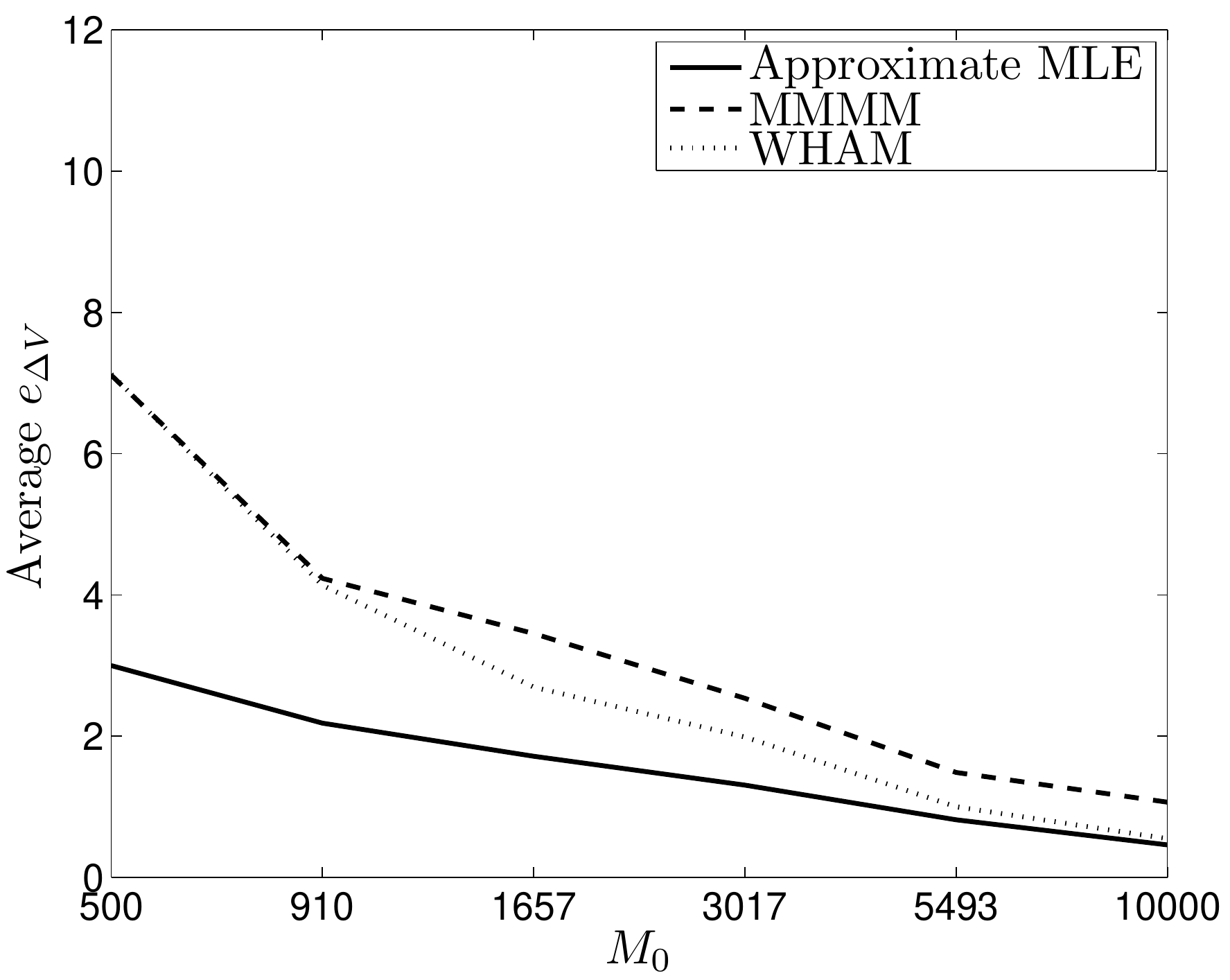}}\hfill{}\subfloat[{\label{fig:M2-mse}Average $e_{\Delta V}$ calculated over $30$ independent
runs of $\mathrm{MBS}\left(K,M_{0}\right)$ for $K=45,90,135,180,225,270$
and $M_{0}=\left[22500/K\right]$.}]{\includegraphics[width=0.45\textwidth]{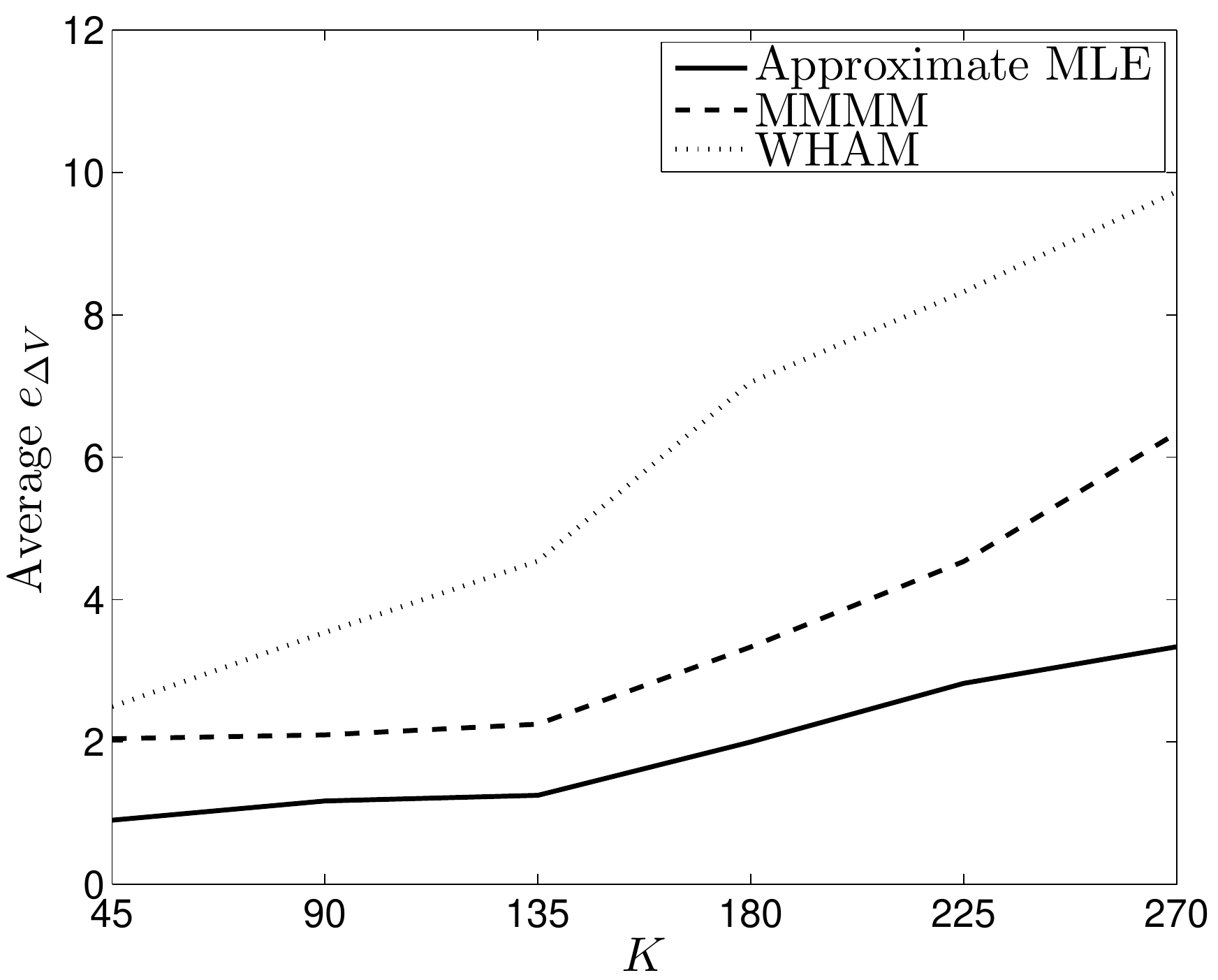}}

\subfloat[\label{fig:M1-shortest}Estimates of $V$ generated by the different
estimators on a run of $\mathrm{MBS}\left(15,500\right)$ where the
$e_{\Delta V}$ of approximate MLE $=3.5015$, $e_{\Delta V}$ of
MMMM $=6.3153$ and $e_{\Delta V}$ of WHAM $=6.3500$.]{\includegraphics[width=0.45\textwidth]{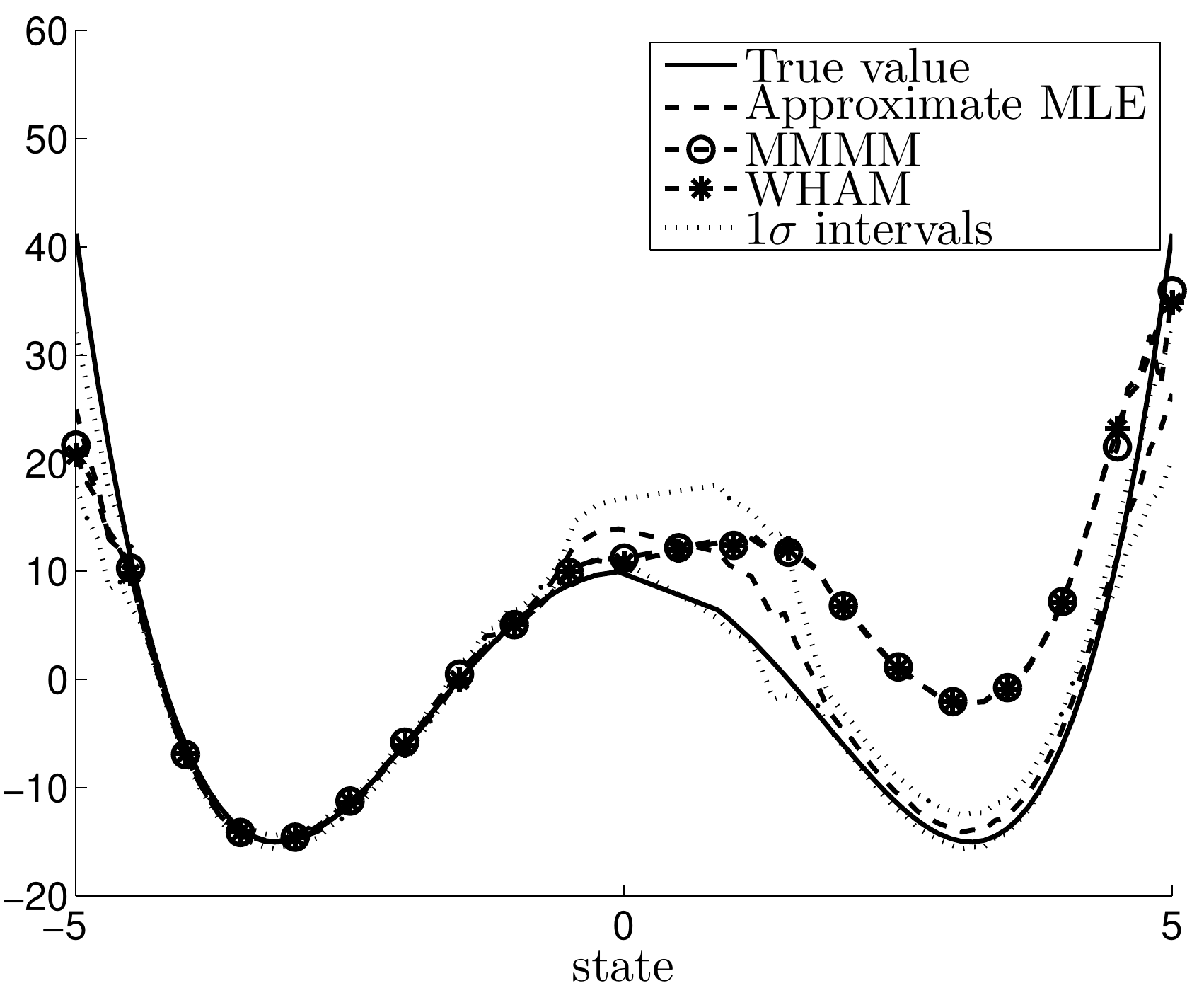}}\hfill{}\subfloat[\label{fig:M1-longest}Estimates of $V$ generated by the different
estimators on a run of $\mathrm{MBS}\left(15,10000\right)$ where
the $e_{\Delta V}$ of approximate MLE $=0.3060$, $e_{\Delta V}$
of MMMM $=0.6002$ and $e_{\Delta V}$ of WHAM $=0.3397$.]{\includegraphics[width=0.45\textwidth]{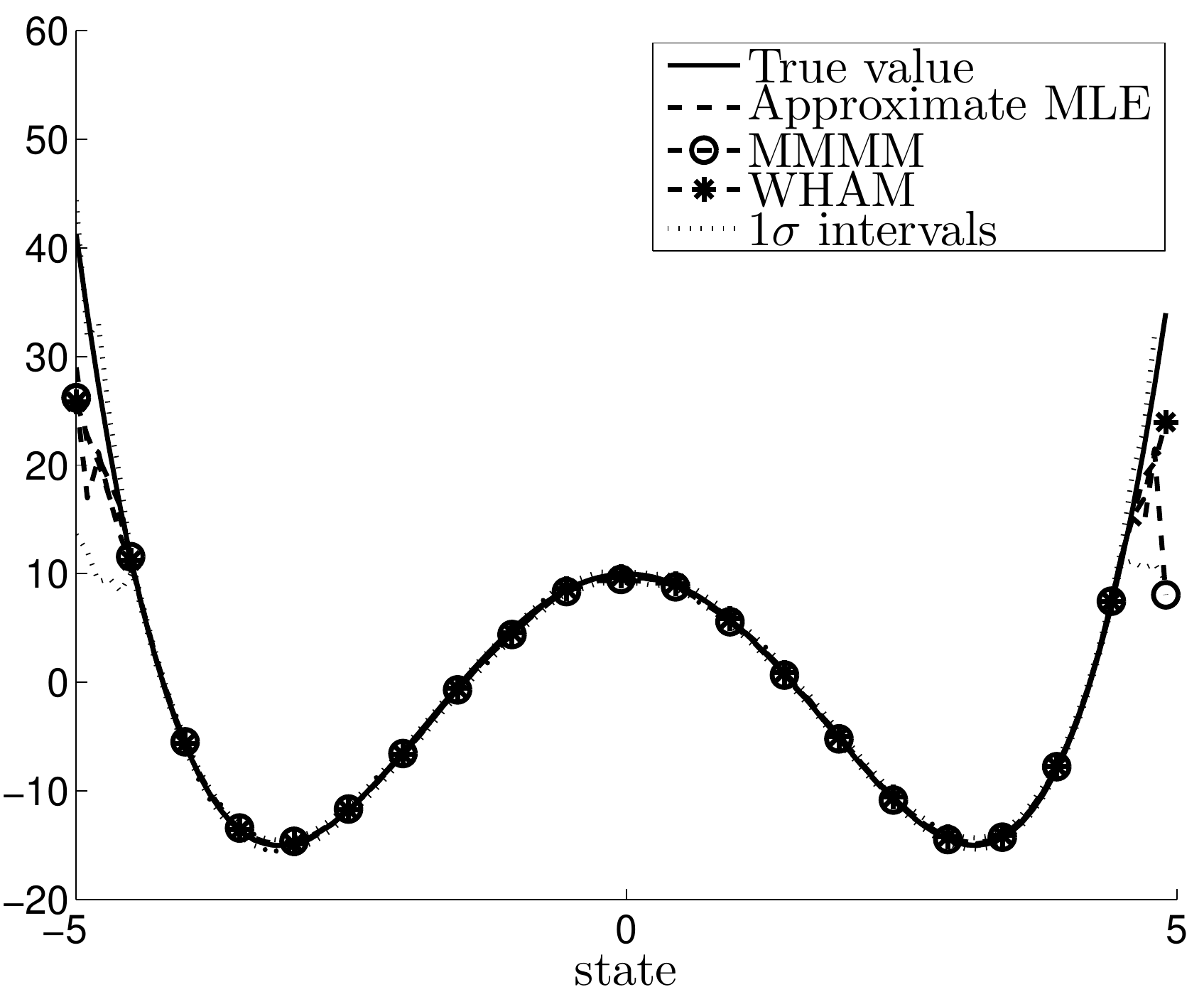}}

\subfloat[\label{fig:M2-longest}Estimates of $V$ generated by the different
estimators on a run of $\mathrm{MBS}\left(45,500\right)$ where the
$e_{\Delta V}$ of approximate MLE $=0.4570$, $e_{\Delta V}$ of
MMMM $=1.5012$ and $e_{\Delta V}$ of WHAM $=1.8948$.]{\includegraphics[width=0.45\textwidth]{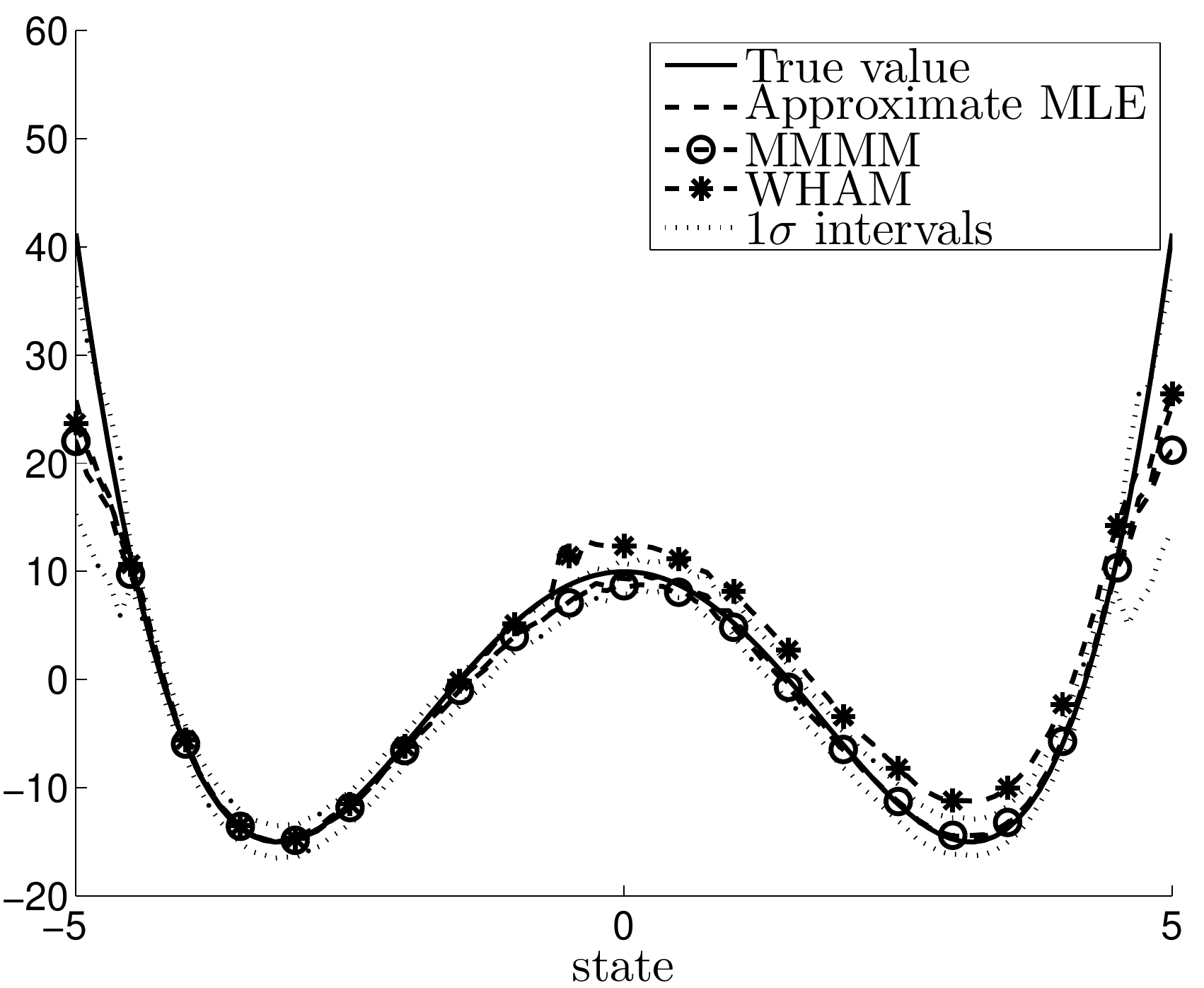}}\hfill{}\subfloat[\label{fig:M2-shortest}Estimates of $V$ generated by the different
estimators on a run of $\mathrm{MBS}\left(270,83\right)$ where the
$e_{\Delta V}$ of approximate MLE $=3.1939$, $e_{\Delta V}$ of
MMMM $=4.2559$ and $e_{\Delta V}$ of WHAM $=7.2764$.]{\includegraphics[width=0.45\textwidth]{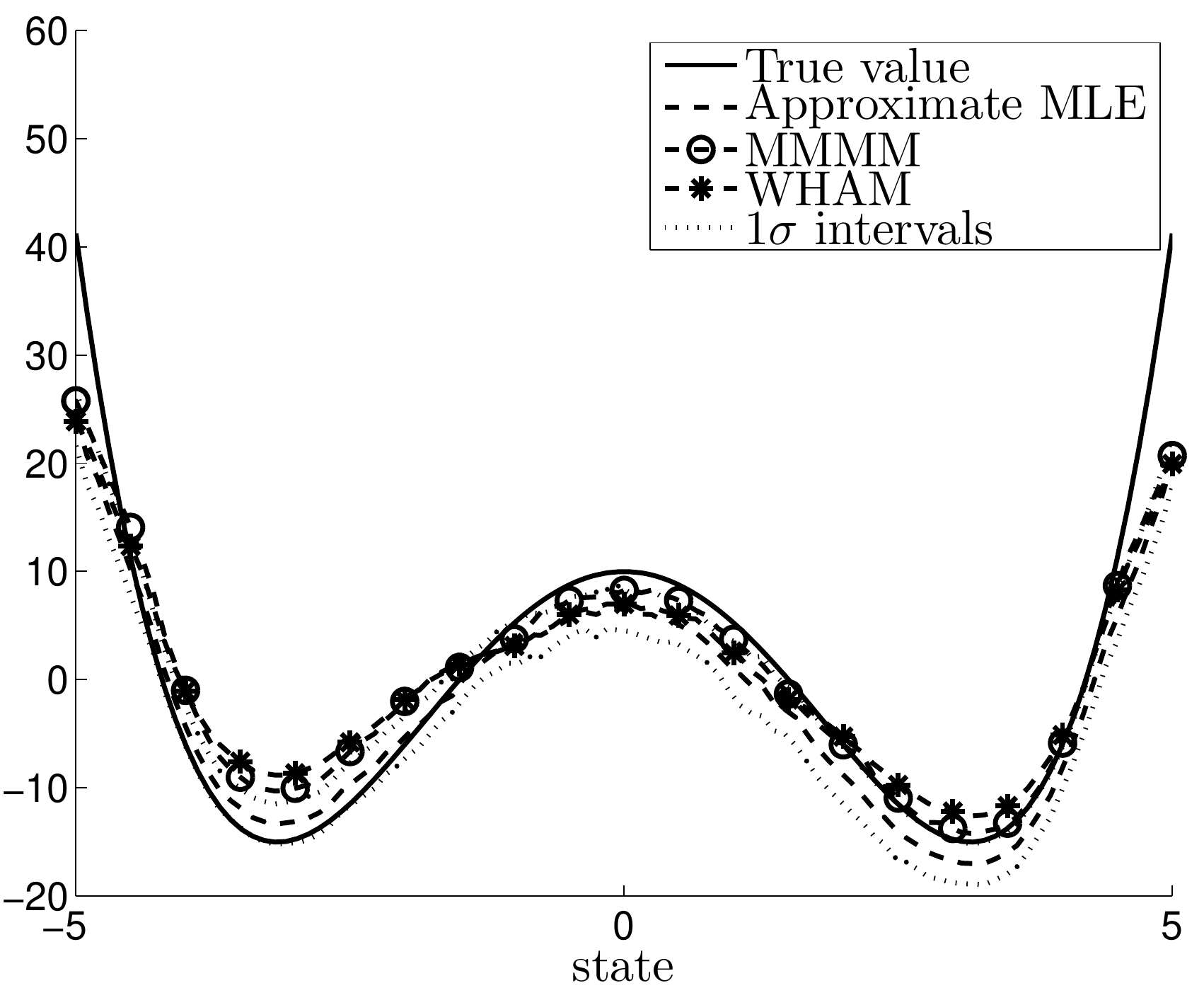}}

\caption{Estimation results of umbrella sampling with Markovian simulations.
The $1\sigma$ confidence intervals in (c), (d) and (e) are obtained
by the approach described in Section \ref{sub:Error-analysis}, and
those in (f) are obtained from the sample standard deviation of $\check{V}$
in the $30$ independent runs.}
\end{figure}

\subsection{Umbrella sampling with non-Markovian simulations\label{sub:Umbrella-sampling-nonMarkov}}

We now consider the estimation problem from an umbrella sampling simulation
in the case that the Markov assumption does not hold, i.e., the bins
used to estimate the free energy do not correspond to the Markov states
of the underlying simulation. The simulation model and the other settings
in this section is basically the same as in Section \ref{sub:Umbrella-sampling-Markov}
except that the state set is defined as $\mathcal{S}=\left\{ \bar{s}_{1},\ldots,\bar{s}_{10}\right\} $
with $\bar{s}_{1}=\left\{ s_{1},\ldots,s_{10}\right\} $, $\bar{s}_{2}=\left\{ s_{11},\ldots,s_{15}\right\} $,
$\bar{s}_{3}=\left\{ s_{16},\ldots,s_{20}\right\} $ \ldots{} $\bar{s}_{17}=\left\{ s_{86},\ldots,s_{90}\right\} $
and $\bar{s}_{18}=\left\{ s_{91},\ldots,s_{100}\right\} $. It is
clear that the observed state sequences in simulations do not satisfy
the Markov property with this definition of states.

We utilize the three methods to approximate the free energy $V$ by
using the non-Markovian simulation data, and the estimation results
with different $\left(K,M_{0}\right)$ are shown in Fig.~\ref{fig:N1N2-mse},
where $e_{\Delta V}$ is defined in the same way as in Section \ref{sub:Umbrella-sampling-Markov}
with $A,B$ and $O$ the local minimum and peak positions in $\mathcal{S}$.
As observed from the figures, the estimates obtained from the approximate
MLE are more precise than those obtained from the other estimators
for various values of $\left(K,M_{0}\right)$.

\begin{figure}
\subfloat[\label{fig:N1-mse}$e_{\Delta V}$ calculated over $30$ independent
runs of $\mathrm{MBS}\left(K,M_{0}\right)$ for $K=15$ and $M_{0}=500,910,1657,3017,5493,10000$.]{\includegraphics[width=0.45\textwidth]{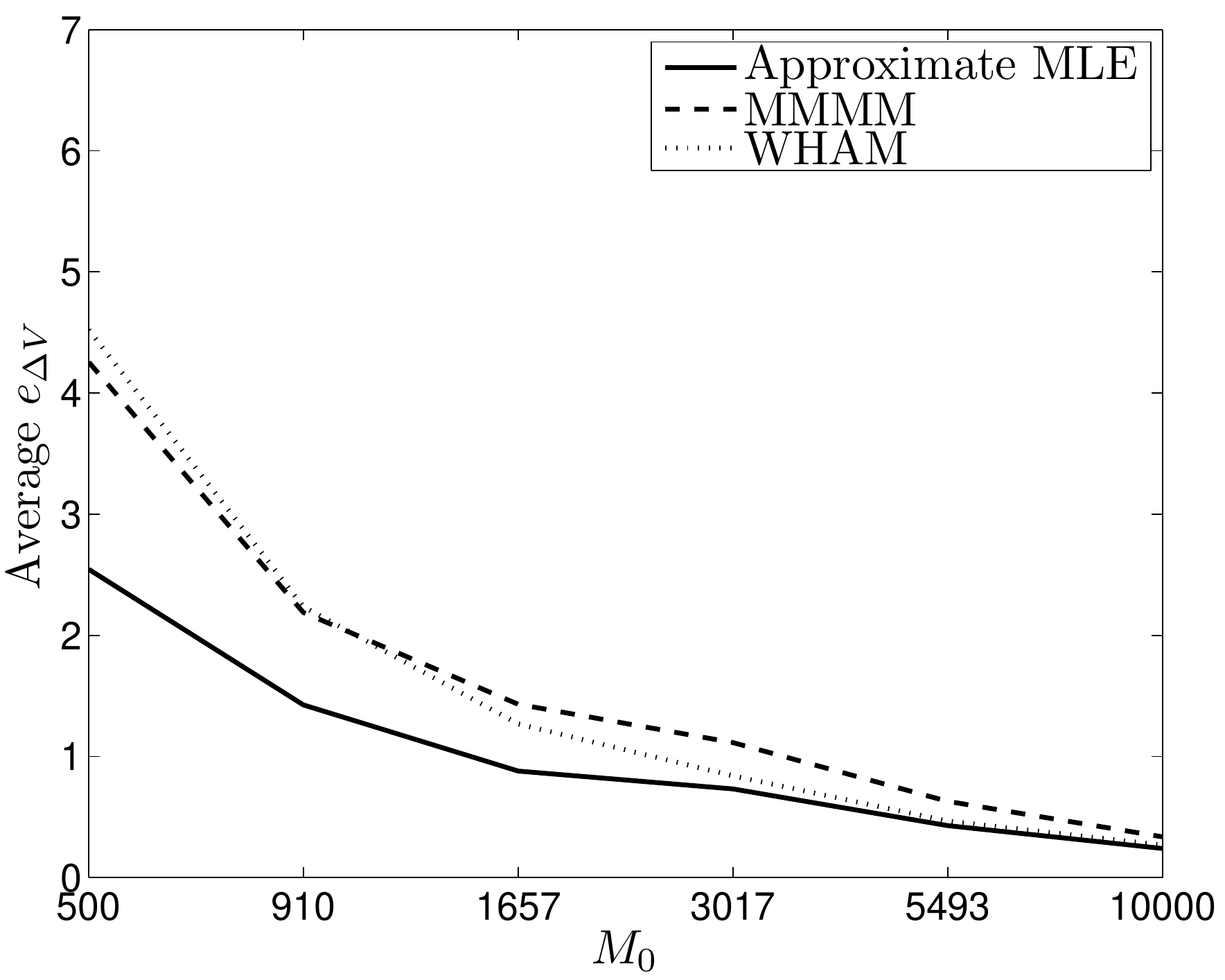}}\hfill{}\subfloat[{\label{fig:N2-mse}Average $e_{\Delta V}$ calculated over $30$ independent
runs of $\mathrm{MBS}\left(K,M_{0}\right)$ for $K=45,90,135,180,225,270$
and $M_{0}=\left[22500/K\right]$.}]{\includegraphics[width=0.45\textwidth]{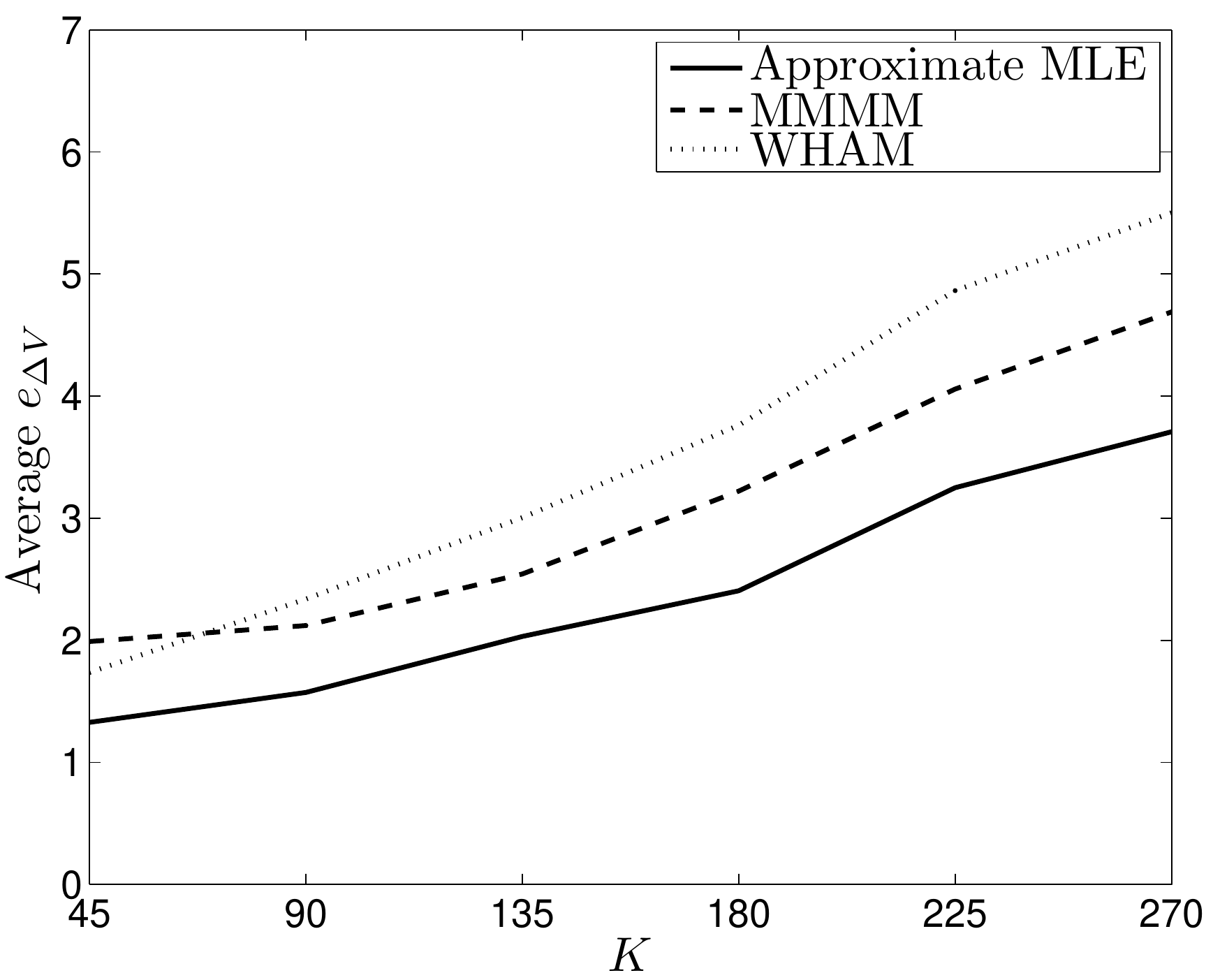}}

\subfloat[\label{fig:N1-shortest}Estimates of $V$ generated by the different
estimators on a run of $\mathrm{MBS}\left(15,500\right)$ where the
$e_{\Delta V}$ of approximate MLE $=2.4208$, $e_{\Delta V}$ of
MMMM $=6.2008$ and $e_{\Delta V}$ of WHAM $=6.1413$.]{\includegraphics[width=0.45\textwidth]{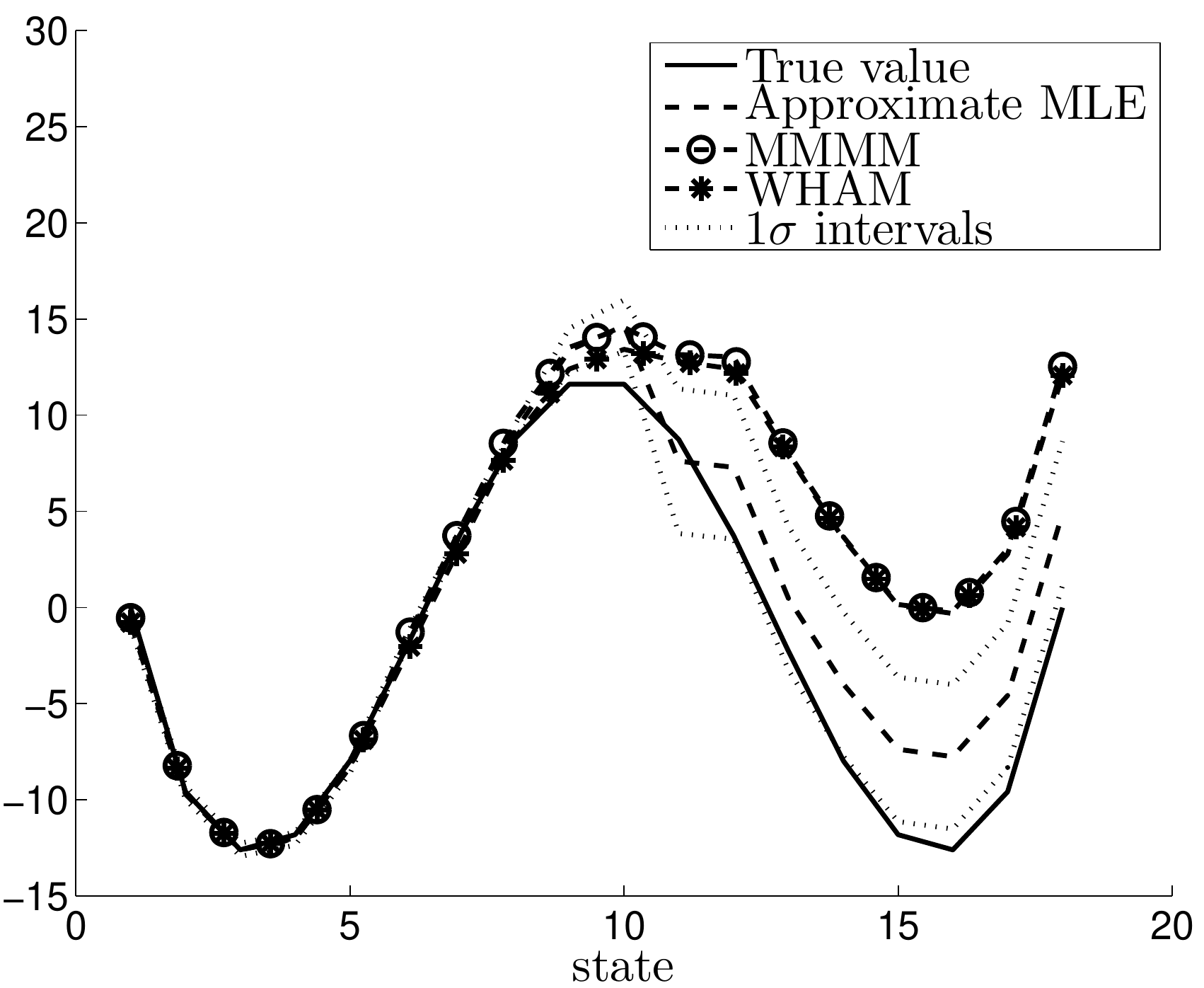}}\hfill{}\subfloat[\label{fig:N1-longest}Estimates of $V$ generated by the different
estimators on a run of $\mathrm{MBS}\left(15,10000\right)$ where
the $e_{\Delta V}$ of approximate MLE $=0.4532$, $e_{\Delta V}$
of MMMM $=0.6289$ and $e_{\Delta V}$ of WHAM $=0.4711$.]{\includegraphics[width=0.45\textwidth]{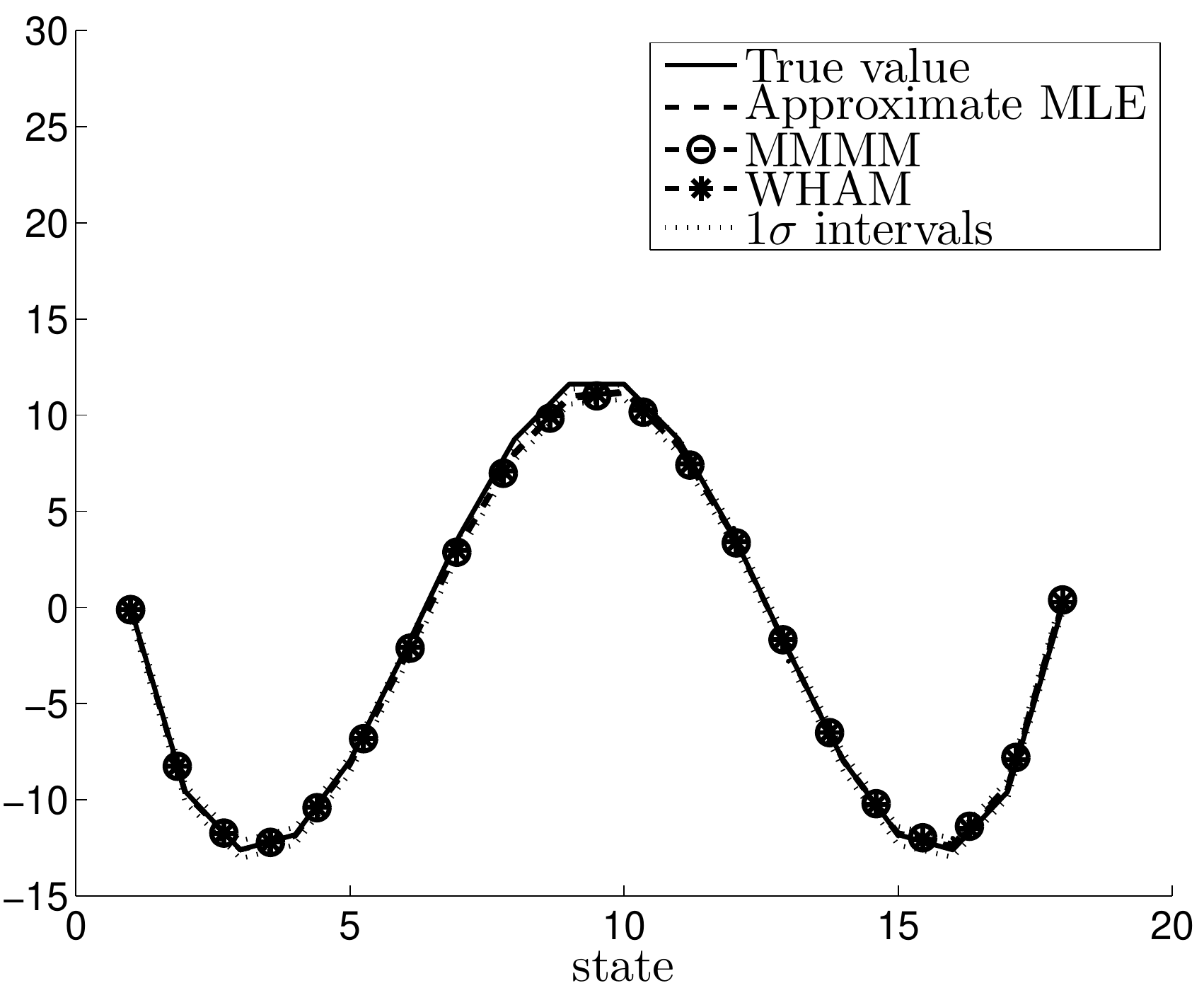}}

\subfloat[\label{fig:N2-longest}Estimates of $V$ generated by the different
estimators on a run of $\mathrm{MBS}\left(45,500\right)$ where the
$e_{\Delta V}$ of approximate MLE $=0.4989$, $e_{\Delta V}$ of
MMMM $=1.6107$ and $e_{\Delta V}$ of WHAM $=0.6257$.]{\includegraphics[width=0.45\textwidth]{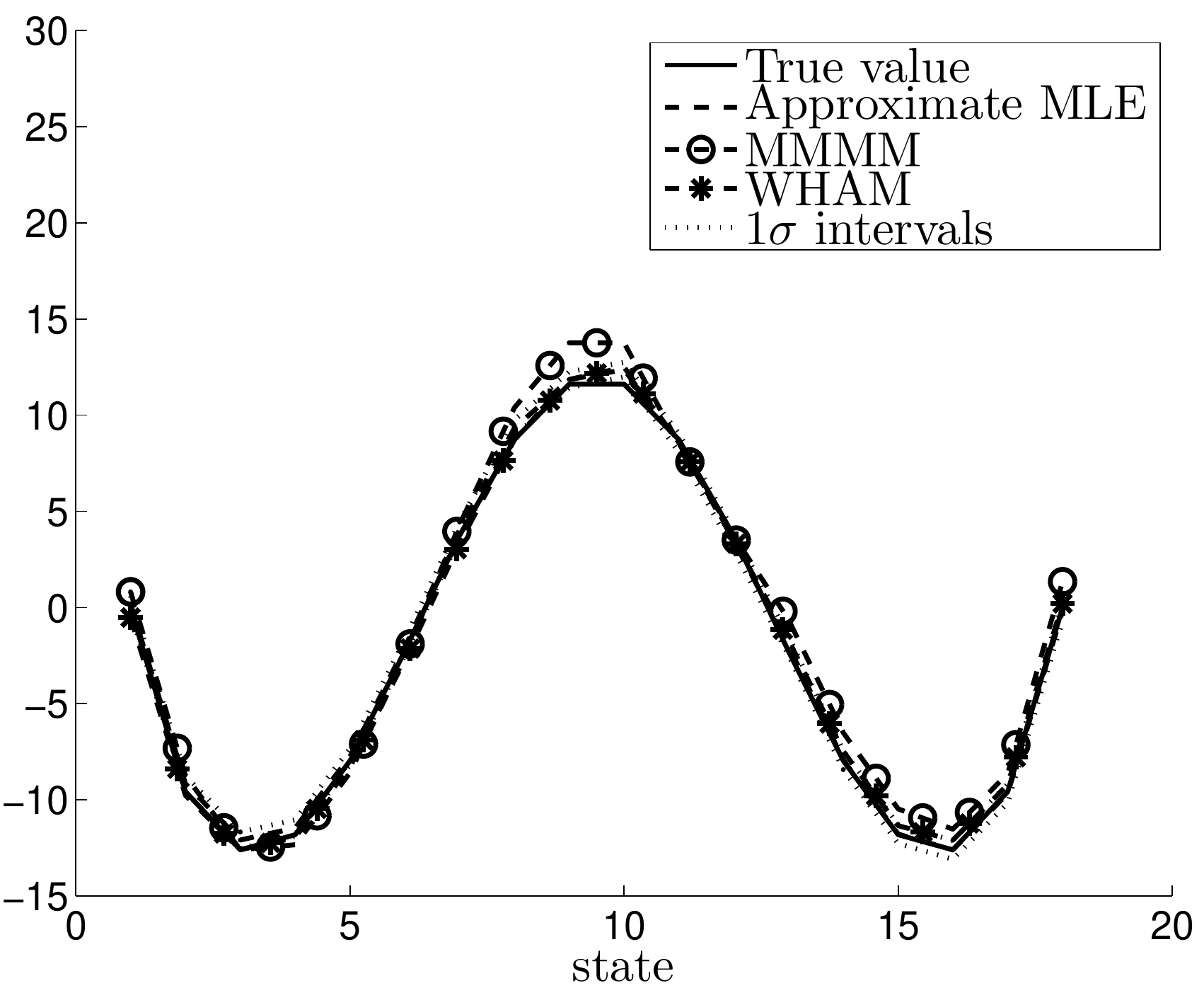}}\hfill{}\subfloat[\label{fig:N2-shortest}Estimates of $V$ generated by the different
estimators on a run of $\mathrm{MBS}\left(270,83\right)$ where the
$e_{\Delta V}$ of approximate MLE $=2.5862$, $e_{\Delta V}$ of
MMMM $=3.6059$ and $e_{\Delta V}$ of WHAM $=4.7896$.]{\includegraphics[width=0.45\textwidth]{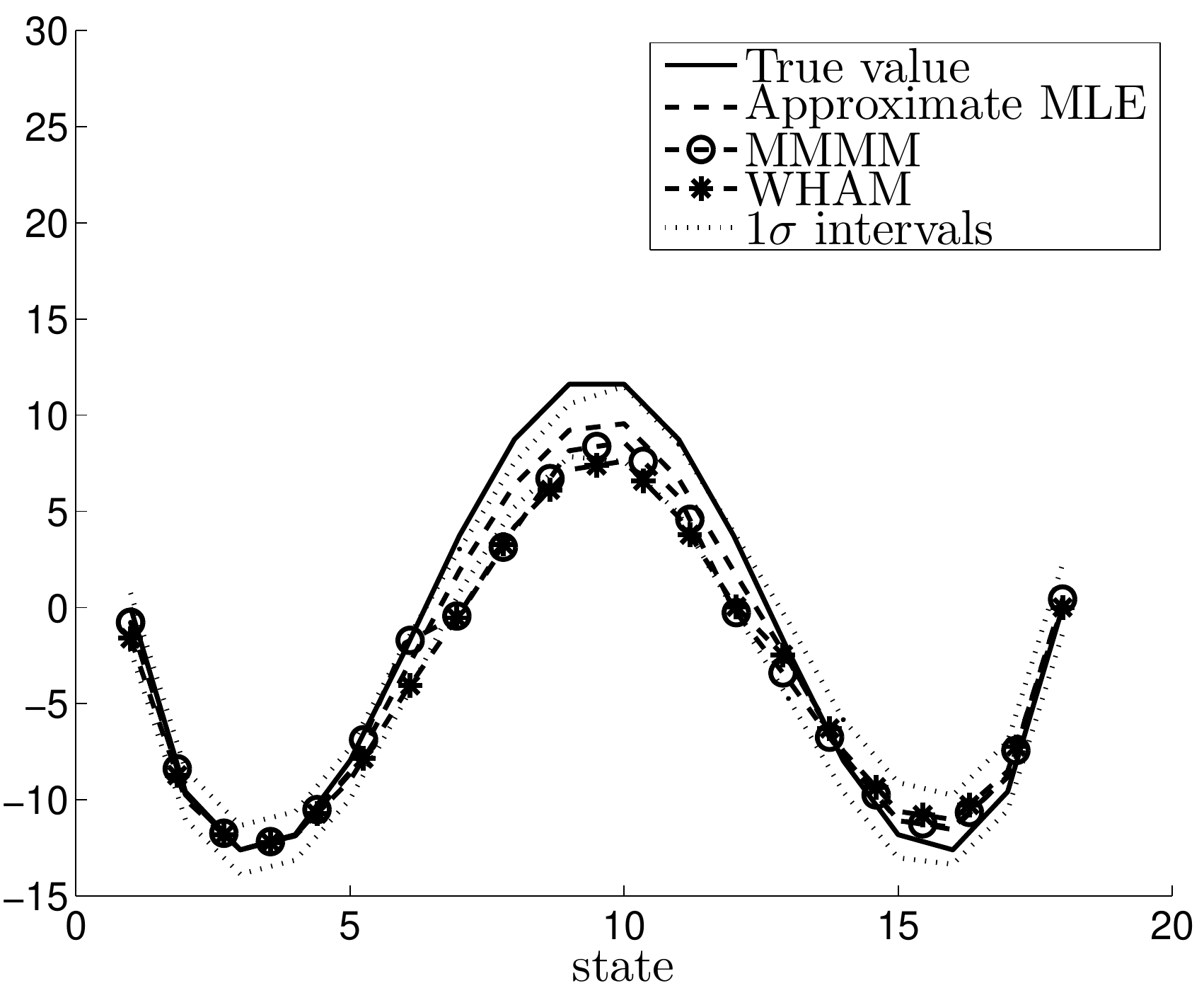}}

\caption{\label{fig:N1N2-mse}Estimation results of umbrella sampling with
non-Markovian simulations. The $1\sigma$ confidence intervals in
(c), (d) and (e) are obtained by the approach described in Section
\ref{sub:Error-analysis}, and those in (f) are obtained from the
sample standard deviation of $\check{V}$ in the $30$ independent
runs.}
\end{figure}

\subsection{Metadynamics with Markovian simulations\label{sub:Metadynamics-Markov}}

Metadynamics is another biased simulation technique often employed
in computational physics and chemistry, which is able to escape local
free energy minima and improve the searching properties of simulations
through iteratively modifying the biasing potential. Given $K,M_{0}$
and a reference system as in Section \ref{sub:Umbrella-sampling-Markov},
a metadynamics procedure can also be expressed as a run of $\mathrm{MBS}\left(K,M_{0}\right)$
with
\begin{equation}
U_{i}^{k}=\left\{ \begin{array}{ll}
0, & k=1\\
U_{i}^{k-1}+u_{c}\left(s_{i}|x_{M_{0}}^{k-1}\right), & k>1
\end{array}\right.,\quad\text{for }k=1,\ldots,K
\end{equation}
where $u_{c}\left(s|x\right)$ denote a Gaussian function of $s$
centered at $x$. Thus, for each of the $K$ simulations in a $\mathrm{MBS}$
run, a Gaussian hat is added to the potential at the last point of
the previous simulation. This effectively fills up the potential energy
basins with increasing $k$. Ultimately the effective potential becomes
approximately flat. Here we define $u_{c}\left(s|x\right)=5\exp\left(-\left(s-x\right)^{2}\right)$,
and the simulation data $x_{0:M_{0}}^{k}$ is also generated by the
Metropolis sampling model with $x_{0}^{k}=x_{M_{0}}^{k-1}$.

The three estimation methods are applied to reconstruct the free energy
of the reference system by data generated by metadynamics with different
$\left(K,M_{0}\right)$, and the estimation results are shown in Fig.~\ref{fig:M3M4-mse}.
The superior performance of the presented method is clearly evident
from the figures.

\begin{figure}
\subfloat[Average $e_{\Delta V}$ calculated over $30$ independent runs of
$\mathrm{MBS}\left(K,M_{0}\right)$ for $K=15$ and $M_{0}=500,910,1657,3017,5493,10000$.]{\includegraphics[width=0.45\textwidth]{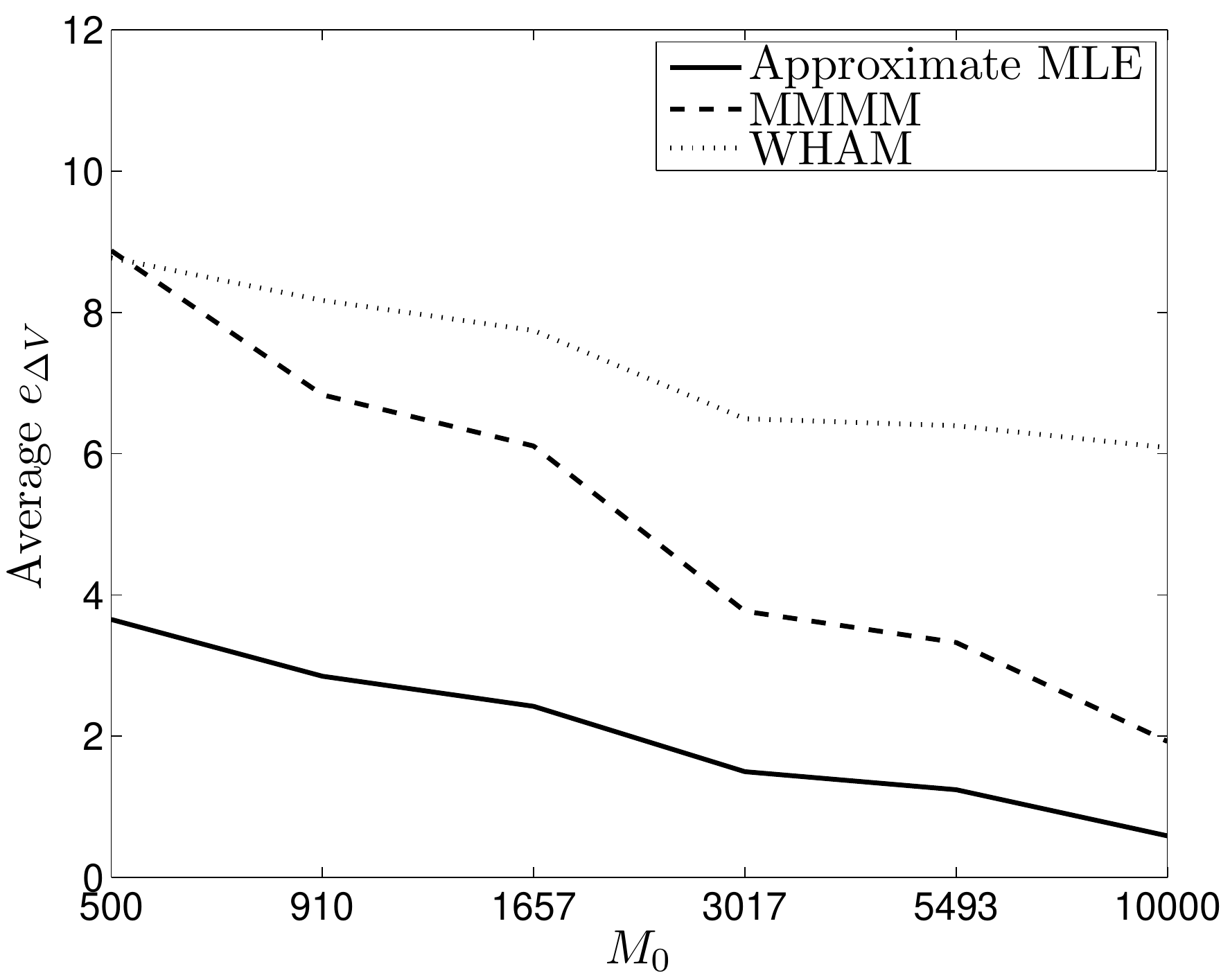}}\hfill{}\subfloat[{Average $e_{\Delta V}$ calculated over $30$ independent runs of
$\mathrm{MBS}\left(K,M_{0}\right)$ for $K=40,80,120,160,200$ and
$M_{0}=\left[2000/K\right]$.}]{\includegraphics[width=0.45\textwidth]{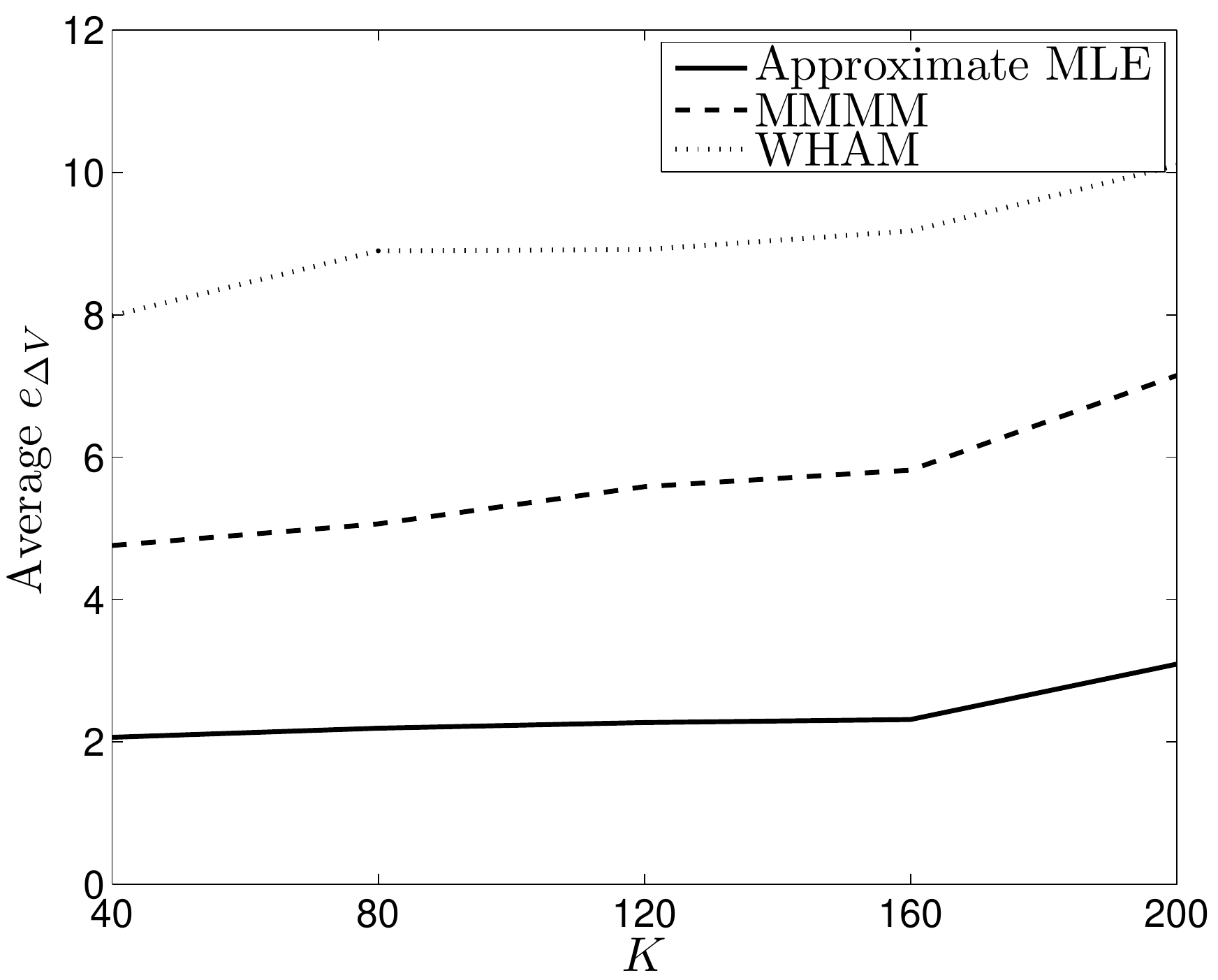}}

\subfloat[Estimates of $V$ generated by the different estimators on a run of
$\mathrm{MBS}\left(15,500\right)$ where the $e_{\Delta V}$ of approximate
MLE $=6.2858$, $e_{\Delta V}$ of MMMM $=11.9022$ and $e_{\Delta V}$
of WHAM $=12.4188$.]{\includegraphics[width=0.45\textwidth]{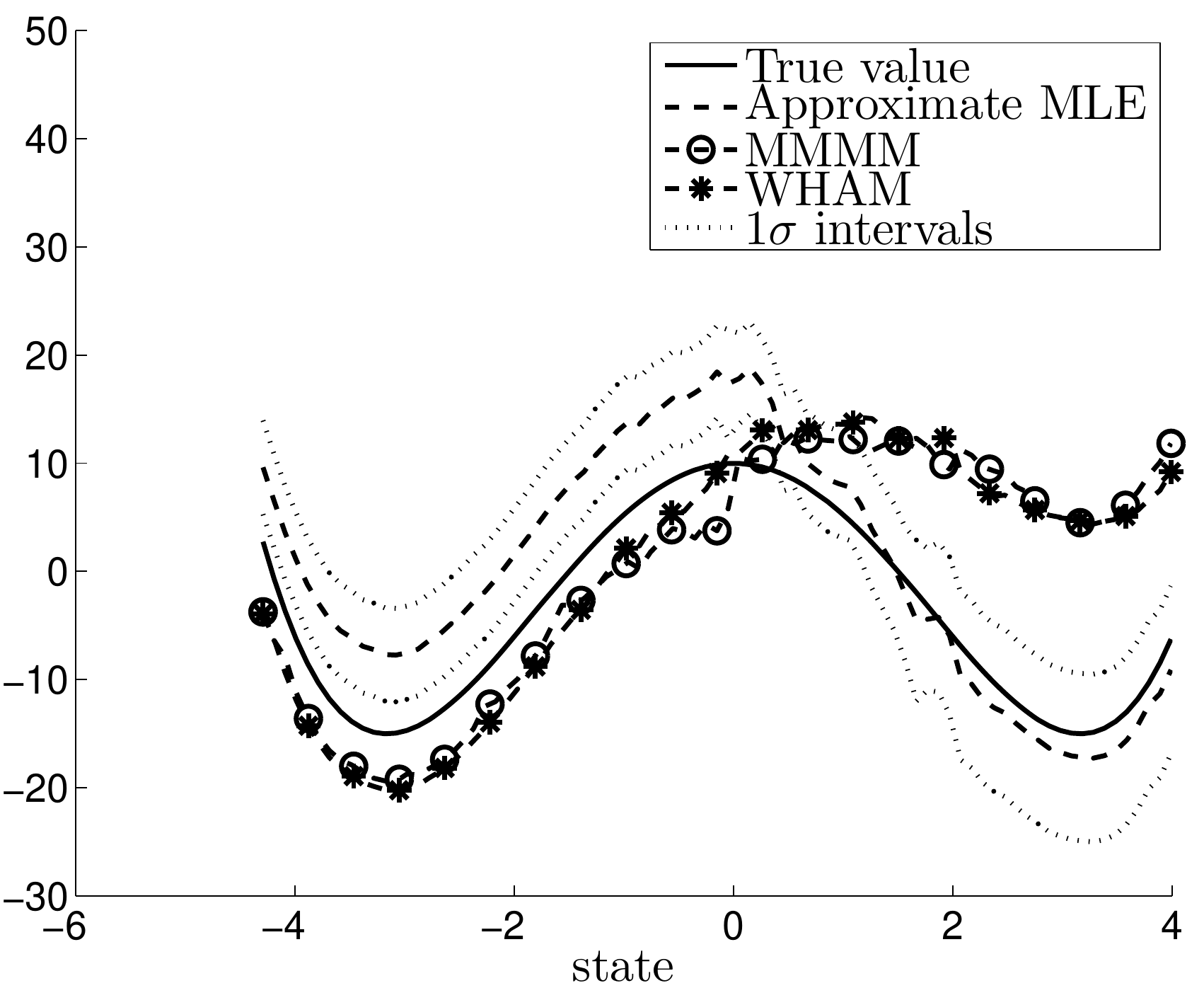}}\hfill{}\subfloat[Estimates of $V$ generated by the different estimators on a run of
$\mathrm{MBS}\left(15,10000\right)$ where the $e_{\Delta V}$ of
approximate MLE $=0.8246$, $e_{\Delta V}$ of MMMM $=3.1642$ and
$e_{\Delta V}$ of WHAM $=5.7601$.]{\includegraphics[width=0.45\textwidth]{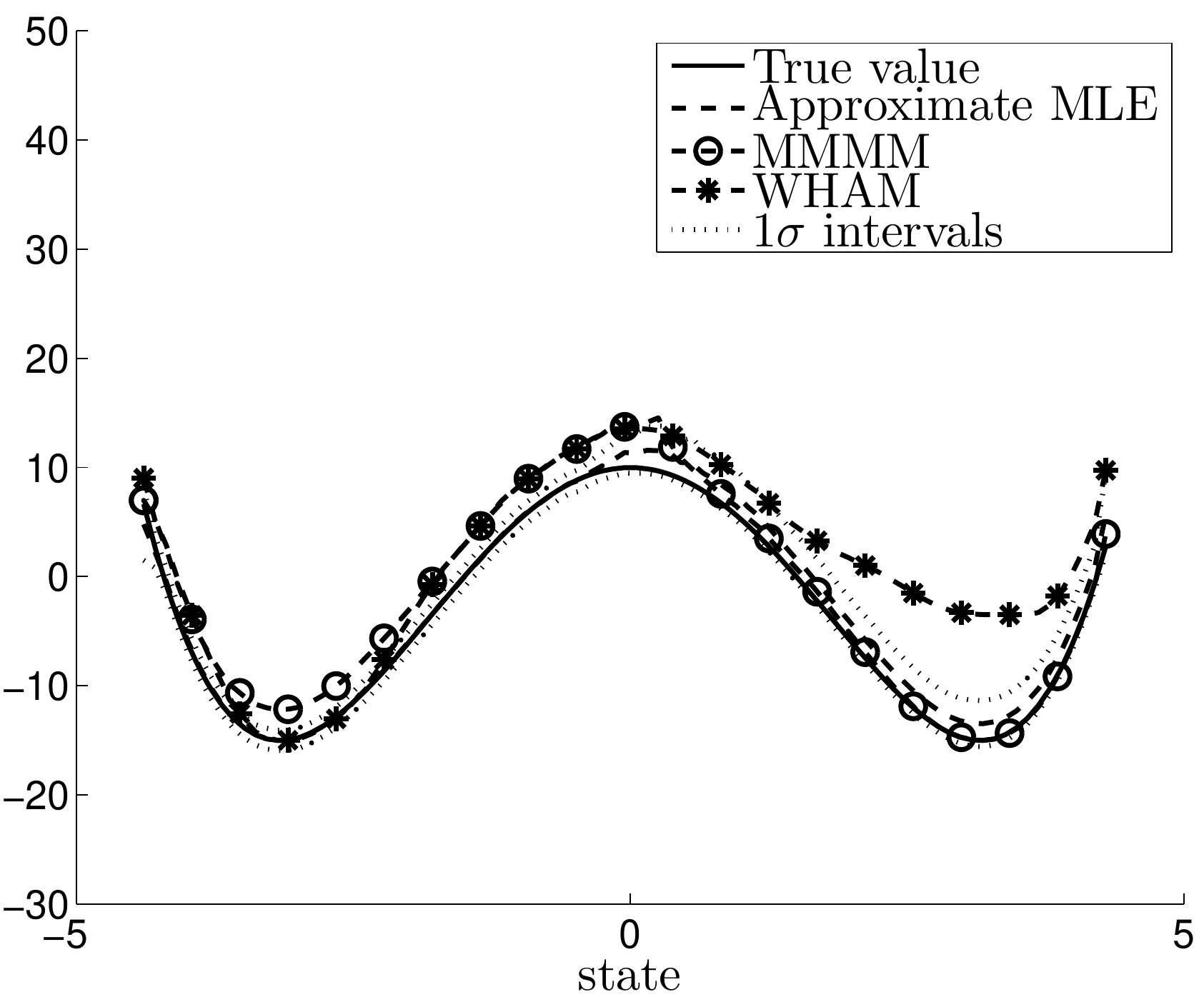}}

\subfloat[Estimates of $V$ generated by the different estimators on a run of
$\mathrm{MBS}\left(40,50\right)$ where the $e_{\Delta V}$ of approximate
MLE $=1.6589$, $e_{\Delta V}$ of MMMM $=3.6707$ and $e_{\Delta V}$
of WHAM $=7.5342$.]{\includegraphics[width=0.45\textwidth]{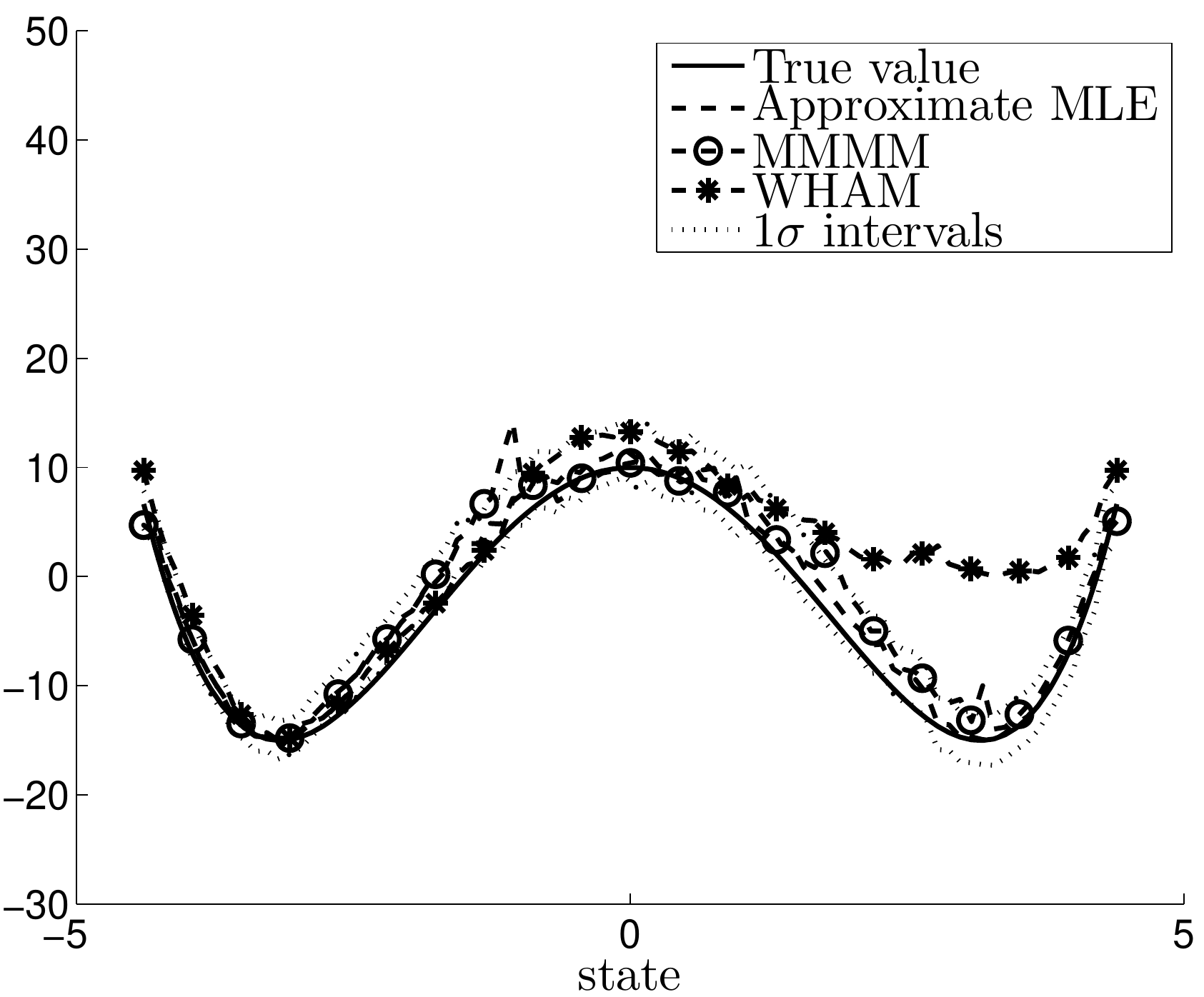}}\hfill{}\subfloat[Estimates of $V$ generated by the different estimators on a run of
$\mathrm{MBS}\left(200,10\right)$ where the $e_{\Delta V}$ of approximate
MLE $=2.8518$, $e_{\Delta V}$ of MMMM $=6.5913$ and $e_{\Delta V}$
of WHAM $=10.9540$.]{\includegraphics[width=0.45\textwidth]{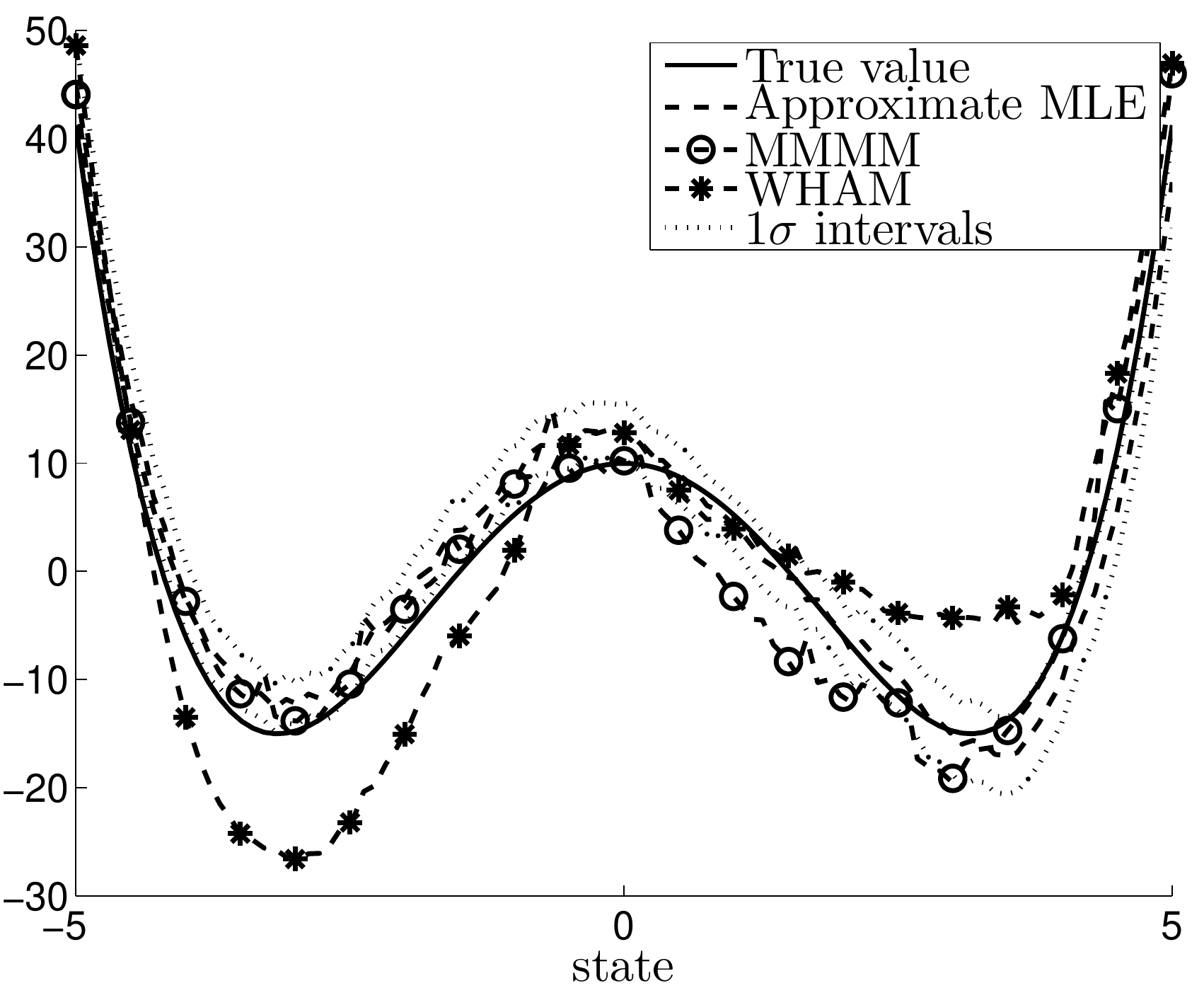}}

\caption{\label{fig:M3M4-mse}Estimation results of metadynamics with Markovian
simulations. The $1\sigma$ confidence intervals in (c) and (d) are
obtained by the approach described in Section \ref{sub:Error-analysis},
and those in (e) and (f) are obtained from the sample standard deviation
of $\check{V}$ in the $30$ independent runs.}
\end{figure}

\subsection{Metadynamics with non-Markovian simulations}

In this example, the free energy estimation problem of metadynamics
with non-Markovian simulations is investigated. We generate the simulation
data as in Section \ref{sub:Metadynamics-Markov} and convert the
state sequences to non-Markov processes as in Section \ref{sub:Umbrella-sampling-nonMarkov}.
Then the three methods can be used to estimate the unbiased free energy
of states $\bar{s}_{1},\ldots,\bar{s}_{18}$.

All the estimation results are displayed in Fig.~\ref{fig:N3N4-mse}.
It is obvious that the approximate MLE does a much better job in the
free energy estimation than the other two methods.

\begin{figure}
\subfloat[Average $e_{\Delta V}$ calculated over $30$ independent runs of
$\mathrm{MBS}\left(K,M_{0}\right)$ for $K=15$ and $M_{0}=500,910,1657,3017,5493,10000$.]{\includegraphics[width=0.45\textwidth]{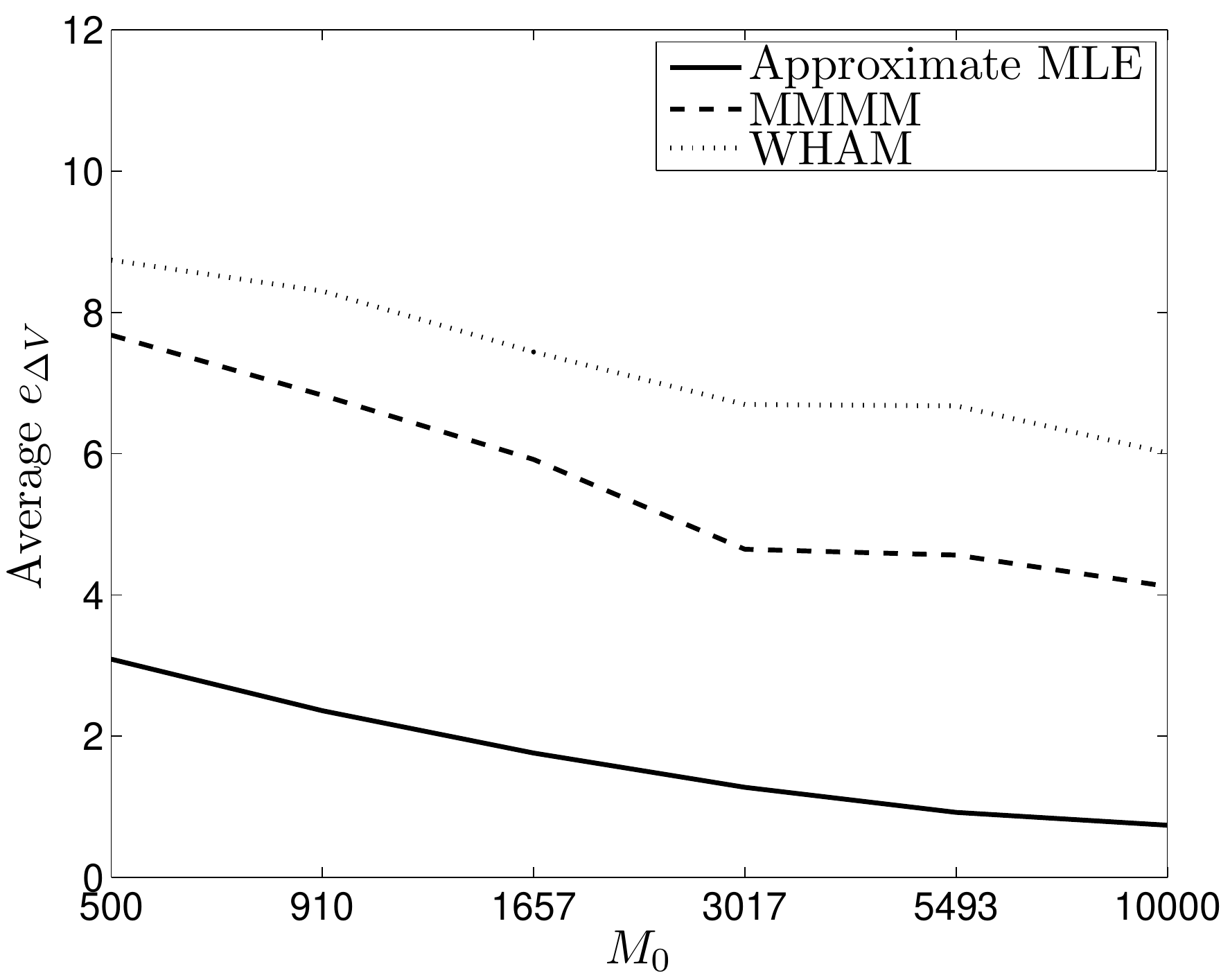}}\hfill{}\subfloat[{Average $e_{\Delta V}$ calculated over $30$ independent runs of
$\mathrm{MBS}\left(K,M_{0}\right)$ for $K=40,80,120,160,200$ and
$M_{0}=\left[2000/K\right]$.}]{\includegraphics[width=0.45\textwidth]{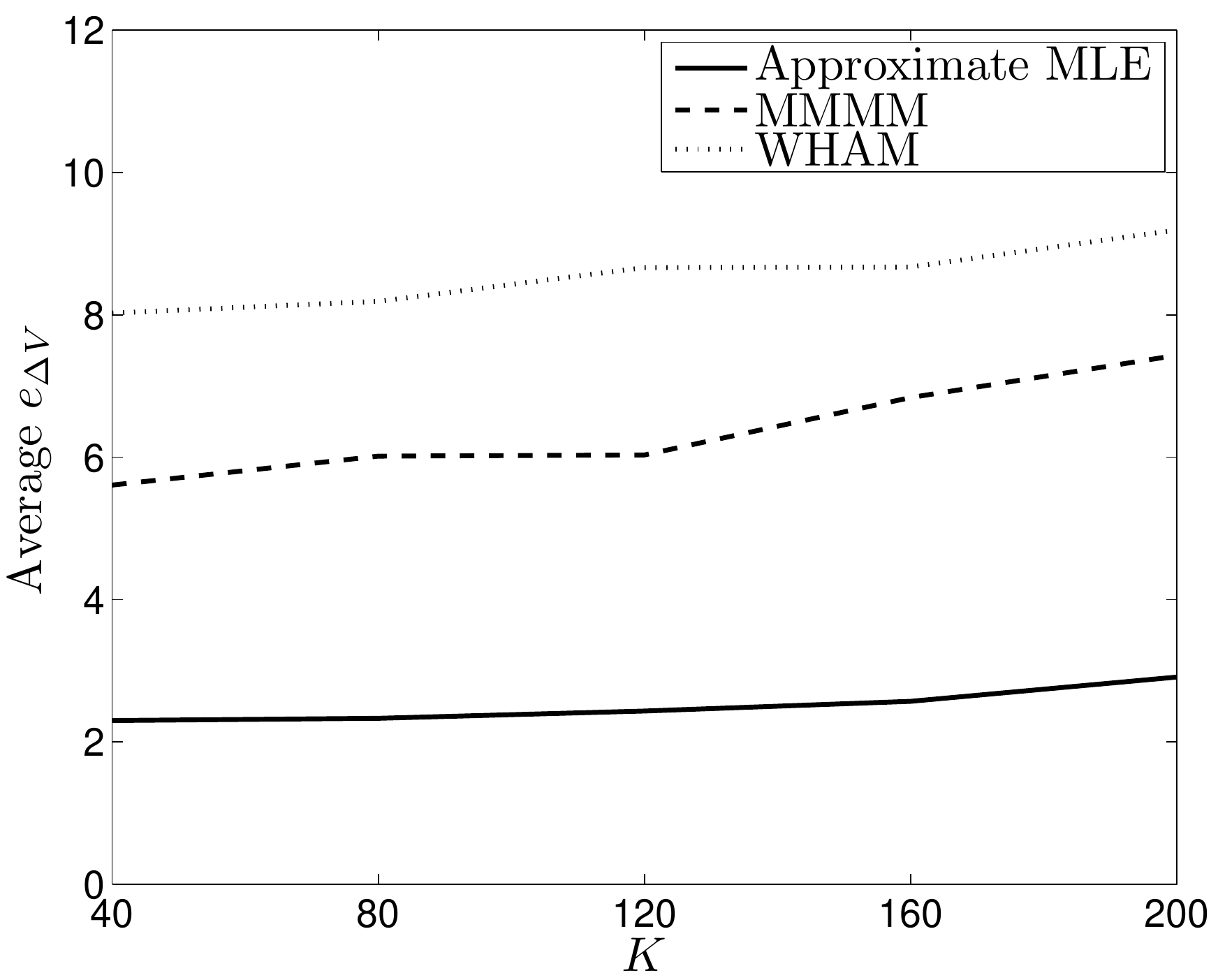}}

\subfloat[Estimates of $V$ generated by the different estimators on a run of
$\mathrm{MBS}\left(15,500\right)$ where the $e_{\Delta V}$ of approximate
MLE $=2.6950$, $e_{\Delta V}$ of MMMM $=8.2297$ and $e_{\Delta V}$
of WHAM $=9.7315$.]{\includegraphics[width=0.45\textwidth]{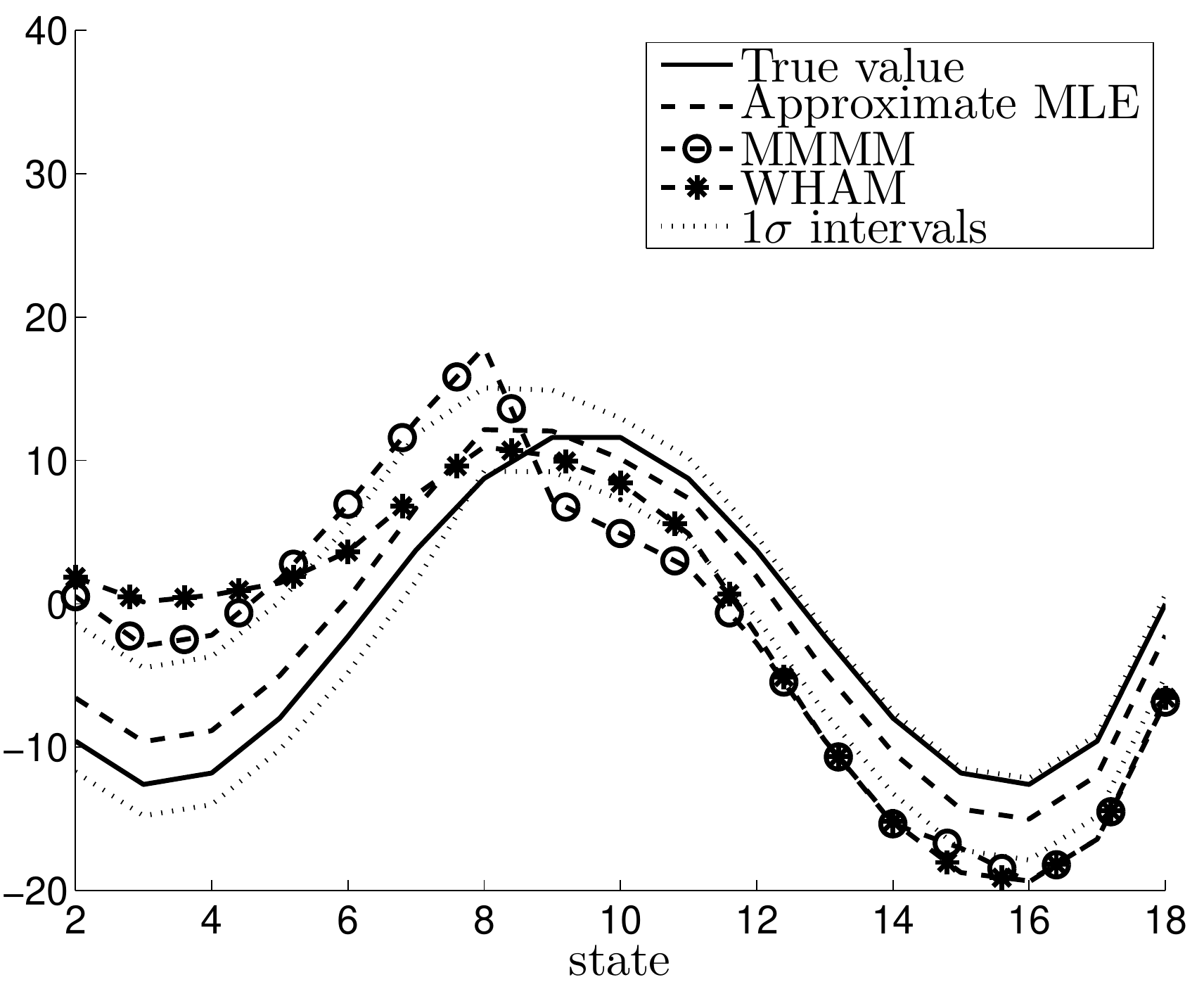}}\hfill{}\subfloat[Estimates of $V$ generated by the different estimators on a run of
$\mathrm{MBS}\left(15,10000\right)$ where the $e_{\Delta V}$ of
approximate MLE $=0.2325$, $e_{\Delta V}$ of MMMM $=2.5563$ and
$e_{\Delta V}$ of WHAM $=3.7233$.]{\includegraphics[width=0.45\textwidth]{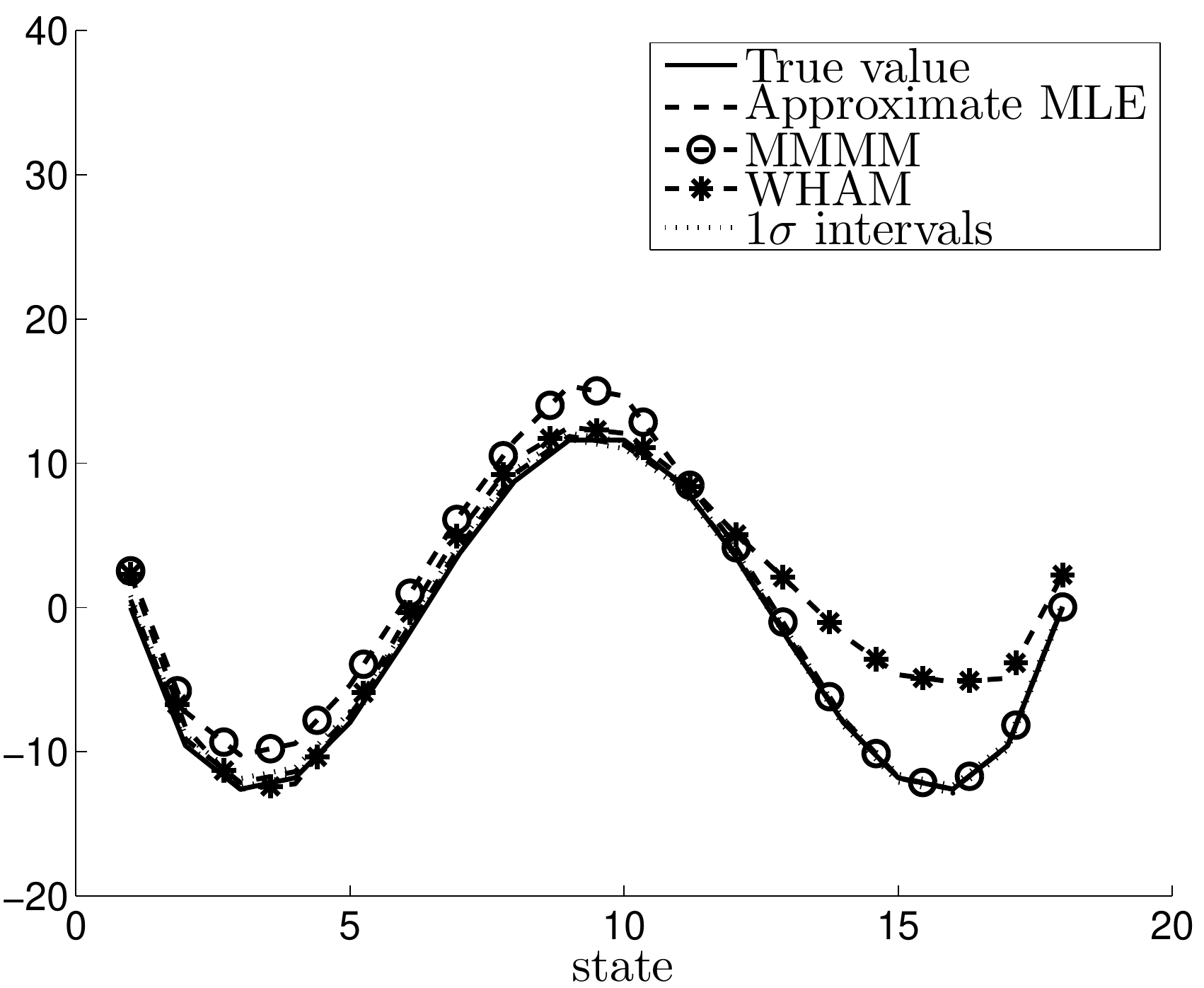}}

\subfloat[Estimates of $V$ generated by the different estimators on a run of
$\mathrm{MBS}\left(40,50\right)$ where the $e_{\Delta V}$ of approximate
MLE $=2.4020$, $e_{\Delta V}$ of MMMM $=4.0281$ and $e_{\Delta V}$
of WHAM $=6.3157$.]{\includegraphics[width=0.45\textwidth]{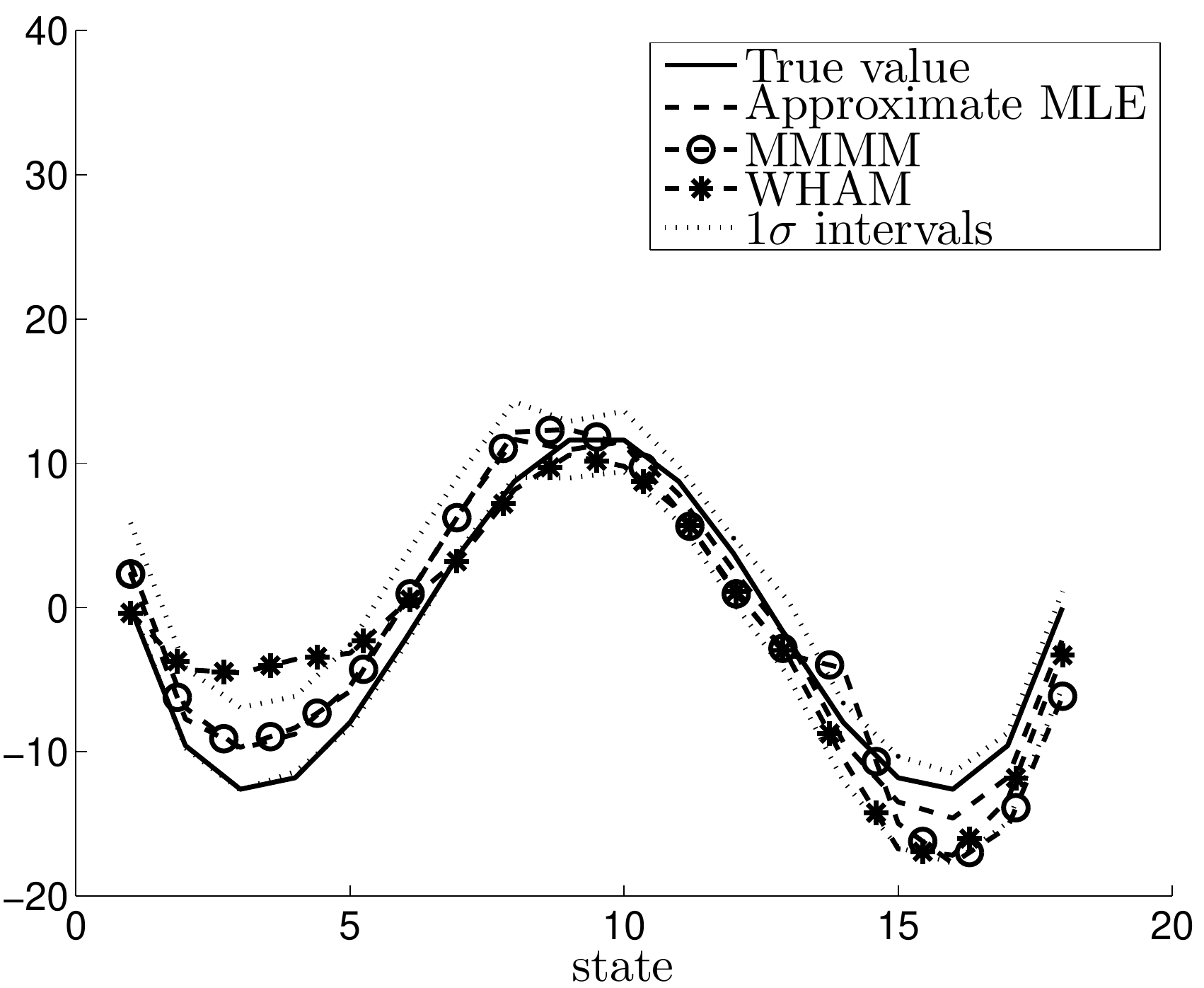}}\hfill{}\subfloat[Estimates of $V$ generated by the different estimators on a run of
$\mathrm{MBS}\left(200,10\right)$ where the $e_{\Delta V}$ of approximate
MLE $=2.5255$, $e_{\Delta V}$ of MMMM $=8.0936$ and $e_{\Delta V}$
of WHAM $=8.6291$.]{\includegraphics[width=0.45\textwidth]{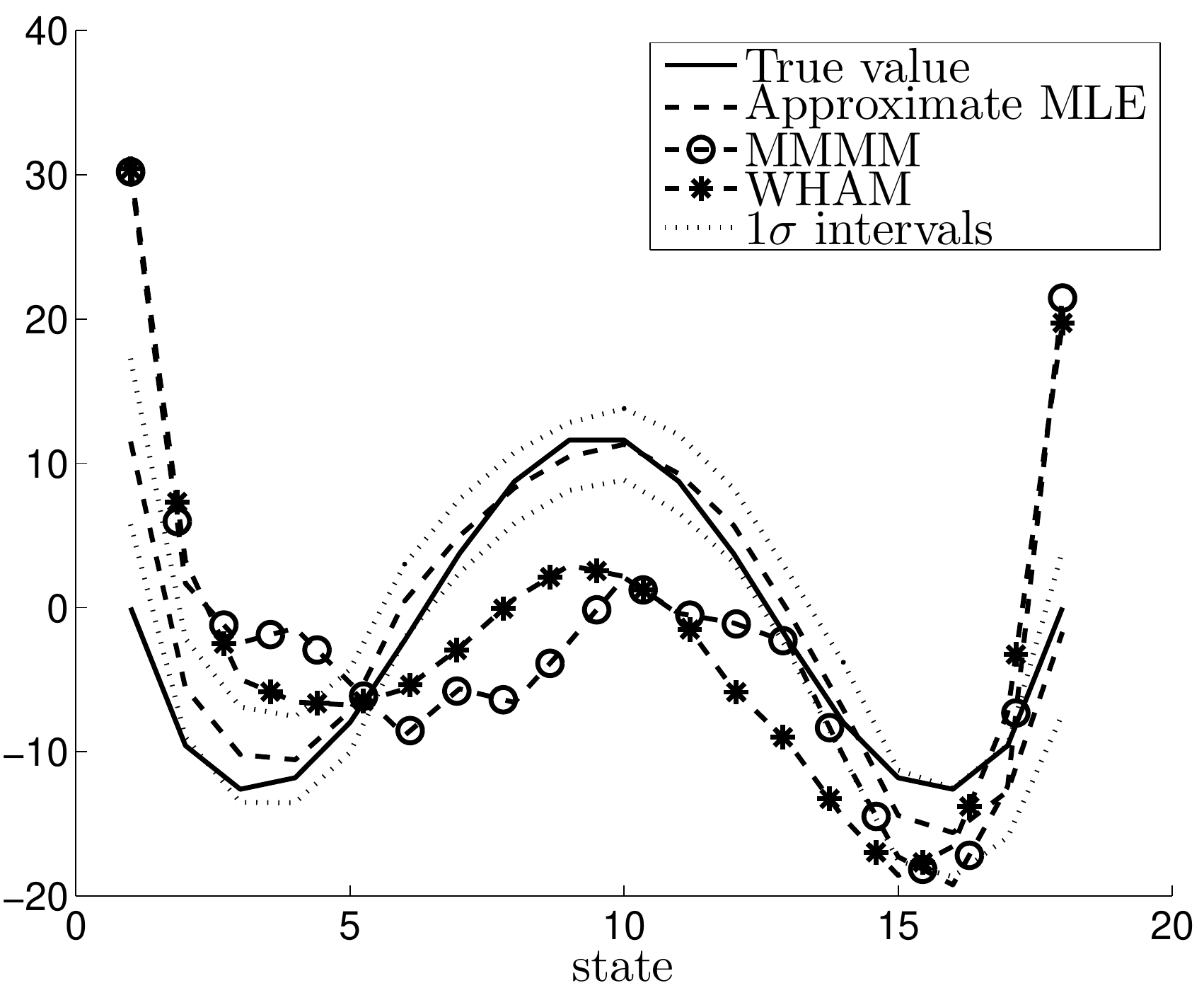}}

\caption{\label{fig:N3N4-mse}Estimation results of metadynamics with Markovian
simulations. The $1\sigma$ confidence intervals in (c) and (d) are
obtained by the approach described in Section \ref{sub:Error-analysis},
and those in (e) and (f) are obtained from the sample standard deviation
of $\check{V}$ in the $30$ independent runs.}
\end{figure}

\section{Conclusions}

We have presented a transition-matrix-based estimation method for
stationary distributions or free energy profiles using data from biased
simulations, such as umbrella sampling or metadynamics. In contrast
to existing estimators such as the weighted histogram method (WHAM),
the present estimators is not based on absolute counts in histogram
bins, but rather based on the transition counts between an arbitrary
state space discretization. This discretization may be in a single
or a few order parameters, e.g. those order parameters in which the
umbrella sampling or metadynamics simulations are driven, or they
may come from the clustering of a higher-dimensional space, such as
frequently use in Markov modeling. The only condition is that the
energy bias used in the biased simulations can be associated to the
discrete states, suggesting that at least the order parameters used
to drive the umbrella sampling/metadynamics simulation should be discretized
finely. The stationary probabilities or free energies are then reconstructed
on the discrete states used. The estimator presented here has a number
of advantages over existing methods such as WHAM. Most importantly,
in all scenarios tested here, the estimation error of the transition-matrix-based
estimator was significantly smaller than that of existing estimation
methods. The reason for this is that the estimator does not rely the
biased simulation to fully equilibrate within one simulation condition,
but only asks for local equilibrium in the discrete states - which
is a much weaker requirement. As a consequence, the present method
can also be used to estimate free energy profiles and stationary distributions
from metadynamics simulations using all simulation data. Previously,
metadynamics simulations could only be analyzed using the fraction
of the simulation generated after the free energy minima have been
filled and the simulation samples from an approximately flat free
energy landscape. These advantages may lead to very substantial savings
of CPU time for a given system, and on the other hand, permit the
simulation of systems that were otherwise out of reach.

\FloatBarrier

\appendix

\section{Proof of Theorem \ref{thm:optimization-problem}\label{sec:Proof-of-Theorem-optimization-problem}}

For convenience, here we define $\Theta$ to be the solution set defined
by constraints in \eqref{eq:original_ml}, $\Theta_{1}$ be the set
of feasible solutions which satisfies $T_{ij}^{k}=0$ for $\left(i,j,k\right)\in\left\{ \left(i,j,k\right)|\ensuremath{C_{ij}^{k}+C_{ji}^{k}=0}\text{ and }\ensuremath{i\neq j}\right\} $,
$\Theta_{2}$ be the set of feasible solutions which satisfies $1_{T_{ij}^{k}>0}=1_{C_{ij}^{k}>0}$
for all $i,j,k$, and the condition
\begin{equation}
\omega:\quad\ensuremath{C_{ii}^{k}>0}\text{ and }\ensuremath{1_{C_{ij}^{k}>0}=1_{C_{ji}^{k}>0}}\text{ for all }i,j,k
\end{equation}

\paragraph*{Part (1)}

In this part, we will prove the optimal solution existence of \eqref{eq:original_ml}.
Suppose that $\left(\pi',{T'}^{1},\ldots,{T'}^{K}\right)$ is a feasible
solution with objective value $L'>-\infty$. We can define a new objective
function $L_{+}\left(\pi,T^{1},\ldots,T^{K}\right)=\max\{L\left(\pi,T^{1},\ldots,T^{K}\right),L'-a\}$
where $a>0$ is a constant. It is easy to verify that $L_{+}$ is
a continuous function on $\Theta$. Thus, the optimization problem
$\max_{\left(\pi,T^{1},\ldots,T^{K}\right)\in\Theta}L_{+}\left(\pi,T^{1},\ldots,T^{K}\right)$
has a global optimal solution $\left(\pi'',{T''}^{1},\ldots,{T''}^{K}\right)$
with $L_{+}\left(\pi'',{T''}^{1},\ldots,{T''}^{K}\right)=L''$ for
$\Theta$ is a closed set. Noting that $L''\ge L'>L'-a$, we have
$L\left(\pi'',{T''}^{1},\ldots,{T''}^{K}\right)=L''$. Therefore,
for any $\left(\pi,T^{1},\ldots,T^{K}\right)\in\Theta$, $L\left(\pi,T^{1},\ldots,T^{K}\right)\le L_{+}\left(\pi,T^{1},\ldots,T^{K}\right)=L\left(\pi'',{T''}^{1},\ldots,{T''}^{K}\right)$.

\paragraph*{Part (2)}

In this part, we will prove the first conclusion of the theorem. Suppose
that $\left(\pi',{T'}^{1},\ldots,{T'}^{K}\right)$ is an optimal solution.
We can define a new solution $\left(\pi',{T''}^{1},\ldots,{T''}^{K}\right)$
with
\begin{equation}
{T''}_{ij}^{k}=\left\{ \begin{array}{ll}
1_{C_{ij}^{k}+C_{ji}^{k}>0}\cdot{T'}_{ij}^{k}, & i\neq j\\
1-\sum_{l\neq i}{T''}_{il}^{k}, & i=j
\end{array}\right.
\end{equation}
Obviously, $\left(\pi',{T''}^{1},\ldots,{T''}^{K}\right)$ is a feasible
solution belonging to $\Theta_{1}$, and ${T''}_{ii}^{k}\ge{T'}_{ii}^{k}$.
We have
\begin{eqnarray}
L\left(\pi',{T''}^{1},\ldots,{T''}^{K}\right) & = & L\left(\pi',{T'}^{1},\ldots,{T'}^{K}\right)+\sum_{i,k}C_{ii}^{k}\left(\log{T''}_{ii}^{k}-\log{T'}_{ii}^{k}\right)\nonumber \\
 & \ge & L\left(\pi',{T'}^{1},\ldots,{T'}^{K}\right)
\end{eqnarray}
Therefore, $\left(\pi',{T''}^{1},\ldots,{T''}^{K}\right)$ is also
an optimal solution.

\paragraph*{Part (3)}

We now prove the second conclusion. Suppose there is an optimal solution
$\left(\pi',{T'}^{1},\ldots,{T'}^{K}\right)$ belonging to $\Theta_{1}\backslash\Theta_{2}$.
Then there exist $i,j,k$ such that ${T'}_{ij}^{k}=0$ and $C_{ij}^{k}>0$,
and $L\left(\pi',{T'}^{1},\ldots,{T'}^{K}\right)=-\infty$. This leads
to a contradiction with the optimality of $\left(\pi',{T'}^{1},\ldots,{T'}^{K}\right)$.
Thus, the optimal solution belonging $\Theta_{1}$ must be an element
of $\Theta_{2}$ if $\omega$ holds.

\section{Proof of Theorem \ref{thm:mle-convergence-p}\label{sec:Proof-of-Theorem-mle-convergence-p}}

From Assumption \ref{ass:X-property}, we have $\hat{X}_{ij}^{k}=0$
if $\bar{X}_{ij}^{k}=0$. Then $\hat{Q}\left(\bar{\theta}\right),\bar{Q}\left(\bar{\theta}\right)>-\infty$
and we can define the following new functions, $\hat{Q}_{+}\left(\theta\right)=\max\left\{ \hat{Q}\left(\theta\right),\hat{Q}\left(\bar{\theta}\right)-a\right\} $
and $\bar{Q}_{+}\left(\theta\right)=\max\left\{ \bar{Q}\left(\theta\right),\bar{Q}\left(\bar{\theta}\right)-a\right\} $,
where $a>0$ is a constant.

\paragraph*{Part (1)}

First of all, we will prove that $\bar{\theta}$ is the unique solution
of $\max_{\theta\in\Theta}\bar{Q}\left(\theta\right)$. Noting that
$\bar{\pi}_{i}^{k}=\sum_{l}\bar{X}_{il}^{k}$, we have
\begin{equation}
\bar{Q}\left(\theta\right)=\sum_{k}\bar{w}_{k}\sum_{i}\left(\sum_{l}\bar{X}_{il}^{k}\right)\sum_{j}\bar{T}_{ij}^{k}\log T_{ij}^{k}
\end{equation}
According to the property of the KL divergence, $Q\left(\theta\right)$
can achieve the maximal value if and only if $T_{ij}^{k}=\bar{T}_{ij}^{k}$
for all $i,j,k$. Then we can conclude from Assumption \ref{ass:pi-property}
that $\theta=\bar{\theta}$ is the unique solution of $\max_{\theta\in\Theta}\bar{Q}\left(\theta\right)$.

\paragraph*{Part (2)}

It is easy to verify that
\begin{equation}
\theta=\arg\max_{\theta\in\Theta}\hat{Q}\left(\theta\right)\Leftrightarrow\theta=\arg\max_{\theta\in\Theta}\hat{Q}_{+}\left(\theta\right)\label{eq:maxQh-maxQhplus}
\end{equation}
and $\bar{\theta}=\arg\max_{\theta\in\Theta}\bar{Q}_{+}\left(\theta\right)$.
The proof is omitted because it is trivial.

\paragraph*{Part (3)}

In this part, we will prove that $\sup_{\theta\in\Theta}\left|\hat{Q}_{+}\left(\theta\right)-\bar{Q}_{+}\left(\theta\right)\right|\stackrel{p}{\to}0$.
Define event
\begin{equation}
\omega:\quad\mbox{\ensuremath{\hat{w}_{k}\hat{X}_{ij}^{k}\ge\epsilon}\text{ for all }\ensuremath{\left(i,j,k\right)\in S_{I}}\text{ and }\ensuremath{\hat{Q}\left(\bar{\theta}\right)\ge\bar{Q}\left(\bar{\theta}\right)-\epsilon}}
\end{equation}
and set
\begin{equation}
\Theta_{1}=\left\{ \theta|T_{ij}^{k}\ge\exp\left(\frac{\bar{Q}\left(\bar{\theta}\right)-\epsilon-a}{\epsilon}\right)\textrm{ for }\ensuremath{\left(i,j,k\right)\in S_{I}}\right\} \cap\Theta
\end{equation}
where $S_{I}=\left\{ \left(i,j,k\right)|\bar{X}_{ij}^{k}>0\right\} $
and $\epsilon\in\left(0,\min_{\left(i,j,k\right)\in S_{I}}\bar{w}_{k}\bar{X}_{ij}^{k}\right)$.
We now analyze the value of $\left|\hat{Q}_{+}\left(\theta\right)-\bar{Q}_{+}\left(\theta\right)\right|$
when $\omega$ holds.
\begin{description}
\item [{Case (i)}] $\theta\in\Theta\backslash\Theta_{1}$. There is a
$\left(i,j,k\right)$ such that $\left(i,j,k\right)\in S_{I}$ and
$T_{ij}^{k}<\exp\left(\frac{\bar{Q}\left(\bar{\theta}\right)-\epsilon-a}{\epsilon}\right)$.
We have
\begin{equation}
\hat{Q}\left(\theta\right)\le\hat{w}_{k}\hat{X}_{ij}^{k}\log T_{ij}^{k}<\bar{Q}\left(\bar{\theta}\right)-\epsilon-a\le\hat{Q}\left(\bar{\theta}\right)-a\label{eq:Qh-smallerthan-Qhb}
\end{equation}
and
\begin{equation}
\bar{Q}\left(\theta\right)\le\bar{w}_{k}\bar{Y}_{ij}^{k}\log T_{ij}^{k}<\bar{Q}\left(\bar{\theta}\right)-\epsilon-a<\bar{Q}\left(\bar{\theta}\right)-a
\end{equation}
Then $\bar{\theta}\in\Theta_{1}$ and
\begin{eqnarray}
\left|\hat{Q}_{+}\left(\theta\right)-\bar{Q}_{+}\left(\theta\right)\right| & \le & \sum_{\left(i,j,k\right)\in S_{I}}\left|\hat{w}_{k}\hat{X}_{ij}^{k}-\bar{w}_{k}\bar{X}_{ij}^{k}\right|\left|\log\bar{T}_{ij}^{k}\right|\nonumber \\
 & \le & \left|\frac{\bar{Q}\left(\bar{\theta}\right)-\epsilon-a}{\epsilon}\right|\sum_{\left(i,j,k\right)\in S_{I}}\left|\hat{w}_{k}\hat{X}_{ij}^{k}-\bar{w}_{k}\bar{X}_{ij}^{k}\right|\label{eq:tmp-upperbound-1}
\end{eqnarray}

\item [{Case (ii)}] $\theta\in\Theta_{1}$. Investigating all possible
orders among values of $\{\hat{Q}\left(\theta\right),\hat{Q}\left(\bar{\theta}\right)-a,\bar{Q}\left(\theta\right),\bar{Q}\left(\bar{\theta}\right)-a\}$
(i.e., $\hat{Q}\left(\theta\right)\le\hat{Q}\left(\bar{\theta}\right)-a\le\bar{Q}\left(\theta\right)\le\bar{Q}\left(\bar{\theta}\right)-a$,
$\hat{Q}\left(\theta\right)\le\hat{Q}\left(\bar{\theta}\right)-a\le\bar{Q}\left(\bar{\theta}\right)-a\le\bar{Q}\left(\theta\right)$
and so on), we have
\begin{eqnarray}
\left|\hat{Q}_{+}\left(\theta\right)-\bar{Q}_{+}\left(\theta\right)\right| & \le & \max\left\{ \left|\hat{Q}\left(\theta\right)-\bar{Q}\left(\theta\right)\right|,\left|\hat{Q}\left(\bar{\theta}\right)-\bar{Q}\left(\bar{\theta}\right)\right|\right\} \nonumber \\
 & \le & \sum_{\left(i,j,k\right)\in S_{I}}\left|\hat{w}_{k}\hat{X}_{ij}^{k}-\bar{w}_{k}\bar{X}_{ij}^{k}\right|\max\left\{ \left|\log T_{ij}^{k}\right|,\left|\log\bar{T}_{ij}^{k}\right|\right\} \nonumber \\
 & \le & \left|\frac{\bar{Q}\left(\bar{\theta}\right)-\epsilon-a}{\epsilon}\right|\sum_{\left(i,j,k\right)\in S_{I}}\left|\hat{w}_{k}\hat{X}_{ij}^{k}-\bar{w}_{k}\bar{X}_{ij}^{k}\right|
\end{eqnarray}

\end{description}
Combining the above results, we get
\begin{equation}
1_{\omega}\cdot\sup_{\theta\in\Theta}\left|\hat{Q}_{+}\left(\theta\right)-\bar{Q}_{+}\left(\theta\right)\right|\le\left|\frac{\bar{Q}\left(\bar{\theta}\right)-\epsilon-a}{\epsilon}\right|\sum_{\left(i,j,k\right)\in S_{I}}\left|\hat{w}_{k}\hat{X}_{ij}^{k}-\bar{w}_{k}\bar{X}_{ij}^{k}\right|
\end{equation}
Moreover, considering that $\hat{w}_{k}\stackrel{p}{\to}\bar{w}_{k}$,
$\hat{X}_{ij}^{k}\stackrel{p}{\to}\bar{X}_{ij}^{k}$ and
\begin{equation}
\left|\hat{Q}\left(\bar{\theta}\right)-\bar{Q}\left(\bar{\theta}\right)\right|\le\left|\frac{\bar{Q}\left(\bar{\theta}\right)-\epsilon-a}{\epsilon}\right|\sum_{\left(i,j,k\right)\in S_{I}}\left|\hat{w}_{k}\hat{X}_{ij}^{k}-\bar{w}_{k}\bar{X}_{ij}^{k}\right|
\end{equation}
we have $1_{\omega}\stackrel{p}{\to}1$. Therefore
\begin{equation}
\sup_{\theta\in\Theta}\left|\hat{Q}_{+}\left(\theta\right)-\bar{Q}_{+}\left(\theta\right)\right|\stackrel{p}{\to}0
\end{equation}

According to the definitions of $\hat{Q}_{+}\left(\theta\right)$
and $\bar{Q}_{+}\left(\theta\right)$ and conclusions of Parts (1)-(3),
it can be easily verified that $\hat{Q}_{+}\left(\theta\right)$ satisfies
the following conditions: (i) $\bar{Q}_{+}\left(\theta\right)$ is
uniquely maximized at $\bar{\theta}$, (ii) $\Theta$ is compact;
(iii) $\bar{Q}_{+}\left(\theta\right)$ is continuous; (iv) $\hat{Q}_{+}\left(\theta\right)$
converges uniformly in probability to $\bar{Q}_{+}\left(\theta\right)$.
Then we have $\tilde{\theta}\stackrel{p}{\to}\bar{\theta}$ by using
Theorem 2.1 in \cite{newey1994large} and \eqref{eq:maxQh-maxQhplus}.

\section{Proof of Theorem \ref{thm:asymptotic-convergence}\label{sec:Proof-of-Theorem-asymptotic-convergence}}

For any $k$,
\begin{eqnarray}
\sqrt{M}\hat{w}_{k}\hat{V}_{X}^{k}-\sqrt{M}\bar{w}_{k}\bar{V}_{X}^{k} & = & \sqrt{\hat{w}_{k}}\sqrt{M_{k}}\left(\hat{V}_{X}^{k}-\bar{V}_{X}^{k}\right)+\sqrt{M}\left(\hat{w}_{k}-\bar{w}_{k}\right)\bar{V}_{X}^{k}\nonumber \\
 & \stackrel{d}{\to} & \mathcal{N}\left(0,\bar{w}_{k}\Sigma_{X}^{k}\right)
\end{eqnarray}
Then $\sqrt{M}\left(\hat{V}_{Xw}-\bar{V}_{Xw}\right)\stackrel{d}{\to}\mathcal{N}\left(0,\Sigma_{Xw}\right)$.

Let $\Theta_{r}$ be the feasible set of $\theta_{r}$ defined by
constraints in \eqref{eq:original_ml}, where $T_{ij}^{k}$ and $\pi_{\left|\mathcal{S}\right|}$
which do not belong to $\theta_{r}$ can be treated as functions of
$\theta_{r}$ since the MLE is performed by using \eqref{eq:reduced_ml}.
It is easy to see that $\bar{\theta}_{r}$ is an interior point of
$\Theta_{r}$, and
\begin{equation}
\Theta_{1}=\left\{ \theta_{r}|T_{ij}^{k}\left(\theta_{r}\right)\ge\epsilon\text{ for }\left(i,j,k\right)\in S_{I}\text{ and }\pi_{i}\left(\theta_{r}\right)\ge\epsilon\text{ for all }i\right\} \cap\Theta_{r}
\end{equation}
is a closed neighborhood of $\bar{\theta}_{r}$, where $\epsilon\in\left(0,\min\left\{ \min_{\left(i,j,k\right)\in S_{I}}\bar{T}_{ij}^{k},\min_{i}\bar{\pi}_{i}\right\} \right)$
and $S_{I}$ has the same definition as in Appendix \ref{sec:Proof-of-Theorem-mle-convergence-p}.

It is easy to verify that
\begin{eqnarray*}
\sqrt{M}\left(\nabla_{\theta_{r}}\hat{Q}\left(\theta\left(\bar{\theta}_{r}\right)\right)-\nabla_{\theta_{r}}\bar{Q}\left(\theta\left(\bar{\theta}_{r}\right)\right)\right)^{\mathrm{T}} & = & \sqrt{M}\nabla_{\theta_{r}}\Phi\left(\theta\left(\bar{\theta}_{r}\right)\right)^{\mathrm{T}}\left(\hat{V}_{Xw}-\bar{V}_{Xw}\right)\\
 & \stackrel{d}{\to} & \mathcal{N}\left(0,\Sigma\right)
\end{eqnarray*}
and $\sup_{\theta_{r}\in\Theta_{1}}\left\Vert \nabla_{\theta_{r}\theta_{r}}\hat{Q}\left(\theta\left(\theta_{r}\right)\right)-\nabla_{\theta_{r}\theta_{r}}\bar{Q}\left(\theta\left(\theta_{r}\right)\right)\right\Vert \stackrel{p}{\to}0$.
Then by Theorem 3.1 in \cite{newey1994large}, $\sqrt{M}\left(\tilde{\theta}_{r}-\bar{\theta}_{r}\right)\stackrel{d}{\to}\mathcal{N}\left(0,H^{-1}\Sigma H^{-1}\right)$.

\section{Proof of Remark \ref{rem:clt}\label{sec:Proof-of-Remark-clt}}

In this appendix we will show that \eqref{eq:X-clt} holds if $x_{t}^{k}=f^{k}\left(y_{t}^{k}\right)$
and $\left\{ y_{t}^{k}\right\} $ is a geometrically ergodic Markov
chain.

Let $\mathcal{R}$ and $\mu\left(\cdot\right)$ denote the state space
and stationary distribution of $\left\{ y_{t}^{k}\right\} $. Then
for any $y_{0}^{k}$, $y_{1}^{k}$, $t>1$ and set $A\subset\mathcal{R}\times\mathcal{R}$,
we have
\begin{eqnarray}
 &  & \left|\Pr\left(\left(y_{t}^{k},y_{t+1}^{k}\right)\in A|y_{0}^{k},y_{1}^{k}\right)-\int_{A_{0}}\mu\left(\mathrm{d}y_{t}^{k}\right)\mathcal{K}\left(y_{t}^{k},A_{1}\left(y_{t}^{k}\right)\right)\right|\nonumber \\
 & = & \left|\int_{\mathcal{R}}\mathcal{K}^{t-1}\left(y_{1}^{k},\mathrm{d}y_{t}^{k}\right)g\left(y_{t}^{k}\right)-\int_{\mathcal{R}}\mu\left(\mathrm{d}y_{t}^{k}\right)g\left(y_{t}^{k}\right)\right|
\end{eqnarray}
with
\begin{equation}
g\left(y_{t}^{k}\right)=\left\{ \begin{array}{ll}
\mathcal{K}\left(y_{t}^{k},A_{1}\left(y_{t}^{k}\right)\right), & y_{t}^{k}\in A_{0}\\
0, & y_{t}^{k}\not\notin A_{0}
\end{array}\right.
\end{equation}
where $A_{0}=\left\{ u|\text{there exists }v\text{ such that }\left(u,v\right)\in A\right\} $,
$A_{1}\left(u\right)=\left\{ v|\left(u,v\right)\in A\right\} $, $\mathcal{K}\left(u,B\right)=\Pr\left(y_{t+1}^{k}\in B|y_{t}^{k}=u\right)$
and $\mathcal{K}^{n}\left(u,B\right)=\Pr\left(y_{t+n}^{k}\in B|y_{t}^{k}=u\right)$
denote the transition kernel and multi-step transition kernel of $\left\{ y_{t}^{k}\right\} $.
It is clear that $\left|g\left(y_{t}^{k}\right)\right|\le1$ for any
$y_{t}^{k}$. Considering that $\left\{ y_{t}^{k}\right\} $ is geometrically
ergodic, we can deduce that there exist a function $a\left(\cdot\right)$
and a constant $\eta\in\left(0,1\right)$ such that
\begin{eqnarray}
 &  & \left|\Pr\left(\left(y_{t}^{k},y_{t+1}^{k}\right)\in A|y_{0}^{k},y_{1}^{k}\right)-\int_{A_{0}}\mu\left(\mathrm{d}y_{t}^{k}\right)\mathcal{K}\left(y_{t}^{k},A_{1}\left(y_{t}^{k}\right)\right)\right|\nonumber \\
 & \le & 2\mathrm{d_{TV}}\left(\mathcal{K}^{t-1}\left(y_{1}^{k},\cdot\right),\mu\left(\cdot\right)\right)\nonumber \\
 & \le & 2a\left(y_{1}^{k}\right)\eta^{t-1}\label{eq:appendix-geometrically-ergodic-ytyt1}
\end{eqnarray}
Because \eqref{eq:appendix-geometrically-ergodic-ytyt1} holds for
any $A\subset\mathcal{R}\times\mathcal{R}$, the stochastic process
$\left\{ {y'}_{t}^{k}\right\} $ with ${y'}_{t}^{k}=\left(y_{t}^{k},y_{t+1}^{k}\right)$
is also a geometrically ergodic Markov chain with stationary distribution
$\mu'\left(\mathrm{d}y_{t}^{k},\mathrm{d}y_{t+1}^{k}\right)=\mu\left(\mathrm{d}y_{t}^{k}\right)\mathcal{K}\left(y_{t}^{k},\mathrm{d}y_{t+1}^{k}\right)$.
Hence, combining Theorem 1.2 in \cite{haggstrom2005central} and the
Cramer-Wold device \cite{roussas2005introduction}, we can conclude
that $\sqrt{M_{k}}\left(\mathcal{V}\left(\hat{X}^{k}\right)-\mathcal{V}\left(\bar{X}^{k}\right)\right)$
converges in distribution to a multivariate normal distribution with
zero mean.

\section{Proof of Theorem \ref{thm:modified-count-matrix}\label{sec:Proof-of-Theorem-modified-count-matrix}}

\paragraph*{Part (1)}

It is clear that the modified count matrix $\underline{C}^{k}$ in
\eqref{eq:modified-C} satisfies that $1_{\underline{C}_{ij}^{k}>0}\equiv1_{\underline{C}_{ji}^{k}>0}$
and its diagonal elements are all positive. Here we only prove the
irreducibility of $\underline{C}^{k}$ by contradiction. Assume without
loss of generality that there exists $n\in\left[1,\left|\mathcal{S}^{k}\right|\right)$
such that
\begin{equation}
\underline{C}_{ij}^{k}=0,\quad\text{for }i\le n\text{ and }j>n
\end{equation}
Then $C_{ij}^{k}=C_{ji}^{k}=0$ for $i\le n$ and $j>n$, which implies
the states $s_{1}$ and $s_{2}$ can not both appear in the $k$-th
simulation if $I_{\mathcal{S}^{k}}\left(s_{1}\right)\le n$ and $I_{\mathcal{S}^{k}}\left(s_{2}\right)>n$.
This contradicts the condition that $\sum_{t=0}^{M_{k}}1_{I_{\mathcal{S}^{k}}\left(x_{t}^{k}\right)=i}>0$
for all $i$. So $\underline{C}^{k}$ is irreducible.

\paragraph*{Part (2)}

We now show that $\bar{X}^{k}$ is irreducible in the same way as
in the first part. Assume without loss of generality that there exists
$n\in\left[1,\left|\mathcal{S}^{k}\right|\right)$ such that
\begin{equation}
\bar{X}_{ij}^{k}=0,\quad\text{for }i\le n\text{ and }j>n
\end{equation}
Then $\bar{X}_{ij}^{k}=\bar{X}_{ji}^{k}=0$ for $i\le n$ and $j>n$
since $\bar{X}^{k}$ is symmetric. According to Assumption \ref{ass:X-property},
for any $s_{1},s_{2}\in\mathcal{S}^{k}$ which satisfy $I_{\mathcal{S}^{k}}\left(s_{1}\right)\le n$
and $I_{\mathcal{S}^{k}}\left(s_{2}\right)>n$, the probability of
that $s_{1}$ and $s_{2}$ both appear in the $k$-th simulation is
zero. This contradicts the ergodicity of the simulation. So $\bar{X}^{k}$
is irreducible.

\paragraph*{Part (3)}

In this part we will prove that $1_{\underline{C}^{k}=C^{k}}\stackrel{p}{\to}1$.
Define
\begin{equation}
\omega:\quad C^{k}\text{ is an irreducible matrix and satisfy }C_{ii}^{k}>0\text{ and }1_{C_{ij}^{k}>0}=1_{C_{ij}^{k}>0}
\end{equation}
According to Assumptions \ref{ass:simulations} and \ref{ass:X-property},
we have $1_{C_{ij}^{k}>0}\stackrel{p}{\to}1_{\bar{X}_{ij}^{k}>0}$.
Therefore $1_{\omega}\stackrel{p}{\to}1$ since $\bar{X}^{k}$ is
a symmetric and irreducible matrix with positive diagonal elements.
On the other hand, $\underline{C}^{k}=C^{k}$ if $\omega$ holds.
Hence $1_{\underline{C}^{k}=C^{k}}\stackrel{p}{\to}1$.

\section{Proof of $H_{\rho}^{k}<0$\label{sec:Properties-of-Hrhok}}

Define $\rho_{a}^{k}$ to be the vector consisting of $\left\{ Z_{ij}^{k}|\underline{C}_{ij}^{k}>0\right\} $
and $H_{\rho_{a}}^{k}=\sum_{i}\underline{C}_{i}^{k}\nabla_{\rho_{a}^{k}\rho_{a}^{k}}Z_{i}^{k}$.
It is clear that $\rho_{a}^{k}$ can be expressed as a linear function
of $\rho^{k}$ with $\rho_{a}^{k}=A_{a}^{k}\rho^{k}$. Then $H_{\rho}^{k}$
can be written as $H_{\rho}^{k}=\left(A_{a}^{k}\right)^{\mathrm{T}}H_{\rho_{a}}^{k}A_{a}^{k}$.

According to \eqref{eq:Zik-hess}, for any ${\rho'}_{a}^{k}=\left\{ {Z'}_{ij}^{k}|\underline{C}_{ij}^{k}>0\right\} $,
\begin{eqnarray}
\left({\rho'}_{a}^{k}\right)^{\mathrm{T}}H_{\rho_{a}}^{k}{\rho'}_{a}^{k} & = & \sum_{i}\underline{C}_{i}^{k}\sum_{m,n\in\left\{ j|C_{ij}^{k}>0\right\} }\frac{\partial^{2}Z_{i}^{k}}{\partial Z_{im}^{k}\partial Z_{in}^{k}}{Z'}_{im}^{k}{Z'}_{in}^{k}\nonumber \\
 & = & -\sum_{i}\underline{C}_{i}^{k}\left(\sum_{j\in\left\{ j|C_{ij}^{k}>0\right\} }T_{ij}^{k}\left({Z'}_{ij}^{k}\right)^{2}-\left(\sum_{j\in\left\{ j|C_{ij}^{k}>0\right\} }T_{ij}^{k}{Z'}_{ij}^{k}\right)^{2}\right)
\end{eqnarray}
Noting that for any $i$, the term
\[
\sum_{j\in\left\{ j|C_{ij}^{k}>0\right\} }T_{ij}^{k}\left({Z'}_{ij}^{k}\right)^{2}-\left(\sum_{j\in\left\{ j|C_{ij}^{k}>0\right\} }T_{ij}^{k}{Z'}_{ij}^{k}\right)^{2}
\]
is equal to the variance of a discrete random variable $a_{i}$ with
distribution $\Pr\left(a_{i}={Z'}_{ij}^{k}\right)=T_{ij}^{k}$ and
support $\left\{ {Z'}_{ij}^{k}|\underline{C}_{ij}^{k}>0\right\} $
($T_{ij}^{k}=\exp\left(-Z_{ij}^{k}\right)/\sum_{l}\exp\left(-Z_{il}^{k}\right)>0$
if $\underline{C}_{ij}^{k}>0$). We can conclude that $H_{\rho_{a}}^{k}\le0$,
and $\left({\rho'}_{a}^{k}\right)^{\mathrm{T}}H_{\rho_{a}}^{k}{\rho'}_{a}^{k}=0$
if and only if
\begin{equation}
{Z'}_{im}^{k}={Z'}_{in}^{k}\text{ for any }m,n\in\left\{ j|\underline{C}_{ij}^{k}>0\right\} ,1\le i\le\left|\mathcal{S}^{k}\right|\label{eq:Z-row-equal}
\end{equation}
Further, if there exists a ${\rho'}^{k}$ such that ${\rho'}_{a}^{k}=A_{a}^{k}{\rho'}^{k}$,
${\rho'}_{a}^{k}$ should satisfy
\begin{equation}
\ensuremath{{Z'}_{ij}^{k}={Z'}_{ji}^{k}}\text{ for any \ensuremath{i,j}, and }\sum_{\left(i,j\right)\in\left\{ \left(i,j\right)|\underline{C}_{ij}^{k}>0\right\} }{Z'}_{ij}^{k}=0\label{eq:rhoa-constraint}
\end{equation}
According to \eqref{eq:Z-row-equal}, \eqref{eq:rhoa-constraint}
and Assumption \ref{ass:count-matrix}, it is easy to see ${\rho'}^{k\mathrm{T}}A_{a}^{k\mathrm{T}}H_{\rho_{a}}^{k}A_{a}^{k}{\rho'}^{k}=0$
if and only if ${\rho'}^{k}=\mathbf{0}$, which implies $H_{\rho}^{k}=A_{a}^{k\mathrm{T}}H_{\rho_{a}}^{k}A_{a}^{k}<0$.

\section{Proof of Lemma \ref{lem:rhoV}\label{sec:Proof-of-Lemma-rhoV}}

Under Assumptions \ref{ass:potential}-\ref{ass:pi-property} and
\ref{ass:count-matrix}, it is easy to see that $\bar{X}^{k}$ is
irreducible and its diagonal elements are positive (see Appendix \ref{sec:Proof-of-Theorem-modified-count-matrix}),
which implies that $\bar{T}^{k}$ is an ergodic transition matrix.
Thus $\bar{V}^{k}=V_{Z}^{k}\left(\bar{\rho}^{k}\right)$.

Define event
\begin{equation}
\omega^{k}:\quad\mbox{\ensuremath{1_{C_{ij}^{k}>0}=1_{\bar{X}_{ij}^{k}>0}}\text{ for all \ensuremath{i,j}}}
\end{equation}
and
\begin{equation}
\underline{\check{\rho}}^{k}=\left\{ \begin{array}{ll}
\check{\rho}^{k}, & \mathrm{dim}\left(\check{\rho}^{k}\right)=\mathrm{dim}\left(\bar{\rho}^{k}\right)\\
\mathbf{0}, & \mathrm{dim}\left(\check{\rho}^{k}\right)\neq\mathrm{dim}\left(\bar{\rho}^{k}\right)
\end{array}\right.
\end{equation}
It is clear that the functions $1_{\omega^{k}}\cdot\check{Q}^{k}\left(\underline{\rho}^{k}\right)$
and $\bar{Q}^{k}\left(\underline{\rho}^{k}\right)$ satisfy: (i) $\check{\underline{\rho}}^{k}=\arg\max_{\underline{\rho}^{k}}1_{\omega^{k}}\cdot\check{Q}^{k}\left(\underline{\rho}^{k}\right)$
and $\bar{Q}^{k}\left(\underline{\rho}^{k}\right)$ is uniquely maximized
at $\bar{\rho}^{k}$, (ii) $1_{\omega^{k}}\cdot\check{Q}^{k}\left(\underline{\rho}^{k}\right)$
is concave for $\nabla_{\underline{\rho}^{k}\underline{\rho}^{k}}\check{Q}^{k}\left(\underline{\rho}^{k}\right)=H_{\rho}^{k}\left(\hat{w}^{k}\hat{\underline{X}}^{k},\underline{\rho}^{k}\right)<0$,
(iii) $1_{\omega^{k}}\cdot\check{Q}^{k}\left(\underline{\rho}^{k}\right)\stackrel{p}{\to}\bar{Q}^{k}\left(\underline{\rho}^{k}\right)$
for any $\underline{\rho}^{k}$ because $1_{\omega^{k}}\stackrel{p}{\to}1$.
Then we have $\check{\rho}^{k}\stackrel{p}{\to}\underline{\check{\rho}}^{k}\stackrel{p}{\to}\bar{\rho}^{k}$
according to Theorem 2.7 in \cite{newey1994large}, and $\check{V}^{k}\stackrel{p}{\to}\bar{V}^{k}$
since $V_{Z}^{k}\left(\underline{\rho}^{k}\right)$ is a continuous
function of $\underline{\rho}^{k}$.

We now show the second conclusion of the lemma. Because $\bar{\rho}^{k}=\arg\max_{\underline{\rho}^{k}}\bar{Q}^{k}\left(\underline{\rho}^{k}\right)$,
$\nabla_{\underline{\rho}_{k}}\bar{Q}^{k}\left(\bar{\rho}^{k}\right)=\mathbf{0}^{\mathrm{T}}$.
Then we have
\begin{eqnarray}
\sqrt{M_{k}}\left(\nabla_{\underline{\rho}_{k}}\left(1_{\omega^{k}}\cdot\check{Q}^{k}\left(\bar{\rho}^{k}\right)\right)\right)^{\mathrm{T}} & = & 1_{\omega^{k}}\cdot\hat{w}_{k}\cdot\left(\nabla_{\underline{\rho}_{k}}\Phi^{k}\left(\bar{\rho}^{k}\right)\right)^{\mathrm{T}}\cdot\left(\mathcal{V}\left(\hat{\underline{X}}^{k}\right)-\mathcal{V}\left(\bar{X}^{k}\right)\right)\nonumber \\
 & \stackrel{p}{\to} & 1_{\omega^{k}}\cdot\hat{w}_{k}\cdot\left(\nabla_{\underline{\rho}_{k}}\Phi^{k}\left(\bar{\rho}^{k}\right)\right)^{\mathrm{T}}\cdot\left(\mathcal{V}\left(\hat{X}^{k}\right)-\mathcal{V}\left(\bar{X}^{k}\right)\right)\nonumber \\
 & \stackrel{d}{\to} & \mathcal{N}\left(\mathbf{0},\left(\bar{w}_{k}\right)^{2}\left(\nabla_{\underline{\rho}_{k}}\Phi^{k}\left(\bar{\rho}^{k}\right)\right)^{\mathrm{T}}\Sigma_{X}^{k}\left(\nabla_{\underline{\rho}_{k}}\Phi^{k}\left(\bar{\rho}^{k}\right)\right)\right)
\end{eqnarray}
Furthermore, the Hessian matrices of $1_{\omega^{k}}\cdot\check{Q}^{k}\left(\underline{\rho}^{k}\right)$
and $\bar{Q}^{k}\left(\underline{\rho}^{k}\right)$ satisfy $\nabla_{\underline{\rho}^{k}\underline{\rho}^{k}}\left(1_{\omega^{k}}\cdot\check{Q}^{k}\left(\underline{\rho}^{k}\right)\right)\stackrel{p}{\to}\nabla_{\underline{\rho}^{k}\underline{\rho}^{k}}\bar{Q}^{k}\left(\underline{\rho}^{k}\right)$
for any $\underline{\rho}^{k}$. Using the mean value theorem and
Theorem 3.1 in \cite{newey1994large}, we can conclude that $\sqrt{M_{k}}\left(\check{\underline{\rho}}^{k}-\bar{\rho}^{k}\right)\stackrel{d}{\to}\mathcal{N}\left(\mathbf{0},\Sigma_{\rho}^{k}\right)$
and
\begin{eqnarray}
\sqrt{M_{k}}\left(\check{V}^{k}-\bar{V}^{k}\right) & \stackrel{p}{\to} & \sqrt{M_{k}}\nabla_{\underline{\rho}^{k}}V_{Z}^{k}\left(a\check{\underline{\rho}}^{k}+\left(1-a\right)\bar{\rho}^{k}\right)\left(\check{\underline{\rho}}^{k}-\bar{\rho}^{k}\right)\nonumber \\
 & \stackrel{d}{\to} & \mathcal{N}\left(\mathbf{0},\Sigma_{V}^{k}\right)
\end{eqnarray}
where $a\in\left[0,1\right]$ is a function of $\check{\underline{\rho}}^{k}$.

\section{Proof of Theorem \ref{thm:convergence-approximate-mle}\label{sec:Proof-of-Theorem-convergence-approximate-mle}}

Since the value of $\check{V}$ will not be affected if we replace
$\underline{C}^{k}$ with $\hat{w}^{k}\hat{\underline{X}}^{k}=\underline{C}^{k}/M$,
$\check{V}$ can be expressed as
\begin{eqnarray}
\check{V} & = & \left(\sum_{k=1}^{K}A^{k\mathrm{T}}\Xi^{k}\left(\hat{w}^{k}\hat{\underline{X}}^{k},\check{\rho}^{k}\right)A^{k}\right)^{+}\nonumber \\
 &  & \cdot\left(\sum_{k=1}^{K}A^{k\mathrm{T}}\Xi^{k}\left(\hat{w}^{k}\hat{\underline{X}}^{k},\check{\rho}^{k}\right)\left(\check{V}^{k}-A^{k}U^{k}\right)\right)\label{eq:Vcheck-new-expression}
\end{eqnarray}
and $\check{V}=\arg\max_{\mathbf{1}^{\mathrm{T}}V=0}Q_{q}\left(V;\left\{ \check{V}^{k}\right\} \right)$
with
\begin{eqnarray}
Q_{q}\left(V;\left\{ \check{V}^{k}\right\} \right) & = & \sum_{k=1}^{K}\frac{1}{2}\left(A^{k}\left(V+U^{k}\right)-\check{V}^{k}\right)^{\mathrm{T}}\Xi^{k}\left(\hat{w}^{k}\hat{\underline{X}}^{k},\check{\rho}^{k}\right)\nonumber \\
 &  & \cdot\left(A^{k}\left(V+U^{k}\right)-\check{V}^{k}\right)
\end{eqnarray}

\paragraph*{Part (1)}

We first prove that $\bar{V}$ is the unique maximum point of $Q_{q}(V;\{\bar{V}^{k}\})$
under the constraint $\mathbf{1}^{\mathrm{T}}V=0$. It is clear that
$\bar{V}$ is a maximum point since $Q_{q}(\bar{V};\{\bar{V}^{k}\})=0$
and $Q_{q}(V;\{\bar{V}^{k}\})\le0$. We now show the uniqueness of
$\bar{V}$ by contradiction. Suppose that $V'$ is another maximum
point which satisfies $Q_{q}(V';\{\bar{V}^{k}\})=0$ and $V'\neq\bar{V}$.
Then $\left[\begin{array}{cc}
\mathbf{I} & \mathbf{0}\end{array}\right]\left(A^{k}\left(V'+U^{k}\right)-\bar{V}^{k}\right)=\mathbf{0}$ for any $k$ because $\left[\begin{array}{cc}
\mathbf{I} & \mathbf{0}\end{array}\right]\cdot\Xi^{k}\left(\hat{w}^{k}\hat{\underline{X}}^{k},\check{\rho}^{k}\right)\cdot\left[\begin{array}{cc}
\mathbf{I} & \mathbf{0}\end{array}\right]^{\mathrm{T}}<0$, which implies that the first $\left|\mathcal{S}^{k}\right|-1$ elements
of $A^{k}\left(V'+U^{k}\right)-\bar{V}^{k}$ are zero. Further, considering
that $\mathbf{1}^{\mathrm{T}}A^{k}\left(V'+U^{k}\right)=\mathbf{1}^{\mathrm{T}}\bar{V}^{k}=0$,
we can conclude that $A^{k}\left(V'+U^{k}\right)=\bar{V}^{k}$ for
each $k$. Therefore probability distribution $\pi'=\left[{\pi'}_{i}\right]$
with ${\pi'}_{i}\propto\exp\left(-{V'}_{i}\right)$ satisfies \eqref{eq:Qh-smallerthan-Qhb},
which contradicts Assumption \ref{ass:pi-property}. Thus $\bar{V}$
is the unique solution of $\arg\max_{\mathbf{1}^{\mathrm{T}}V=0}Q_{q}(V;\{\bar{V}^{k}\})$.

\paragraph*{Part (2)}

In this part we will prove that $\mathrm{null}\left(\sum_{k=1}^{K}A^{k\mathrm{T}}\Xi^{k}\left(\hat{w}^{k}\hat{\underline{X}}^{k},\check{\rho}^{k}\right)A^{k}\right)=\mathrm{span}\left(\mathbf{1}\right)$
for any $\hat{w}^{k},\hat{X}^{k},\check{\rho}^{k}$. It is clear that
$\mathrm{span}\left(\mathbf{1}\right)\subseteq\mathrm{null}\left(\sum_{k=1}^{K}A^{k\mathrm{T}}\Xi^{k}\left(\hat{w}^{k}\hat{\underline{X}}^{k},\check{\rho}^{k}\right)A^{k}\right)$
since $A^{k}\mathbf{1}=\mathbf{0}$. Suppose that $v\notin\mathrm{span}\left(\mathbf{1}\right)$
is another vector which belongs to $\mathbf{1}^{\mathrm{T}}v=0$,
then $\bar{V}+v\neq\bar{V}$ is also a maximum point of $\arg\max_{\mathbf{1}^{\mathrm{T}}V=0}Q_{q}\left(V;\left\{ \bar{V}^{k}\right\} \right)$.
This is a contradiction to the result of Part (1). Therefore $\mathrm{null}\left(\sum_{k=1}^{K}A^{k\mathrm{T}}\Xi^{k}\left(\hat{w}^{k}\hat{\underline{X}}^{k},\check{\rho}^{k}\right)A^{k}\right)=\mathrm{span}\left(\mathbf{1}\right)$.

\paragraph*{Part (3)}

According to the result of Part (2) and Theorem 5.2 in \cite{stewart1969continuity},
we have
\begin{equation}
\left(\sum_{k=1}^{K}A^{k\mathrm{T}}\Xi^{k}\left(\hat{w}^{k}\hat{\underline{X}}^{k},\check{\rho}^{k}\right)A^{k}\right)^{+}\stackrel{p}{\to}\left(\sum_{k=1}^{K}A^{k\mathrm{T}}\Xi^{k}\left(\bar{w}^{k}\bar{X}^{k},\bar{\rho}^{k}\right)A^{k}\right)^{+}\label{eq:pinv-convergence}
\end{equation}
Substituting \eqref{eq:Vcheck-new-expression} and \eqref{eq:pinv-convergence}
into $\check{V}-\bar{V}$ and $\sqrt{M}\left(\check{V}-\bar{V}\right)$
leads to the conclusions of the theorem.

\section{Proof of Theorem \ref{thm:SigmaX-kappa}\label{sec:Proof-of-Theorem-SigmaX-kappa}}

Let $G\left(\cdot\right)$ be a function of $y_{t}^{k}$ with
\begin{equation}
G\left(y_{t}^{k}\right)=\left[G_{ij}\left(y_{t}^{k}\right)\right]=\left[1_{I_{\mathcal{S}^{k}}\left(f^{k}\left(y_{t}^{k}\right)\right)=i}\Pr\left(I_{\mathcal{S}^{k}}\left(x_{t+1}^{k}\right)=j|y_{t}^{k}\right)\right]
\end{equation}
and $\kappa_{G}^{k}\left(h\right)=\cov\left(\mathcal{V}\left(G\left(y_{t}^{k}\right)\right),\mathcal{V}\left(G\left(y_{t+h}^{k}\right)\right)\right)$
be the $h$ lag autocovariance of $\left\{ \mathcal{V}\left(G\left(y_{t}^{k}\right)\right)\right\} $.
It is easy to verify that $\kappa_{G}^{k}\left(h\right)$ and $\kappa^{k}\left(h+1\right)$
composed of the same elements but in different arrangements for $h\ge0$,
and $\eta^{k}\left(l\right)=\mathrm{tr}\left(\kappa_{G}^{k}\left(2l\right)+\kappa_{G}^{k}\left(2l+1\right)\right)$.

From the above results, we can conclude that $\eta^{k}\left(l\right)=\mathrm{tr}\left(\kappa_{G}^{k}\left(2l\right)+\kappa_{G}^{k}\left(2l+1\right)\right)$
is a non-negative and decreasing function of $l$ for $l\ge0$, and
$\kappa_{G}^{k}\left(2l\right)+\kappa_{G}^{k}\left(2l+1\right)\ge0$
by using Theorem 3.1 in \cite{geyer1992practical}. Therefore the
second conclusion of the theorem can be proved since $\mathrm{tr}\left(\kappa_{G}^{k}\left(2l\right)+\kappa_{G}^{k}\left(2l+1\right)\right)\ge\left\Vert \kappa_{G}^{k}\left(2l\right)+\kappa_{G}^{k}\left(2l+1\right)\right\Vert _{\max}$.

We now show the first conclusion. According to Theorem 2.1 in \cite{geyer1992practical}
and considering that $\sum_{h=0}^{\infty}\kappa_{G}^{k}\left(h\right)$
is convergent, we have
\begin{equation}
\lim_{N\to\infty}N\var\left(\frac{1}{N}\sum_{t=0}^{N-1}\mathcal{V}\left(G\left(y_{t}^{k}\right)\right)\right)=\kappa_{G}^{k}\left(0\right)+2\sum_{h=1}^{\infty}\kappa_{G}^{k}\left(h\right)
\end{equation}
On the other hand,
\begin{equation}
N\var\left(\frac{1}{N}\sum_{t=0}^{N-1}\mathcal{V}\left(G\left(y_{t}^{k}\right)\right)\right)=\kappa_{G}^{k}\left(0\right)+2\sum_{h=1}^{N-1}\frac{N-h}{N}\kappa_{G}^{k}\left(h\right)
\end{equation}
Combining the above two equations leads to
\begin{equation}
\lim_{N\to\infty}\sum_{h=1}^{N-1}\frac{h}{N}\kappa_{G}^{k}\left(h\right)=\lim_{N\to\infty}\sum_{h=1}^{N-1}\frac{h-1}{N}\kappa^{k}\left(h\right)=\mathbf{0}\mathbf{0}^{\mathrm{T}}
\end{equation}
Therefore

\begin{eqnarray}
N\var\left(\frac{1}{N}\sum_{t=1}^{N}\mathcal{V}\left(\Delta C_{t}^{k}\right)\right) & = & \kappa^{k}\left(0\right)+\sum_{h=1}^{N-1}\frac{N-h}{N}\left(\kappa^{k}\left(h\right)+\kappa^{k}\left(h\right)^{\mathrm{T}}\right)\nonumber \\
 & \to & \kappa^{k}\left(0\right)+\sum_{h=1}^{\infty}\left(\kappa^{k}\left(h\right)+\kappa^{k}\left(h\right)^{\mathrm{T}}\right)
\end{eqnarray}
as $N\to\infty$, and $\Sigma_{X}^{k}=\kappa^{k}\left(0\right)+\sum_{l=0}^{\infty}\left(\Gamma^{k}\left(l\right)+\Gamma^{k}\left(l\right)^{\mathrm{T}}\right)$.

\section{Metropolis simulation model \label{sec:Metropolis-sampling-model}}

The Metropolis simulation model generates the $x_{0:M_{k}}^{k}$ by
the following steps if the reference $V$ and biasing potential $U^{k}$
are given:
\begin{description}
\item [{Step 1.}] Let $t:=0$ and sample $x_{0}^{k}\sim\Pr\left(x_{0}^{k}=s_{i}\right)\propto\exp\left(-U_{i}^{k}\right)$.
\item [{Step 2.}] Sample $x'=s_{j}$ from the proposal distribution $\Pr\left(x'=s_{j}|x_{t}^{k}=s_{i}\right)=q_{ij}\propto1_{\left|i-j\right|\le2}$
and calculate the acceptance ratio $a=\min\left\{ \frac{\exp\left(-V_{j}-U_{j}^{k}\right)q_{ij}}{\exp\left(-V_{i}-U_{i}^{k}\right)q_{ji}},1\right\} $
with $x_{t}^{k}=s_{i}$.
\item [{Step 3.}] Set $x_{t+1}^{k}$ to be $x'$ with probability $a$
and $x_{t}^{k}$ with probability $1-a$.
\item [{Step 4.}] Terminate the simulation if $t=M_{k}$. Otherwise, let
$t:=t+1$ and go to Step 2.
\end{description}
\bibliographystyle{siam}
\bibliography{BiasedMLEstimation}

\begin{thebibliography}{10}

\bibitem{BarducciBussiParrinello_PRL08_WellTemperedMetadynamics}
{\sc A.~Barducci, G.~Bussi, and M.~Parrinello}, {\em Well-tempered
  metadynamics: A smoothly converging and tunable free-energy method}, Phys.
  Rev. Lett., 100 (2008), p.~020603.

\bibitem{BeauchampEtAl_MSMbuilder2}
{\sc K.~A. Beauchamp, G.~R. Bowman, T.~J. Lane, L.~Maibaum, I.~S. Haque, and
  V.~S. Pande}, {\em {MSMBuilder2: Modeling Conformational Dynamics at the
  Picosecond to Millisecond Scale.}}, J. Chem. Theory Comput., 7 (2011),
  pp.~3412--3419.

\bibitem{BinderYoung_RMP86_SpinGlasses}
{\sc K.~Binder and A.~P. Young}, {\em Spin glasses: Experimental facts,
  theoretical concepts, and open questions}, Rev. Mod. Phys., 58 (1986),
  pp.~801--976.

\bibitem{ChoderaEtAl_JCP07}
{\sc J.~D. Chodera, K.~A. Dill, N.~Singhal, V.~S. Pande, W.~C. Swope, and J.~W.
  Pitera}, {\em {Automatic discovery of metastable states for the construction
  of Markov models of macromolecular conformational dynamics}}, J. Chem. Phys.,
  126 (2007), p.~155101.

\bibitem{FerrenbergSwendsen_PRL89_WHAM}
{\sc A.~M. Ferrenberg and R.~H. Swendsen}, {\em {Optimized Monte Carlo data
  analysis}}, Phys. Rev. Lett., 63 (1989), pp.~1195--1198.

\bibitem{Frauenfelder_Science254_1598}
{\sc H.~Frauenfelder, S.~G. Sligar, and P.~G. Wolynes}, {\em {The energy
  landscapes and motions of proteins.}}, Science, 254 (1991), pp.~1598--1603.

\bibitem{geyer1992practical}
{\sc C.~Geyer}, {\em {Practical Markov chain Monte Carlo}}, Stat. Sci., 7
  (1992), pp.~473--483.

\bibitem{Grubmueller_PhysRevE52_2893}
{\sc H.~Grubm\"{u}ller}, {\em {Predicting slow structural transitions in
  macromolecular systems: conformational flooding}}, Phys. Rev. E, 52 (1995),
  p.~2893.

\bibitem{haggstrom2005central}
{\sc O.~H{\"a}ggstr{\"o}m}, {\em On the central limit theorem for geometrically
  ergodic {M}arkov chains}, Probab. Theor. Relat. Field., 132 (2005),
  pp.~74--82.

\bibitem{jones2004markov}
{\sc G.~Jones}, {\em {On the Markov chain central limit theorem}}, Probab.
  Surv., 1 (2004), pp.~299--320.

\bibitem{kullback1997information}
{\sc S.~Kullback}, {\em Information theory and statistics}, Dover Pubns, 1997.

\bibitem{LaioParrinello_PNAS99_12562}
{\sc A.~Laio and M.~Parrinello}, {\em {Escaping free energy minima}}, Proc
  Natl. Acad. Sci. USA, 99 (2002), p.~12562.

\bibitem{MarxHutter_NICseries00_AbInitioMD}
{\sc D.~Marx and J.~Hutter}, {\em Ab initio molecular dynamics: Theory and
  implementation}, in Modern Methods and Algorithms of Quantum Chemistry, J.~G.
  (Ed.), ed., vol.~1 of NIC Series, John von Neumann Institute for Computing,
  J{\"u}lich, 2000, pp.~301--449.

\bibitem{newey1994large}
{\sc W.~Newey and D.~McFadden}, {\em Large sample estimation and hypothesis
  testing}, Handbook of econometrics, 4 (1994), pp.~2111--2245.

\bibitem{NoeHorenkeSchutteSmith_JCP07_Metastability}
{\sc F.~No\'{e}, I.~Horenko, C.~Sch\"{u}tte, and J.~C. Smith}, {\em
  {Hierarchical analysis of conformational dynamics in biomolecules: Transition
  networks of metastable states}}, J. Chem. Phys., 126 (2007), p.~155102.

\bibitem{PianaLaio_JPCB07_MetadynamicsProteinFolding}
{\sc S.~Piana and A.~Laio}, {\em A bias-exchange approach to protein folding},
  J. Phys. Chem. B, 111 (2007), pp.~4553--4559.

\bibitem{Prinz2011efficient}
{\sc J.-H. Prinz, M.~Held, J.~C. Smith, and F.~No\'e}, {\em Efficient
  computation, sensitivity, and error analysis of committor probabilities for
  complex dynamical processes}, SIAM J. Multiscale Model. Simul., 9 (2011),
  pp.~545--567.

\bibitem{prinz2011markov}
{\sc J.-H. Prinz, H.~Wu, M.~Sarich, B.~Keller, M.~Senne, M.~Held, J.~Chodera,
  C.~Sch{\"u}tte, and F.~No{\'e}}, {\em Markov models of molecular kinetics:
  Generation and validation}, J. Chem. Phys., 134 (2011), p.~174105.

\bibitem{rached2004kullback}
{\sc Z.~Rached, F.~Alajaji, and L.~Campbell}, {\em {The Kullback-Leibler
  divergence rate between Markov sources}}, IEEE Transactions on Information
  Theory, 50 (2004), pp.~917--921.

\bibitem{roussas2005introduction}
{\sc G.~Roussas}, {\em An introduction to measure-theoretic probability},
  Elsevier Academic Press, San Diego, 2005.

\bibitem{sakuraba2009multiple}
{\sc S.~Sakuraba and A.~Kitao}, {\em Multiple markov transition matrix method:
  Obtaining the stationary probability distribution from multiple simulations},
  J. Comput. Chem., 30 (2009), pp.~1850--1858.

\bibitem{SherringtonKirkpatrick_PRL75_SpinGlasses}
{\sc D.~Sherrington and S.~Kirkpatrick}, {\em Solvable model of a spin-glass},
  Phys. Rev. Lett., 35 (1975), pp.~1792--1796.

\bibitem{souaille2001extension}
{\sc M.~Souaille and B.~Roux}, {\em {Extension to the weighted histogram
  analysis method: Combining umbrella sampling with free energy calculations}},
  Comput. Phys. Comm., 135 (2001), pp.~40--57.

\bibitem{stewart1969continuity}
{\sc G.~Stewart}, {\em On the continuity of the generalized inverse}, SIAM J.
  Appl. Math., 17 (1969), pp.~33--45.

\bibitem{SwopePiteraSuits_JPCB108_6571}
{\sc W.~C. Swope, J.~W. Pitera, and F.~Suits}, {\em {Describing protein folding
  kinetics by molecular dynamics simulations: 1. Theory}}, J. Phys. Chem. B,
  108 (2004), pp.~6571--6581.

\bibitem{Torrie_JCompPhys23_187}
{\sc G.~M. Torrie and J.~P. Valleau}, {\em {Nonphysical Sampling Distributions
  in Monte Carlo Free-Energy Estimation: Umbrella Sampling}}, J. Comp. Phys.,
  23 (1977), pp.~187--199.

\bibitem{yuan2009conjugate}
{\sc G.~Yuan, X.~Lu, and Z.~Wei}, {\em A conjugate gradient method with descent
  direction for unconstrained optimization}, J. Comput. Appl. Math., 233
  (2009), pp.~519--530.

\end{thebibliography}

\end{document}